\newcommand{\newparentheses}[3]{%
  \expandafter\newcommand\csname #1\endcsname[1]{#2##1#3}%
  \expandafter\newcommand\csname #1L\endcsname[1]{\bigl#2##1\bigr#3}%
  \expandafter\newcommand\csname #1XL\endcsname[1]{\Bigl#2##1\Bigr#3}%
  \expandafter\newcommand\csname #1V\endcsname[1]{\left#2##1\right#3}}
\newcommand{\onenewattribute}[3]{%
  \@ifundefined{#1}{\let\@@def\newcommand}{\let\@@def\renewcommand}%
  \expandafter\@@def\csname #1\endcsname[2][]{%
    \ifthenelse{\equal{##1}{}}%
    {#2\csname #3\endcsname{##2}}%
    {#2_{##1}\csname #3\endcsname{##2}}}}
\newcommand{\newattribute}[2]{%
  \onenewattribute{#1}{#2}{parens}%
  \onenewattribute{#1L}{#2}{parensL}%
  \onenewattribute{#1XL}{#2}{parensXL}%
  \onenewattribute{#1V}{#2}{parensV}}
\newcommand{\one}{\mathrm{ONE}}
\newcommand{\edge}[1]{e_{#1}}
\newcommand{\eedge}[2]{#1\rhd\!#2}
\newcommand{\weight}{\omega}
\newcommand{\reach}[1]{\sim_{#1}}
\newcommand{\noreach}[1]{\not\sim_{#1}}
\newcommand{\quartet}[2]{#1|#2}
\newcommand{\subtree}[2][]{%
  \ifthenelse{\equal{#1}{}}%
  {T(#2)}%
  {#1(#2)}}
\newcommand{\induced}[2][]{%
  \ifthenelse{\equal{#1}{}}%
  {T|#2}%
  {#1|#2}}
\newcommand{\cnfletwo}{\text{CNF}_{+} (\le 2)}
\newcommand{\notarxiv}[1]{}
\newcommand{\eat}[1]{}
\declaretheorem[numberwithin=section, name=Theorem]{theorem}
\declaretheorem[sibling=theorem, name=Lemma]{lemma}
\declaretheorem[sibling=theorem, name=Corollary]{corollary}
\declaretheorem[sibling=theorem, name=Observation]{obs}
\declaretheorem[sibling=theorem, name=Conjecture]{conjecture}
\declaretheorem[sibling=theorem, name=Property]{property}
\newcommand{\beginsupplement}{%
        \setcounter{table}{0}
        \renewcommand{\thetable}{S\arabic{table}}%
        \setcounter{figure}{0}
        \renewcommand{\thefigure}{S\arabic{figure}}%
     }
\begin{document}

\title{Calculating the Unrooted Subtree Prune-and-Regraft Distance}
\author{Chris Whidden and Frederick A. Matsen IV
\IEEEcompsocitemizethanks{%
	\IEEEcompsocthanksitem C. Whidden and F. Matsen are with the Program in Computational Biology, Fred Hutchinson Cancer Research Center, Seattle, WA, 98109. E-mail: \{cwhidden,matsen\}@fredhutch.org}
}




\IEEEtitleabstractindextext{%
\begin{abstract}%
The subtree prune-and-regraft (SPR) distance metric is a fundamental way of comparing evolutionary trees.
It has wide-ranging applications, such as to study lateral genetic transfer, viral recombination, and Markov chain Monte Carlo phylogenetic inference.
Although the rooted version of SPR distance can be computed relatively efficiently between rooted trees using fixed-parameter-tractable maximum agreement forest (MAF) algorithms, no MAF formulation is known for the unrooted case.
Correspondingly, previous algorithms are unable to compute unrooted SPR distances larger than 7.

In this paper, we substantially advance understanding of and computational algorithms for the unrooted SPR distance.
First we identify four properties of optimal SPR paths, each of which suggests that no MAF formulation exists in the unrooted case.
Then we introduce the replug distance, a new lower bound on the unrooted SPR distance that is amenable to MAF methods, and give an efficient fixed-parameter algorithm for calculating it.
Finally, we develop a ``progressive A*'' search algorithm using multiple heuristics, including the TBR and replug distances, to exactly compute the unrooted SPR distance.
Our algorithm is nearly two orders of magnitude faster than previous methods on small trees, and allows computation of unrooted SPR distances as large as 14 on trees with 50 leaves.
\end{abstract}

\begin{IEEEkeywords}
fixed-parameter tractability, phylogenetics, subtree prune-and-regraft distance, lateral gene transfer, agreement forest.
\end{IEEEkeywords}}

\maketitle

\emergencystretch=1em

\IEEEraisesectionheading{\section{Introduction}
\label{sec:intro}}
\IEEEPARstart{M}{olecular} phylogenetic methods reconstruct evolutionary trees (a.k.a\ phylogenies) from DNA or RNA data and are of fundamental importance to modern biology~\cite{hillis96}.
Phylogenetic inference has numerous applications including investigating organismal relationships (the "tree of life"~\cite{koonin2015turbulent}), reconstructing virus evolution away from innate and adaptive immune defenses~\cite{castro2012evolution}, analyzing the immune system response to HIV~\cite{haynes2012b}, designing genetically-informed conservation measures~\cite{helmus2007phylogenetic}, and investigating the human microbiome~\cite{matsen2015phylogenetics}.
Although the molecular evolution assumptions may differ for these different settings, the core algorithmic challenges remain the same: reconstruct a tree graph representing the evolutionary history of a collection of evolving units, which are abstracted as a collection of \emph{taxa}, where each \emph{taxon} is associated with a DNA, RNA, or amino acid sequence.

Phylogenetic study often requires an efficient means of comparing phylogenies in a meaningful way.
For example, different inference methods may construct different phylogenies and it is necessary to determine to what extent they differ and, perhaps more importantly, which specific features differ between the trees.
In addition, the evolutionary history of individual genes does not necessarily follow the overall history of a species due to \emph{reticulate} evolutionary processes: lateral genetic transfer, recombination, hybridization, and incomplete lineage sorting~\cite{galtier2008dealing}.
Such processes impact phylogenies by moving a subtree from one location to another, as described below.
Thus comparing inferred histories of genes to each other, a reference tree, or a proposed species tree may be used to identify reticulate events~\cite{beiko2005highways,whidden2014supertrees}.
Moreover, distance measures between phylogenies provide optimization criteria that can be used to infer summary measures such as supertrees~\cite{pisani2007supertrees,Steel2008-pn,bansal2010robinson,whidden2014supertrees}.

Numerous distance measures have been proposed for comparing phylogenies.
The Robinson-Foulds distance~\cite{robinson81} is perhaps the most well known and can be calculated in linear time~\cite{day85}.
However, the Robinson-Foulds distance has no meaningful biological interpretation or relationship to reticulate evolution.
Typically, distance metrics are either easy to compute but share this lack of biological relation, such as the quartet distance~\cite{brodal2004computing} and geodesic distance~\cite{owen2011fast}, or are difficult to compute, such as the hybridization number~\cite{baroni05} and maximum parsimony distance~\cite{Bruen2008-wx,kelk2014complexity,moulton2015parsimony}.

The subtree prune-and-regraft (SPR) distance is widely used due to its biological interpretability despite being difficult to compute~\cite{baroni05,beiko2006phylogenetic}.
SPR distance is the minimum number of subtree moves required to transform one tree into the other (Figure~\ref{fig:spr}).
It provides a lower bound on the number of reticulation events required to reconcile two phylogenies.
As such, it has been used to model reticulate evolution~\cite{maddison97,nakhleh05}.
In addition, the SPR distance is a natural measure of distance when analyzing phylogenetic inference methods which typically apply SPR operations to find maximum likelihood trees~\cite{Price2010-fi,Stamatakis2006-yz} or estimate Bayesian posterior distributions with SPR-based Metropolis-Hastings random walks~\cite{Ronquist2012-hi,bouckaert2014beast}.
Similar trees can be easily identified using the SPR distance, as random pairs of $n$-leaf trees differ by by an expected $n - \Theta(n^{2/3})$ SPR moves~\cite{atkins2015extremal}.
This difference approaches the maximum SPR distance of $n - 3 - \floorV{(\sqrt{n-2} -1)/2}$ asymptotically \cite{Ding2011-bj}.
The topology-based SPR distance is especially appropriate in this context as topology changes have been identified as the main limiting factor of such methods~\cite{lakner2008efficiency,hohna2012guided,whidden2015quantifying}.
Moreover, the SPR distance has close connections to network models of evolution~\cite{baroni05,bordewich07,nakhleh05}.

Although it has these advantages, the SPR distance between both rooted and unrooted trees is NP-hard to compute~\cite{bordewich2005computational,hickey2008spr}, limiting its utility.
We recall that \emph{rooted} trees represent the usual view of evolution, such that the taxa under consideration evolve from a common ancestor in a known direction.
\emph{Unrooted} trees drop this implied directionality, and are typically drawn as a graph theoretic tree such that every non-leaf node has degree three (Figure~\ref{fig:x-tree}).
As described below, most phylogenetic algorithms reconstruct unrooted trees.

Despite the NP-hardness of computing the SPR distance between rooted phylogenies, recent algorithms can rapidly compare rooted trees with hundreds of leaves and SPR distances of 50 or more in fractions of a second~\cite{whidden2010fast,whidden2014supertrees}.
This has enabled use of the SPR distance for inferring phylogenetic supertrees and lateral genetic transfer~\cite{whidden2014supertrees}, comparing influenza phylogenies to assess reassortment~\cite{dudas2014reassortment}, and investigating mixing of Bayesian phylogenetic posteriors~\cite{whidden2015quantifying,whidden2015ricci}.
However, most phylogenetic inference packages today use reversible mutation models to infer unrooted trees, motivating SPR calculation for unrooted trees.

SPR distances can be computed efficiently in practice for rooted trees for two key reasons.
First, they can be computed using a maximum agreement forest (MAF) of the trees~\cite{hein96,allen01}.
An MAF is a forest (i.e.\ collection of trees) that can be obtained from both trees by removing a minimum set of edges.
Each removed edge corresponds to one SPR operation, and a set of SPR operations transforming one tree into the other can be easily recovered given an MAF.
Due to this MAF framework, the development of efficient fixed-parameter and approximation algorithms for SPR distances between rooted trees has become an area of active research~\cite{beiko2006phylogenetic,wu2009practical,bonet2009efficiently,whidden2013fixed,shi2014improved,chen2013faster} (see ~\cite{whidden2013fixed} for a more complete history), including recent extensions to nonbinary trees~\cite{whidden2015multifurcating,chen2015parameterized}, as well as generalized MAFs of multiple trees~\cite{shi2014approximation}.
The utility of MAFs motivates defining a variant definition in the unrooted case with analogous properties for unrooted SPR, however, no alternative MAF formulation has yet been developed.
A straightforward extension of MAFs to unrooted trees is equivalent to a different metric, the TBR distance~\cite{allen01}.
Although TBR rearrangements are used in some phylogenetic inference methods, SPR rearrangements are much more common~\cite{lakner2008efficiency} and the TBR distance does not have the other benefits of the SPR distance.

The second class of optimizations used by efficient rooted SPR algorithms are preprocessing reduction rules including the subtree reduction rule~\cite{allen01}, chain reduction rule~\cite{allen01}, and cluster reduction rule~\cite{linz2011cluster}.
The subtree reduction rule also applies to unrooted trees~\cite{allen01} and we recently showed that the chain reduction rule is applicable to the unrooted case~\cite{whidden2016chain}, thereby obtaining a linear-size problem kernel for unrooted SPR.
However, minimum-length uSPR paths have been shown to break common clusters~\cite{hickey2008spr}, so the cluster reduction, which partitions the trees into smaller independently solvable subproblems, is not applicable.

For all of these reasons, the best previous algorithm for computing the SPR distance between unrooted trees, due to Hickey et al.~\cite{hickey2008spr}, cannot compute distances larger than 7 or reliably compare trees with more than 30 leaves.
In this paper, we substantially advance understanding of and computational algorithms for the unrooted SPR distance.
Building on previous work by Hickey et al.\ \cite{hickey2008spr}, Bonet and St. John \cite{bonet2010complexity}, and our recently introduced ``socket agreement forest (SAF)'' framework~\cite{whidden2016chain}, we make the following contributions:

\begin{enumerate}
	\item we identify new properties of minimum-length SPR paths showing that an MAF-like formulation is unlikely to~exist,
	\item we develop a practical algorithm for enumerating maximal unrooted AFs,
	\item we define a new \emph{replug} distance, which does admit a MAF-like formulation and gives a lower bound on the uSPR distance; we develop an exact fixed-parameter bounded search tree algorithm for its calculation, and
	\item we propose and implement a new incremental heuristic search algorithm called \emph{progressive A*} that leverages multiple increasingly expensive to compute but more accurate lower bound estimators to compute the uSPR distance in practice for trees with up to 50 leaves and distances as large as 14.
\end{enumerate}

Proofs of our lemmas and theorems are in the appendix.

\section{Preliminaries}
\label{sec:prelim}

\begin{figure}[t]
	\subfigure[\label{fig:x-tree}]{\includegraphics[scale=1]{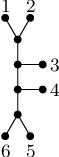}}
	\hspace*{\stretch{1}}
	\subfigure[\label{fig:subtree}]{\includegraphics[scale=1]{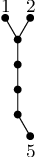}}
	\hspace*{\stretch{1}}
	\subfigure[\label{fig:induced}]{\includegraphics[scale=1]{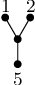}}
	\hspace*{\stretch{2}}
	\subfigure[\label{fig:spr}]{\includegraphics[scale=1]{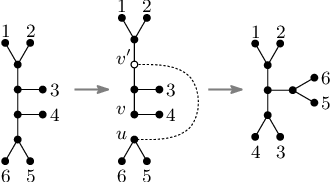}}

	\caption{(a) An unrooted $X$-tree $T$.
		(b) $T(V)$, where $V = \set{1,2,5}$.
		(c) $T|V$.
		(d) An SPR operation transforms $T$ into a new tree by \emph{pruning a subtree} and \emph{regrafting} it in an other location.
	}
	\label{fig:trees}
\end{figure}

Nodes (i.e. vertices) of a tree graph with one neighbor are called \emph{leaves} and nodes with three neighbors are called \emph{internal nodes}.
An (unrooted binary phylogenetic) \emph{$X$-tree} is a tree $T$ whose nodes each have one or three neighbors, and such that the leaves of $T$ are bijectively labeled with the members of a label set $X$.
$T(V)$ is the unique subtree of $T$ with the fewest nodes that connects all nodes in $V$.
\emph{Suppressing} a node $v$ of degree 1 or 2 deletes $v$ and its incident edges; if $v$ is of degree 2 with neighbors $u$ and $w$, $u$ and $w$ are reconnected using a new edge $(u,w)$.
The $V$-tree \emph{induced} by $T$ is the unique smallest tree $T|V$ that can be obtained from $T(V)$ by suppressing unlabeled nodes with fewer than three neighbors.
See Figure~\ref{fig:trees}.

A \emph{rooted $X$-tree} is defined similarly to an unrooted $X$-tree, with the exception that one of the internal nodes is called the \emph{root} and is adjacent to a leaf labeled $\rho$.
Note that this differs from the standard definition of a rooted tree in which the root is a degree two internal node.
This $\rho$ node represents the position of the original root in a forest of the trees, as we describe below, and can simply be attached to such a degree two internal node.
The parent of a node in an rooted tree is its closest neighbor to the root; the other two neighbors of an internal node are referred to as children.

An \emph{unrooted $X$-forest} $F$ is a collection of (not necessarily binary) trees $T_1, T_2, \ldots,\allowbreak T_k$ with respective label sets $X_1, X_2, \ldots, X_k$ such that $X_i$ and $X_j$ are disjoint, for all $1 \le i \ne j \le k$, and $X = X_1 \cup X_2 \cup \ldots \cup X_k$.
We say $F$ \emph{yields} the forest with components $T_1|X_1$, $T_2|X_2$, \ldots, $T_k|X_k$, in other words, this forest is the smallest forest that can be obtained from $F$ by suppressing unlabeled nodes with less than three neighbors.
In the rooted case, each component $T_i$ is rooted at the node that was closest to $\rho$.
Note that only the root of $T_1$ is adjacent to leaf $\rho$.
If $T_1, T_2, \ldots, T_k$ are all binary then the remaining roots have degree 2.
For an edge set $E$, $F-E$ denotes the forest obtained by deleting the edges in $E$ from $F$ and $F \div E$ the yielded forest.
We say that $F \div E$ is \emph{a forest of $F$}.
For nodes $a$ and $b$ of $F$, we will say that $a$ can be reached from $b$, or $a \reach{F} b$, when there is a path of edges between $a$ and $b$ in $F$.
The opposite will be denoted $a \noreach{F} b$.

A \emph{subtree-prune-regraft} (uSPR) operation on an unrooted $X$-tree $T$ cuts an edge $e = (u,v)$.
This divides $T$ into subtrees $T_u$ and $T_v$, containing $u$ and $v$, respectively.
Then it introduces a new node $v'$ into $T_v$ by subdividing an edge of $T_v$, and adds an edge $(u,v')$.
Finally, $v$ is suppressed (Figure~\ref{fig:spr}).
We distinguish between SPR operations on rooted trees and unrooted trees (uSPR operations).
SPR operations on rooted trees have the additional requirement that $u \ne \rho$ and that $v$ is $u$'s parent rather than an arbitrary neighbor of $u$.
Note that if the node $v'$ introduced in the rooted tree is adjacent to $\rho$ then $v'$ becomes the root.

A \emph{tree-bisection-reconnection} (TBR) operation on an unrooted tree is defined similarly to a uSPR operation, except that a new node $u'$ is also introduced into $T_u$ bisecting any edge, the added edge is $(u',v')$ rather than $(u,v')$, and both $u$ and $v$ are suppressed.
Note that uSPR operations are a subset of TBR operations, as a TBR operation may reintroduce the same endpoint on one side of the edge.

We often consider a sequence of operations applied to a tree $T_1$ that result in a tree $T_2$.
These operations can be thought of as ``moving'' between trees and are also referred to as \emph{moves} (e.g. an SPR move).
A sequence of moves $M = m_1, m_2, \ldots, m_d$ applied to $T_1$ result in the sequence of trees $T_1 = t_0, t_1, t_2, \ldots, t_d = T_2$.
We call such sequences of trees a \emph{path} (e.g. an SPR path).

When considering how the tree changes throughout such sequences, it is often helpful to consider how nodes and edges of the tree change.
Formally, we construct a mapping $\varphi_{i,j}$ that maps nodes and edges of $t_i$ to $t_j$.
Each mapping $\varphi_{i,i+1}$ between adjacent trees is constructed according to the corresponding move $m_{i+1}$: nodes and edges of $t_i$ that are not modified by $m_{i+1}$ are mapped to the corresponding nodes and edges of $t_{i+1}$.
The deleted edge $(u,v)$ of $t_i$ is mapped to the newly introduced edge of $t_{i+1}$ (e.g. $(u,v')$ for an SPR move).
Deleted nodes are mapped to $\emptyset$.
Forward mappings $\varphi_{i,j}$, $i < j$, are constructed transitively.
Reverse mappings $\varphi_{j,i}$, $i < j$, are constructed analogously by considering the application of moves that construct the reverse sequence $t_d, t_{d-1}, \ldots, t_0$.

We will use these mappings implicitly to talk about how a tree changes throughout a sequence of moves.
With these mappings we can consider SPR and TBR tree moves as changing the endpoints of edges rather than deleting one edge and introducing another.
We say that an edge is \emph{broken} if one of its endpoints is moved by a rearrangement operation.
The relation between SPR and TBR moves can now be summarized by considering that a uSPR operation changes one endpoint of an edge, a rooted tree SPR operation changes the root-most endpoint of an edge, and a TBR operation changes both endpoints.

\begin{figure}[t]
	\hspace*{\stretch{1}}
	\subfigure[\label{fig:three-spr}]{\includegraphics[scale=1]{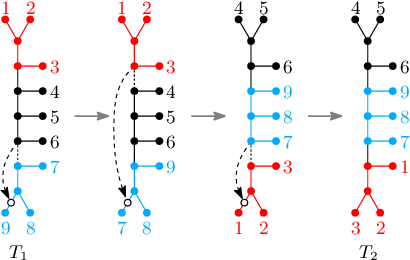}}
	\hspace*{\stretch{2}}
	\subfigure[\label{fig:maf}]{\includegraphics[scale=1]{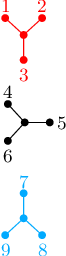}}
	\hspace*{\stretch{1}}

	\caption{(a) Three unrooted SPR operations transform a tree $T_1$ into another tree $T_2$. The dashed line with an arrow is the edge that will be added in the next step; simultaneously the dotted edge touching the tail end of the arrow will be removed.
		(b) A MAF of $T_1$ and $T_2$.
	}
	\label{fig:spr-distance}
\end{figure}

SPR operations give rise to a distance measure $\dspr{\cdot,\cdot}$ between $X$-trees, defined as the minimum number of SPR operations required to transform one tree into the other.
The trees in Figure~\ref{fig:three-spr}, for example, have SPR distance $\dspr{T_1,T_2} = 3$.
The TBR distance $\dtbr{\cdot,\cdot}$ is defined analogously with respect to TBR operations.
A minimum-length path of SPR or TBR moves between two trees is an optimal SPR path or optimal TBR path, respectively.

The second and third SPR operations applied in Figure~\ref{fig:three-spr} move both ends of a single edge and can be replaced by a single TBR operation.
Thus, in this case $\dtbr{T_1, T_2} = 2$.
Note that the fact that uSPR operations are a subset of TBR operations implies that the TBR distance is a lower bound on the uSPR distance~\cite{allen01}:

\begin{lemma}
	For two unrooted trees $T_1$ and $T_2$, $\dtbr{T_1, T_2} \le \dspr{T_1, T_2}$.
\end{lemma}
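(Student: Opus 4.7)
The plan is to exploit the fact, observed just before the lemma, that every uSPR operation is a special case of a TBR operation. Concretely, a TBR move breaks edge $(u,v)$, introduces new nodes $u'$ in $T_u$ and $v'$ in $T_v$, and reconnects via $(u',v')$; if we choose $u' = u$ (i.e.\ the newly introduced endpoint on the $T_u$-side coincides with the original endpoint $u$), the result is exactly a uSPR move on $(u,v)$. So the set of uSPR moves available at any tree $T$ is contained in the set of TBR moves available at $T$.

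Given this, I would take an optimal uSPR path $T_1 = t_0, t_1, \ldots, t_d = T_2$ with $d = \dspr{T_1, T_2}$ and reinterpret each SPR move $m_{i+1}$ from $t_i$ to $t_{i+1}$ as a TBR move (using the degenerate choice $u' = u$ as above). This produces a TBR path from $T_1$ to $T_2$ of the same length $d$. Since $\dtbr{T_1, T_2}$ is defined as the minimum over all TBR paths, we obtain $\dtbr{T_1, T_2} \le d = \dspr{T_1, T_2}$.

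There is no real obstacle here; the only thing to be careful about is unpacking the mapping convention introduced earlier so that ``uSPR as a TBR with $u'=u$'' is rigorous. In the paper's formalism, a TBR move changes both endpoints of the broken edge while a uSPR move changes only one endpoint, and allowing a TBR to leave one endpoint fixed is exactly the inclusion stated in the preliminaries. Once this containment is made explicit, the inequality follows by the one-line argument above.
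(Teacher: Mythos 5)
Your proposal is correct and follows the same route as the paper: the paper justifies the lemma precisely by noting that every uSPR operation is a TBR operation in which the reintroduced endpoint on one side coincides with the original endpoint, so any optimal uSPR path is itself a TBR path of the same length. Your careful unpacking of the $u'=u$ degenerate choice is exactly the containment the paper invokes.
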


Given trees $T_1$ and $T_2$ and forests $F_1$ of $T_1$ and $F_2$ of $T_2$, a forest $F$ is an \emph{agreement forest} (AF) of $F_1$ and $F_2$ if it is a forest of both forests.
$F$ is a \emph{maximum agreement forest} (MAF) if it has the smallest possible number of components.
We denote this number of components by $m(F_1, F_2)$.
For two unrooted trees $T_1$ and $T_2$, Allen and Steel~\cite{allen01} showed that $\dtbr{T_1,T_2} = m(T_1, T_2) - 1$.
Figure~\ref{fig:maf} shows an MAF of the trees in Figure~\ref{fig:three-spr}.

For two \emph{rooted} trees $T_1$ and $T_2$, Bordewich and Semple~\cite{bordewich2005computational} showed $\dspr{T_1,T_2} = m(T_1, T_2) - 1$, by introducing the root node augmentation $\rho$ described above.
Moreover, nodes are only suppressed in rooted trees if they have fewer than two children, such that unlabeled \emph{component roots} (the nodes that were connected to the $\rho$ component by an edge before cutting) are not suppressed.

In contrast to an MAF, a \emph{maximal AF (mAF)} $F^*$ of two trees $T_1$ and $T_2$ is an AF of $T_1$ and $T_2$ that is not a forest of any other AF of $T_1$ and $T_2$.
In other words, no edges can be added to $F^*$ to obtain an AF with fewer components.
Every MAF is a mAF, but not necessarily vice versa~\cite{shi2013parameterized}.

\section{Four strikes against an agreement forest for the uSPR distance}
\label{sec:properties}

There are relatively efficient fixed-parameter algorithms for each distance metric with a maximum (acyclic) agreement forest formulation---the rooted SPR distance~\cite{hein96}, unrooted TBR distance~\cite{allen01}, and rooted hybridization number~\cite{baroni05}.
We call these MAF-like problems.
For MAF-like problems, the distance can be found from easily determined properties of the maximum agreement forest: typically, its number of components.
It is reasonable to ask why there has been no formulation of the uSPR distance as an MAF-like problem.
In this section, we identify four properties of the uSPR distance that are contrary to natural MAF assumptions.
Specifically, these properties show the uSPR distance isn't easily calculated from properties of a traditionally defined agreement forest.

Let $S$ be an optimal sequence of SPR operations transforming one unrooted tree $T_1$ to another $T_2$.
Consider the set of broken edges $E$ of $T_1$ and $T_2$ whose endpoints are modified by applying $S$.
Then we can naturally define the AF underlying $S$ as $T_1 \div E = T_2 \div E$, that is the maximal forest of edges which are not modified by applying $S$.
Note that this AF is not necessarily maximum.
We can similarly define the AF underlying an optimal sequence of TBR moves.

\begin{property}
One or both ends of an edge may move
\label{property:which-end}
\end{property} \noindent

An AF with $k+1$ components represents a set of $k$ TBR moves, each joining two components with respect to $T_2$.
However, SPR moves only move one end of an edge and $S$ may require both ends of a particular edge to move (e.g.\ the edge initially between 3 and 4 in Figure~\ref{fig:spr-distance}).
As such, an optimal SPR path may include one or two rearrangements corresponding to the same broken edge.
In fact, as we show in the next property, optimal SPR paths may require three or more moves corresponding to the same broken edge.

\begin{property}
\label{property:move-twice}
The same endpoint of an edge may move twice
\end{property} \noindent

A useful feature of MAF-like problems is that each optimal move joins two components in the underlying AF.
The minimum distance is thus one less than the number of MAF components, and it is easy to recover an optimal sequence of moves from the~MAF.

However, optimal sequences of uSPR moves with respect to a given underlying agreement forest are not guaranteed to join AF components at each step.
Consider, for example, the pair of trees in Figure~\ref{fig:move-twice-counterexample}.
These trees have only one MAF, but each of their optimal SPR paths begins by applying a move that does not join an underlying AF component.
We verified this by computing the SPR distance between the second tree and each neighbor of the first tree.
Moreover, we exhaustively tested each optimal SPR path underlain by the MAF and found that each such path moves the same endpoint of some edge twice.
In other words, a broken edge may be moved three or more times in an optimal SPR path.
Thus, even given the AF underlying an (unknown) optimal sequence of SPRs, it is not clear how to determine the sequence of SPRs or even their number.
We call this the \emph{AF-move-recovery problem}, and suspect that the uSPR version of the problem may be NP-hard in its own right.
This is in stark contrast to the trivially solved AF-move-recovery problem for MAF-like problems.

\begin{property}
Common clusters are not always maintained
\label{property:break-clusters}
\end{property} \noindent

A \emph{common cluster} is a set of taxa $L \subset X$ from two $X$-trees, $T_1$ and $T_2$, such that $T_i|L$ and $T_i|(X \setminus L)$ are disjoint and connected by a single edge for $i=1,2$.
In MAF-like problems, it is never necessary to move taxa from one side of a common cluster edge to the other.
In fact, these problems can be decomposed into pairs of common clusters which are solved independently~\cite{baroni2006hybrids,linz2011cluster}.
Such cluster decompositions greatly decrease the computational effort required to solve MAF-like problems, as algorithms to do so scale exponentially with the distance computed within a cluster rather than the total distance~\cite{whidden2014supertrees}.

However, as previously shown by Hickey et al.~\cite{hickey2008spr}, there exist pairs of unrooted trees such that every optimal SPR path violates a common cluster.
This lack of independence between clusters is another sign that uSPR differs from MAF-like problems.

\begin{property}
Common paths may be broken
\label{property:break-common}
\end{property} \noindent

In fact, the situation is even worse than identified by Hickey et al.~\cite{hickey2008spr}.
Consider an optimal sequence $M$ of SPR operations transforming one tree, $T_1$, into another, $T_2$, and the underlying agreement forest $F = T_1 \div E_1 = T_2 \div E_2$ for some $E_1$ and $E_2$.
We say that two paths of edges $p_1 \in T_1$ and $p_2 \in T_2$ are \emph{common paths} with respect to $M$ if they connect the same AF components $C_1$ and $C_2$, that is $T_1 \div (E_1 \setminus p_1) = T_2 \div (E_2 \setminus p_2)$.
Not only is it sometimes necessary to move taxa from one side of a cluster edge to another, it may be necessary to break a common path, as shown in Figure~\ref{fig:break-common}.
This is especially surprising, as a later SPR operation must reform the common path.
However, breaking such a common path may free up sets of moves that would be otherwise impossible.

This observation implies that every AF underlying an optimal set of SPR operations may be a strict subforest of another AF, that is, every such underlying AF may not be maximal.
This is in stark contrast to MAF-like problems where every underlying AF is either maximum or (for rooted hybridization number) maximal.

In summary, we believe that these properties make it unlikely that any uSPR MAF formulation is possible.
The first two properties show that even if the correct edges are identified that need to be modified, it does not appear straightforward to find the optimal sequence of modifications.
The second two properties show that we cannot even assume that edges found in each tree will be preserved in an optimal uSPR path.
Thus, we require a different strategy.

\section{Socket Agreement Forests}
\label{sec:sockets}

Recently, we proposed a new type of agreement forest, socket agreement forests (SAFs)~\cite{whidden2016chain} which we summarize here.
Note that SAFs are required only for Observation~\ref{obs:replug-independence} and in Section~\ref{sec:alg:replug-faster}, so this section can be skipped on a first read.
SPR operations on general trees remove and introduce internal nodes, making it difficult to describe equivalence of sets of moves.
SAFs solve this difficulty by including a finite set of predetermined ``sockets'' which are the only nodes that can be involved in SPR operations and are never deleted or introduced.
However, due to this fixed nature, SAFs are unsuitable for determining the SPR distance metric directly.

A \emph{socket forest} is a collection of unrooted trees with special nodes, called \emph{sockets}.
Socket forests have special edges called \emph{connections} that must be between two sockets.
A collection of them is a \emph{connection set}.
Connections are not allowed to connect a socket to itself, although multiple connections to the same socket are allowed.

We will also use the following terminology on socket forests.
A \emph{move} is the replacement of one connection in a connection set for another.
A \emph{replug move} is a move that only changes one socket of a given connection.
An \emph{SPR move} for a given connection set is a replug move that does not create cycles.
We say that a move \emph{breaks} the replaced connection.

The \emph{underlying} forest of a socket forest $F$ is the forest $F^*$ obtained from $F$ by deleting all connections and suppressing all unconnected sockets.
A socket forest $F$ \emph{permits an unrooted tree} $T$ if it is possible to add connections between the sockets of $F$, resolve any multifurcations in some way, and suppress unconnected sockets to obtain $T$.
We call a connected socket forest with such a connection set a \emph{configuration} of $F$ (e.g. a $T$ configuration of $F$).
Moreover, a socket forest $F$ \emph{permits an SPR path} if each intermediate tree along the path is permitted by $F$.
Given two trees $T_1$ and $T_2$, a \emph{socket agreement forest} (SAF) is a socket forest that permits both $T_1$ and $T_2$.
Note that the underlying forest of an SAF is an AF of $T_1$ and $T_2$.

Let $M = m_1, m_2, \ldots, m_k$ be a sequence of moves transforming tree $T_1$ into tree $T_2$ via an SAF $F$.
We can consider the sequence of trees $T_1, t_1, t_2, \ldots, t_k = T_2$ induced by these moves, that is the sequence of tree configurations obtained by applying $M$ to a $T_1$ configuration of $F$ that results in a $T_2$ configuration of $F$.
We thus discuss sockets and connections in the trees, as shorthand for the sockets and connections in the corresponding socket forest configurations.

Socket forests allow us to be precise concerning how connections change during a sequence of moves, because nodes are not deleted or introduced.
Each socket can be separately identified (e.g.\ with a numbering), so any connection can be described irrespective of the other connections in a socket forest.
As with moves on general trees, we consider the deletion and insertion of a connection as simply changing the endpoint of the connection.
As such, the ``new'' connection maintains the same identifier.
Thus, we can identify changes in a connection by the changes in the sockets it connects, again irrespective of the other connections in a socket forest.
This implies a well defined notion of equivalence of moves: two moves are equivalent if they both attach a given endpoint of the same connection to the same socket.
For example, we can uniquely describe a move as changing the second endpoint of connection $c$ to socket $v'$, regardless of the current state of the socket forest.
We say that a move is \emph{valid} for a socket forest configuration if it can be applied to that configuration.
Similarly, a rearranged and/or modified sequence is valid if it is a sequence of valid moves.

Given an AF $F'$ of two trees $T_1$ and $T_2$, we say that an SPR path between $T_1$ and $T_2$ is \emph{optimal with respect to $F'$} if there exists no shorter SPR path between $T_1$ and $T_2$ where each intermediate tree along the path is permitted by $F'$.
Socket agreement forests are a partial analogue of maximum agreement forests: if we can construct a socket agreement forest for $T_1$ and $T_2$, we can be assured of a valid SPR path between $T_1$ and $T_2$ that is optimal with respect to the underlying agreement forest.
However, it is not trivial to calculate the length of the SPR path between trees for a given socket agreement forest, and thus they are only a partial analogue of rooted maximum agreement forests.
We prove the following lemma in~\cite{whidden2016chain}.
\begin{lemma}
	\label{lem:saf}
  Let $F$ be a socket agreement forest of two trees $T_1$ and $T_2$.
	Then there exists an SPR path between $T_1$ and $T_2$ that is permitted by $F$ and optimal with respect to the AF $F^*$ underlying~$F$.
\end{lemma}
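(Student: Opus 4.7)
The plan is to construct the desired path in two stages: first produce an optimal SPR path at the tree level that respects $F^*$, and then lift it to a sequence of valid socket-level moves in $F$. Existence of an optimal-with-respect-to-$F^*$ path is immediate, since $F^*$ is an AF of both $T_1$ and $T_2$: there are edge sets $E_1, E_2$ with $T_1 \div E_1 = F^* = T_2 \div E_2$, and one can transform $T_1$ into $T_2$ by a sequence of SPR operations whose broken edges are always the images of those in $E_1$ under the mappings $\varphi_{i,j}$. Every intermediate tree along such a path still has $F^*$ as a forest, so it is permitted by $F^*$. Among all such paths, pick a shortest one $\pi = t_0, t_1, \ldots, t_k$ with $t_0 = T_1$ and $t_k = T_2$.

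The main step is to argue that each $t_i$ is actually a configuration of $F$, not merely permitted by $F^*$. Here one invokes the construction of $F$ as an SAF of $T_1$ and $T_2$ from \cite{whidden2016chain}: this construction places sockets on every component of $F^*$ at every node to which a broken edge could attach under a path consistent with $F^*$. Since $\pi$ only varies the attachments of broken edges among the components of $F^*$, each $t_i$ is realized by choosing an appropriate connection set on the existing sockets of $F$, resolving any multifurcations, and suppressing the unconnected sockets. The tree-level SPR move $t_i \to t_{i+1}$ then translates directly into a socket-level move: under the identification of tree edges with connections, it changes exactly one endpoint of one edge, which is a replug move reassigning one socket of one connection. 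Because the resulting configuration underlies the legitimate tree $t_{i+1}$, no cycle is formed, so the replug is in fact a valid SPR move on $F$ in the socket-forest sense.

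The hard part is the socket-availability step: one must verify that the SAF construction places sockets at every position demanded by \emph{some} intermediate tree along an optimal-wrt-$F^*$ path, so that no step of $\pi$ runs off the sockets of $F$. This reduces to a structural property of the construction in \cite{whidden2016chain} and is the only place where the specific definition of an SAF---as opposed to an arbitrary socket forest whose underlying forest equals $F^*$---is essential. Once that is in hand, the remainder of the argument is a direct translation between tree-level SPR moves and cycle-free socket-level replug moves, producing a path permitted by $F$ whose length equals the optimum with respect to $F^*$.
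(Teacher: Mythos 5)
There is a genuine gap, and it sits exactly where you flag it yourself. The lemma is stated for an \emph{arbitrary} socket agreement forest $F$ of $T_1$ and $T_2$, i.e., any socket forest that permits both trees; nothing in the hypothesis guarantees that $F$ comes from a construction which ``places sockets on every component of $F^*$ at every node to which a broken edge could attach under a path consistent with $F^*$.'' Your main step assumes precisely that, and your last paragraph then concedes that verifying this ``socket-availability'' property is the hard part and defers it to an unproven structural property of the construction in \cite{whidden2016chain}. But that step \emph{is} the content of the lemma: an intermediate tree that is merely permitted by $F^*$ may attach a broken edge at a point in the interior of an edge of a component of $F^*$, and such a point need not correspond to any socket of $F$ (in general $F$ may carry only the finitely many sockets needed to realize the $T_1$ and $T_2$ configurations). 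What has to be proven is an exchange argument: starting from a shortest $F^*$-permitted path, every move whose attachment point is not a socket must be redirected to a socket of the same component of $F^*$, in a way that keeps all subsequent moves valid and still terminates at $T_2$ of the same length; this requires treating separately connection endpoints that are moved again later (where only the component of the attachment matters) and terminal attachments (which must land on the socket used in the $T_2$ configuration of $F$). None of that argument appears in your writeup, and your no-cycle justification (``because the resulting configuration underlies the legitimate tree $t_{i+1}$'') presupposes that $t_{i+1}$ is a configuration of $F$, which is the very thing being established.

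Two further remarks. First, the paper itself does not prove this lemma in the present text but cites \cite{whidden2016chain} for the proof, so a blind attempt has to stand on its own; deferring the decisive step back to that reference leaves nothing proved. Second, even your first stage is lighter than claimed: the existence of an SPR path all of whose intermediate trees retain $F^*$ as a forest (so that the minimum you take is over a nonempty set) deserves a short argument, e.g., that the broken edges can be detached and reattached one endpoint at a time in an order that always yields a tree; calling it ``immediate'' skips this, though it is a minor issue next to the missing relocation-to-sockets argument.
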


We can define an optimal replug path with respect to an AF $F'$ in an analogous manner. It is then straightforward to apply the proof of Lemma~\ref{lem:saf} to replug moves to obtain the following corollary:

\begin{corollary}
	\label{cor:saf}
  Let $F$ be a socket agreement forest of two trees $T_1$ and $T_2$.
	Then there exists a replug path between $T_1$ and $T_2$ that is permitted by $F$ and optimal with respect to the AF $F^*$ underlying~$F$.
\end{corollary}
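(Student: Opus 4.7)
The plan is to mirror the structure of the proof of Lemma~\ref{lem:saf}, replacing SPR moves with replug moves throughout. Two features of replug moves make the adaptation essentially automatic: (i) a replug move on a general tree corresponds naturally to changing exactly one socket endpoint of a single connection in any configuration of $F$; and (ii) replug moves carry strictly fewer constraints than SPR moves, since they are permitted to create cycles. Point (ii) is where the SPR proof does real work, and it is precisely the step that vanishes in the replug setting.

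First I would fix an optimal replug path $T_1 = t_0, t_1, \ldots, t_d = T_2$ with respect to $F^*$, so that each intermediate tree $t_i$ is permitted by $F^*$ and in particular by $F$. Then, starting from a chosen $T_1$ configuration of $F$, I would lift the path to a sequence of configurations of $F$. For each move $m_{i+1}$ taking $t_i$ to $t_{i+1}$, the induced mapping $\varphi_{i,i+1}$ identifies the broken edge of $t_i$ with a specific connection in the current configuration, and the new socket-endpoint in $t_{i+1}$ with a specific socket of $F$. Because $F^*$ already permits both $t_i$ and $t_{i+1}$, the target socket is present in $F$; reattaching the connection to it produces a configuration of $F$ that yields $t_{i+1}$.

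The main obstacle in the SPR version of this argument is checking that the configuration-level move obtained above does not create a cycle, and if it does, rearranging moves to restore SPR validity without increasing the length. For replug moves this check is moot: every endpoint-change is a valid replug move on any configuration, so no rearrangement is necessary. Hence the lifted sequence is already a valid replug path through $F$, and the resulting path has length $d$, which equals the optimal replug distance with respect to $F^*$. I would conclude by invoking this count to assert optimality with respect to $F^*$, yielding the corollary.

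The hard part, then, is not a new construction but verifying that every step of the Lemma~\ref{lem:saf} proof either carries over verbatim under the dictionary ``SPR move $\mapsto$ replug move'' or becomes trivial. Since the only genuinely SPR-specific ingredient was cycle avoidance, and that ingredient simply drops out, no additional lemmas are required and the argument is as short as the remark preceding the statement suggests.
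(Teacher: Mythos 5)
Your overall strategy---port the proof of Lemma~\ref{lem:saf} to replug moves and observe that the SPR-specific validity concerns disappear, as in Observation~\ref{obs:replug-independence}---is exactly the route the paper intends. However, your reconstruction has a genuine gap at the lifting step. You assert that because each intermediate structure $t_{i+1}$ is permitted by $F^*$, ``the target socket is present in $F$,'' i.e.\ that the new attachment point of every move corresponds to a socket of $F$, and more generally that being permitted by $F^*$ implies being permitted by $F$. That implication is unjustified and in general false: $F$ has a fixed, finite socket set, guaranteed only to suffice for $T_1$ and $T_2$ configurations, whereas a path optimal with respect to $F^*$ may attach a pruned component at an interior point of a component of $F^*$ that carries no socket of $F$. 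If permitted-by-$F^*$ implied permitted-by-$F$, Lemma~\ref{lem:saf} and this corollary would be nearly vacuous and the socket structure would play no role, contrary to the entire motivation for SAFs.

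The real content of the argument being ported is a re-routing step: given a path optimal with respect to $F^*$, every attachment made at a non-socket point must be replaced by an attachment at some socket in the same component of $F^*$, and one must verify---by induction along the sequence, handling the last move of each connection endpoint and the intermediate ones separately---that the modified sequence remains valid, has the same length, and still terminates at exactly $T_2$ (whose connections are at sockets, since $F$ permits $T_2$). For replug moves the validity check is indeed trivial by Observation~\ref{obs:replug-independence}; that is the part that ``drops out.'' But the re-routing and the equivalence-of-outcome induction do not drop out, and your write-up never performs them; it instead assumes the original path already attaches at sockets, which begs the question. A smaller point: intermediate structures along a replug path need not be trees, so the lifting should be phrased in terms of configurations (connection sets) of $F$ rather than intermediate trees.
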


\eat{

We will also use the following terminology.
A \emph{panel} is a component of a socket agreement forest.
A \emph{move} is the replacement of one connection in a connection set for another.
A \emph{replug move} is a move that only changes one socket of a given connection.
An \emph{SPR move} for a given connection set is a replug move that does not create cycles.

We will denote replug and SPR moves that replace a connection $c = (u,v)$ with a connection $(u,v')$ by $(u,v) \rightarrow (u,v')$ for short.
We say that this move \emph{breaks} the connection $c$.
Again, we can uniquely describe this move as changing the second endpoint of connection $c$ to socket $v'$, regardless of the current state of the socket forest.
This move is always a replug move; if it creates a cycle when applied to a given socket forest then it is an SPR move on that forest.
A replug move attaching to socket $v$ is \emph{terminal} for a given sequence of moves if subsequent moves maintain the connection attached to $v$.

Let $M = m_1, m_2, \ldots, m_k$ be an optimal sequence of SPR moves transforming tree $T_1$ into tree $T_2$ via a socket forest $F$.
We will often consider the sequence of trees $T_1, t_1, t_2, \ldots, t_k = T_2$ induced by these moves, that is the sequence of trees obtained by applying $M$ to a $T_1$ socket forest configuration of $F$, which results in $T_2$.
We say that two such trees are \emph{equivalent} if they are both permitted by the same binary phylogenetic tree.
In this way we can discuss sockets and panels in the trees, as shorthand for the sockets and panels in the socket forest configurations that correspond to each tree.

We say that two SPR moves $m_i$ and $m_j$, $i \neq j$ in such an optimal path are \emph{independent} if there exists another optimal sequence of SPR moves transforming $T_1$ into $T_2$ such that equivalent moves to $m_j$ and $m_i$ occur in a different order.
It follows from the following observation that independence is not transitive in~general.

\begin{obs}
	\label{obs:move-leaf}
	An SPR move that breaks an edge connected to a panel with one socket is independent of any other SPR move in an optimal SPR path.
\end{obs}

In contrast, connections to panels with multiple sockets may form cycles depending on the order of the moves.

Modifying a terminal SPR move to use a different socket in the same panel of an underlying AF creates a new sequence of SPR moves that results in the same tree other than the modified connection.
It is always valid: let $M = m_1, m_2, \ldots, m_k$ be a sequence of SPR moves transforming tree $T_1$ into tree $T_2$, $F$ a socket agreement forest that permits $M$, and $F^*$ the AF underlying $F$.
\begin{obs}
	\label{obs:modify-terminal}
	If $m_i = (u,v) \rightarrow (u,v')$ is a terminal move of $M$ and the component of $F^*$ containing $v'$ also contains a socket $v''$, then $M' = m_1, m_2, \ldots, m_{i-1}, m'_i,\allowbreak m_{i+1}, m_{i+2}, \ldots, m_k$ is a valid sequence of SPR moves, where $m'_i = (u,v) \rightarrow (u,v'')$.
\end{obs}

Modifying a non terminal move $m_i$ to use a different socket in the same panel of the underlying AF (and modifying the subsequent move $m_j$ of that connection endpoint, if any) results in an equivalent tree:

\begin{corollary}
	\label{cor:modify-nonterminal}
	Suppose $m_i = (u,v) \rightarrow (u,v')$ is a non-terminal move of $M$ and the component $k$ of $F^*$ containing $v'$ also contains a socket $v''$.
	Let $m_j = (w,v') \rightarrow (w,x)$ be the next move in $M$ of the $v'$ endpoint moved by $m_i$.
	Then $M' = m_1, m_2, \ldots, m_{i-1}, m'_i, m_{i+1},\allowbreak m_{i+2}, \ldots, m_{j-1}, m'_j, m_{j+1}, m_{j+2}, \ldots, m_k$ is a valid sequence of SPR moves that results in an equivalent tree as $M$, where $m'_i = (u,v) \rightarrow (u,v'')$ and $m'_j = (w,v'') \rightarrow (w,x)$.
\end{corollary}

}

\section{The replug distance and maximum endpoint agreement forests}

In this section, we introduce the \emph{replug distance}, which lies between the TBR distance and SPR distance, and develop an agreement forest variant, called an \emph{endpoint agreement forest}, for its calculation.
This notion of agreement forest records the position of broken edges for unrooted trees, and thus does not have the pathologies of unrooted SPR agreement forests described in the first section.

The replug distance on trees is defined in terms of the \emph{replug operation}, inspired by replug moves on socket forests.
A replug operation is an SPR operation that does not necessarily result in a tree.
That is, a replug operation again cuts an edge $e = (u,v)$ of a tree $T$, dividing $T$ into $T_u$ and $T_v$, but the new node $v'$ attached to $u$ may be chosen from either $T_u$ or $T_v$.
If $v' \in T_v$ then the replug operation is identical to an SPR operation.
If, however, $v' \in T_u$, then adding the edge $u,v'$ results in a disconnected graph with two components, one of which is cyclic.
We further extend this operation to apply to arbitrary partially-labeled graphs, such as the result of a replug operation on a tree (Figure~\ref{fig:two-replug}).

\begin{figure}[t]
	\hspace*{\stretch{1}}
	\subfigure[\label{fig:two-replug}]{\includegraphics[scale=0.95]{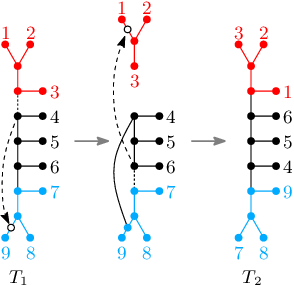}}
	\hspace*{\stretch{2}}
	\subfigure[\label{fig:two-replug-meaf}]{\includegraphics[scale=0.95]{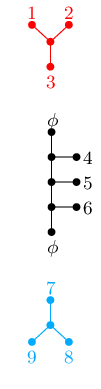}}
	\hspace*{\stretch{2}}
	\subfigure[\label{fig:three-spr-meaf}]{\includegraphics[scale=0.95]{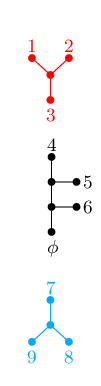}}
	\hspace*{\stretch{1}}

	\caption{(a) Two replug moves transform a tree $T_1$ into another tree $T_2$.
		Replug moves modify one end of an edge, like SPR moves, but may create loops and disconnected components. The dashed line with an arrow is the edge that will be added in the next step; simultaneously the dotted edge touching the tail end of the arrow will be removed.
		(b) A maximum endpoint agreement forest (MEAF) of $T_1$ and $T_2$.
		The $\phi$ nodes indicate endpoints of removed edges that are common to both trees.
		In this case, as there are two components and two $\phi$-nodes, the TBR and replug distances are both 2.
		The SPR distance is 3.
		(c) A MEAF of the trees from Figure~\ref{fig:three-spr}.
		In this case, the replug distance and SPR distance are both 3, while the TBR distance is 2.
	}
	\label{fig:replug}
\end{figure}

Given two trees, $T_1$ and $T_2$, the replug distance, $\dreplug{T_1, T_2}$, is the minimum number of replug moves required to transform $T_1$ into $T_2$.
By dropping the requirement that each intermediate move result in a tree, we achieve an MAF-like problem that can be computed using a structure we call an \emph{endpoint} agreement forest (EAF).
The replug distance can be used as an approximation and lower bound of the SPR distance which, as we discuss in Section~\ref{sec:astar}, can be used in an incremental heuristic search for the exact SPR distance.
We show that it is fixed-parameter tractable in Theorem~\ref{thm:compute-replug}, but conjecture:
\begin{conjecture}
	The replug distance between two trees $T_1$ and $T_2$ is NP-hard to compute.
\end{conjecture}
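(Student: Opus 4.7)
The natural plan is a polynomial-time reduction from a known NP-hard phylogenetic distance problem, most promisingly unrooted TBR (Allen and Steel) or unrooted SPR (Hickey et al.). The sandwich inequality $\dtbr{T_1,T_2} \le \dreplug{T_1,T_2} \le \dspr{T_1,T_2}$ suggests a first-attempt strategy: re-examine the instances produced by the existing reductions and check whether they happen to have equal TBR and SPR distances, in which case those same reductions automatically witness hardness of replug. Any pair of trees for which TBR $=$ SPR forces replug to that common value, so if this ``collapse'' can be verified (possibly after minor surgery on the gadgets), the proof is immediate.

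If the existing reductions do not have this collapse property, I would construct a dedicated reduction from the same combinatorial sources that were used for the earlier phylogenetic hardness proofs (e.g.\ MaxCut or Exact Cover by 3-Sets). The analytic tool would be the maximum endpoint agreement forest of Section~5: the goal would be to show that, for every instance produced by the reduction, every optimal MEAF must take a particular structural form dictated by the combinatorial solution, at which point the replug distance can be read off from the component count together with the $\phi$-node count as in Figure~\ref{fig:replug}. In parallel, hybridization number, which is also MAF-like and NP-hard, could serve as a reduction source, since its reductions are tuned to agreement-forest-style objectives.

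The main obstacle is the intermediate nature of replug moves. A replug move is more permissive than an SPR move (it may transiently produce disconnected or cyclic graphs) but less permissive than a TBR move (it modifies only a single endpoint per step). Gadgets used to force SPR hardness typically rely on connectivity invariants that replug is allowed to violate, so those arguments will not transfer verbatim; gadgets for TBR hardness exploit simultaneous two-endpoint moves that replug does not permit. A concrete promising tactic is to design gadgets in which every broken edge in an optimal solution must have both endpoints moved, so that each TBR step is implemented by exactly two replug steps and $\dreplug{T_1,T_2} = 2\,\dtbr{T_1,T_2}$ on these instances, giving a clean reduction from TBR hardness. Proving such a ``no-shortcut'' property for replug, in the face of its extra freedom to temporarily produce non-tree graphs, is where I expect the real difficulty to lie, and it is likely the step that would dominate the technical content of any proof.
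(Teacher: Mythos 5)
The statement you are addressing is stated in the paper only as a conjecture: the authors explicitly write that they can show fixed-parameter tractability of the replug distance but merely \emph{conjecture} NP-hardness, and no proof (or proof sketch) appears anywhere in the paper. So there is no paper proof to match, and more importantly your proposal is not a proof either --- it is a research plan whose decisive steps are left open. The first strategy (checking whether the instances of the known TBR or uSPR hardness reductions happen to satisfy $\dtbr{T_1,T_2} = \dspr{T_1,T_2}$, so that the sandwich inequality pins down $\dreplug{T_1,T_2}$) is logically sound as far as it goes, but you give no evidence that any existing reduction has this collapse property, and nothing in the paper suggests it does; indeed the paper's whole point in Section~3 is that uSPR behaves very differently from the MAF-like TBR distance on hard instances, which makes an accidental collapse on reduction gadgets unlikely and certainly not something one can assume.

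The second strategy has the same character: you correctly identify that the real difficulty is a ``no-shortcut'' lemma forcing every broken edge in an optimal solution to have both endpoints moved (so that $\dreplug{T_1,T_2} = 2\,\dtbr{T_1,T_2}$ on the constructed instances), but you do not construct such gadgets, and you yourself flag this as the step where the technical content would lie. Lemma~\ref{lem:replug-nice-properties} and Observation~\ref{obs:replug-independence} show that replug paths are far more flexible than SPR paths (no repeated endpoint moves, no breaking of common clusters or common edges), which is exactly the extra freedom your gadgets would have to defeat; nothing in your outline engages with how to control the $\phi$-node assignments (equivalently, the weight function $\weight(F) = 2(\size{F}-1) - q(F)$ of Theorem~\ref{thm:meaf}) on the gadget instances. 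In short, the statement remains a conjecture both in the paper and after your proposal: what you have written is a reasonable starting point for an attack, but the missing idea --- a concrete reduction together with a proof that optimal MEAFs of the constructed trees have the forced structure --- is the entire proof.
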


To develop an agreement forest that represents replug operations, we must consider the properties of the unrooted SPR distance in Section~\ref{sec:properties} which make it difficult to solve using agreement forests.
Property~\ref{property:which-end} states that one or both ends of an edge may move in an optimal unrooted SPR path.
Recall that each SPR operation moves one endpoint of an edge and so any agreement forest formulation must be able to represent which side of an edge remains fixed, as well as the case where both sides of an edge must be moved.
The replug distance also has this directional property.
However, as replug moves are not required to maintain a tree structure, they do not prevent other replug moves in the way that uSPR moves do.
We thus have the following observation:
\begin{obs}
	\label{obs:replug-independence}
	Given an SAF, an endpoint of any connection can be attached to any socket with a valid replug move.
\end{obs}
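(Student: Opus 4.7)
The plan is to argue this essentially directly from the definitions, contrasting replug moves with SPR moves. First, I would fix notation: let $F$ be the given SAF, let $C$ be a configuration of $F$ (with a specific connection set realizing $T_1$ or some intermediate tree), let $c = (u,v) \in C$ be any connection, and let $w$ be any socket of $F$ with $w \neq u$ (the $w \neq u$ caveat is unavoidable, since the definition of a connection forbids a socket from being joined to itself). The goal is to show that the move replacing $c = (u,v)$ by $(u,w)$ is a valid replug move for the configuration $C$.

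The key step is to unpack the definition of ``valid replug move'' and observe what it does \emph{not} require. By definition, a replug move changes exactly one endpoint of a single connection; in our case, it changes the $v$-endpoint of $c$ to $w$, leaving every other connection in $C$ untouched. The only structural constraint inherited from being a connection set is that no connection is a self-loop, which is guaranteed by $w \neq u$. Crucially, unlike an SPR move, a replug move is \emph{not} required to keep the underlying graph acyclic, connected, or tree-shaped: the defining sentence ``an SPR move for a given connection set is a replug move that does not create cycles'' explicitly carves out cycle-avoidance as an additional SPR-only restriction. Therefore the new connection set $(C \setminus \{c\}) \cup \{(u,w)\}$ is a well-defined configuration of the socket forest $F$ (the sockets themselves are unchanged), and the single-connection modification qualifies as a replug move under the definition.

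The main ``obstacle,'' if one can call it that, is really just a pedantic one: the statement says ``any socket,'' and one must decide how to interpret the self-loop case $w = u$. I would handle this by a brief parenthetical remark that the observation should be read with the standing assumption (already built into the definition of a connection) that $w \neq u$; this matches how the observation is used in the sequel, where the point is that replug moves are subject to essentially no combinatorial restrictions beyond modifying a single endpoint. With that remark the proof reduces to a one-line appeal to the definitions, which is exactly the content Observation~\ref{obs:replug-independence} is designed to record before it is invoked in the construction of endpoint agreement forests.
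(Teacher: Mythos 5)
Your proposal is correct and takes essentially the same route as the paper, which states this observation without an explicit proof, treating it as immediate from the definitions: since replug moves, unlike SPR moves, need not avoid cycles or preserve a tree structure, swapping one endpoint of a connection to any other socket always yields a valid configuration. Your parenthetical about excluding the self-loop case $w = u$ is a reasonable pedantic refinement consistent with the paper's definition of connections.
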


We can use this observation to show that the replug distance does not have Properties~\ref{property:move-twice},~\ref{property:break-clusters}~and~\ref{property:break-common} of the unrooted SPR distance.
\begin{restatable}{lemma}{lemreplugniceproperties}
	\label{lem:replug-nice-properties}
	Let $M = m_1, m_2, \ldots, m_k$ be an optimal sequence of replug moves transforming a tree $T_1$ into a tree $T_2$.
	Then $M$ does not (1) move the same endpoint $u$ of an edge twice, (2) break a common cluster of $T_1$ and $T_2$, or (3) break an edge of a common path of $T_1$ and $T_2$ with respect to $M$.
\end{restatable}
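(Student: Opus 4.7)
The plan is to lift $M$ to a socket agreement forest $F$ permitting $T_1$, $T_2$, and every intermediate configuration, so that each $m_i$ becomes a reassignment of a persistently-identified endpoint of some connection to a new socket. The workhorse is Observation~\ref{obs:replug-independence}: on any SAF configuration, reassigning a connection endpoint to any socket is a valid replug move, because replug moves impose no cycle or connectivity constraints. This lets us freely delete, merge, or retarget moves without worrying about the validity of the remaining moves, which act on different endpoints. Corollary~\ref{cor:saf} guarantees that we may assume $M$ is optimal with respect to $F^*$.

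For part (1), suppose for contradiction that some endpoint $u$ of a connection $c$ is moved twice, say $m_i$ sends $u \mapsto v'$ and $m_j$ (with $j > i$ and no intervening move acting on $u$) sends $u \mapsto v''$. By Observation~\ref{obs:replug-independence}, the single reassignment $u \mapsto v''$ is valid on the configuration just before $m_i$. Deleting $m_i$ and replacing $m_j$ by this direct move, while leaving the remaining moves (which act on other endpoints) unchanged, yields a valid replug sequence of length $k-1$ with the same final configuration, contradicting the optimality of $M$.

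For part (3), a common edge of $T_1$ and $T_2$ corresponds in $F$ to a connection $c$ whose endpoints in the $T_1$- and $T_2$-configurations coincide. If $M$ breaks $c$, then some endpoint of $c$ is reassigned by one move and later reassigned back by another, producing a doubly-moved endpoint, contradicting part~(1). For part (2), a common cluster $L$ corresponds in an appropriately chosen $F$ to a single separating connection $c_L$ whose endpoints are fixed by the matching $T_1$- and $T_2$-configurations on the $L$-side and the $(X\setminus L)$-side; breaking $c_L$ during $M$ again forces an endpoint to be moved away and later restored, reducing to part~(1).

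The main obstacle is the SAF-level reformulation in part~(2): establishing that the separating edges of a common cluster in $T_1$ and $T_2$ can be represented by a single persistent connection between two fixed socket-sides in some SAF permitting $M$ requires a careful construction of $F$ that identifies these two tree edges. Once this alignment is in place, parts~(2) and~(3) reduce uniformly to part~(1), which in turn follows from a direct shortcut argument powered by the unconstrained validity of replug moves.
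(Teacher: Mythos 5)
Your argument for part (1) is essentially the paper's own: take two moves $m_i, m_j$ of the same connection endpoint with no intervening move of that endpoint, delete $m_i$, and use Observation~\ref{obs:replug-independence} to conclude the truncated sequence is still valid and still ends at $T_2$, contradicting optimality. (The appeal to Corollary~\ref{cor:saf} is unnecessary; $M$ is assumed optimal outright, and the deletion argument never uses optimality with respect to $F^*$.)

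The gap is in parts (2) and (3), and you flag it yourself. Your reduction to part (1) rests on the claim that a common edge (resp.\ the common cluster edge) is realized by a single connection $c$ whose endpoints occupy the same sockets in the $T_1$- and $T_2$-configurations, so that breaking it forces some endpoint to be moved away and later moved back. That alignment is not automatic: a common cluster or common split only guarantees that each tree has \emph{some} edge separating $L$ from $X\setminus L$; since $T_1|L$ and $T_2|L$ may differ, those two edges may attach at different nodes (hence different sockets) on each side, and after applying $M$ the separating edge of $T_2$ may even be supplied by a different connection than the one that was broken. So ``broken implies some endpoint of that connection moves twice'' does not follow, and the ``careful construction of $F$ that identifies these two tree edges'' you defer is precisely the missing step rather than a routine detail. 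The paper does not route through such an identification: it reuses the deletion argument of part (1) directly, letting $m_i$ be the move that breaks the common cluster or common edge and $m_j$ the move that reforms it, deleting the breaking move, and invoking Observation~\ref{obs:replug-independence} to see that the shortened sequence remains valid and produces the same final tree. Until you either supply the socket identification you promise or replace the reduction by that direct shortcut on the breaking/reforming pair, parts (2) and (3) remain unproven in your write-up.
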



To account for Property~\ref{property:which-end} and define an agreement forest that represents replug operations, we thus need to represent a fixed endpoint of a moved edge (analogous to component roots of a rooted AF) as well as the case where both endpoints are moved (analogous to an unrooted AF for TBR calculation).
We will call this augmented tree structure a (phylogenetic) \emph{$X$-$\phi$-tree}, which is a generalization of an $X$-tree.
It has $n \ge |X|$ labeled leaves, $|X|$ of which are uniquely labeled from $X$, and the remaining $q(T) := n - |X|$ leaves are each labeled $\phi$.
These $\phi$ nodes will be used to indicate an edge endpoint which remains fixed during a set of tree moves.
As with $X$-trees, two $X$-$\phi$-trees are considered equal if and only if there is an isomorphism between their nodes and edges that maintains node labels.
Thus, $\phi$ nodes are interchangeable.
An $X$-$\phi$-forest is a forest of $X$-$\phi$-trees.

Consider the three possiblities of an edge $e=(u,v)$ of an $X$-$\phi$-tree that is modified by one or more replug moves:
\begin{enumerate}
\item the $v$ endpoint is moved such that $e$ becomes $(u,x)$
\item the $u$ endpoint is moved such that $e$ becomes $(w,v)$
\item both endpoints are moved such that $e$ becomes $(y,z)$.
\end{enumerate}
In order to describe such changes as part of a type of agreement forest, we attach $\phi$ nodes as follows.

In the first case we represent this replug operation by cutting edge $e$, suppressing node $v$, then attaching a new leaf labeled $\phi$ to node $u$.
The second case is similar: we suppress $u$ then add a $\phi$ node to $v$.
Observe that the third case is equivalent to two replug operations or a single TBR operation.
We can thus represent the third case by cutting edge $e$ and suppressing nodes $u$ and $v$, as in general unrooted AFs.

To complement the notion of ``directional'' agreement forest we also need a notion of directional edge set, as follows.
We define an \emph{endpoint edge} $E_c$ of a tree $T$ to be an edge along with a proper subset of its endpoints.
This will be denoted $\eedge{e}{p}$ for an edge $e = (u,v)$ where $p \subsetneq \{u,v\}$.
Note that $p = \emptyset$ indicates that both endpoints are moved.
In the context of a sequence of replug moves, these subsets $p$ are the nodes that remain fixed, which we will call \emph{augmented endpoints}.
An endpoint edge of an $X$-$\phi$-tree cannot have a $\phi$ node as an augmented endpoint.

An endpoint edge set $E$ is a set of endpoint edges $E_c$ of $T$.
For an endpoint edge set $E$, we use $F-E$ to denote the $X$-$\phi$-forest obtained by deleting the edges in $E$ from $F$ and adding $\phi$ node neighbors to each augmented endpoint.
We call this ``cutting'' the endpoint edge set, which is in general a many-to-one mapping.
$F \div E$ is again the ($X$-$\phi$) forest yielded by $F - E$ and we say that $F \div E$ is an endpoint forest of $F$.

An \emph{endpoint agreement forest} (EAF) is now naturally defined in terms of endpoint forests.
Given trees $T_1$ and $T_2$ and forests $F_1$ of $T_1$ and $F_2$ of $T_2$, a forest $F$ is an EAF of $F_1$ and $F_2$ if it is an endpoint forest of both trees.
Observe that an EAF is a generalization of an AF, where either no nodes remain fixed (unrooted AF, in which every $p = \emptyset$), or every endpoint furthest from the root remains fixed (rooted AF, in which no $p = \emptyset$).
As such, we can always find an EAF for two trees by constructing an AF.

The \emph{weight} of an EAF $F$ is defined as:
$$\weight(F) = 2\left(\size{F}-1\right) - q(F),$$
where $q(F)$ is the number of $\phi$ nodes in $F$.
Observe that the weight strictly increases upon cutting an endpoint edge set.

We say that an EAF $F$ of two trees $T_1$ and $T_2$ is a \emph{maximum endpoint agreement forest} (MEAF) of $T_1$ and $T_2$ if it has minimum weight (Figures~\ref{fig:two-replug-meaf} and~\ref{fig:three-spr-meaf}).
Use $\weight(T_1, T_2)$ to denote this minimum weight.
We can show that $\dreplug{T_1, T_2}$ is equal to $\weight(T_1, T_2)$, rather than the number of components.

\begin{restatable}{theorem}{thmmeaf}
\label{thm:meaf}
Let $T_1$ and $T_2$ be unrooted trees.
Then $\dreplug{T_1,T_2} = \weight(T_1, T_2)$.
\end{restatable}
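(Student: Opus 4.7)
The plan is to prove the two inequalities $\weight(T_1,T_2) \le \dreplug(T_1,T_2)$ and $\dreplug(T_1,T_2) \le \weight(T_1,T_2)$ separately, leveraging a single counting identity. If a set of broken edges contains $b_1$ edges with exactly one moved endpoint and $b_2$ edges with both endpoints moved, then it contributes $b_1 + 2b_2$ replug moves, and the corresponding cut $X$-$\phi$-forest has $b_1+b_2+1$ components and $b_1$ many $\phi$-leaves, so its weight is $2(b_1+b_2) - b_1 = b_1 + 2b_2$. Both directions will just match this count against the replug-move count.

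For the upper bound, I would take an optimal sequence $M$ of $d = \dreplug(T_1,T_2)$ replug moves and form the endpoint edge set $E$ of $T_1$ whose elements are the broken edges, augmented with the endpoints that remain fixed throughout $M$. By Lemma~\ref{lem:replug-nice-properties} no endpoint of any edge is moved twice, so each broken edge contributes either one or two moves as above, and $F = T_1 \div E$ has weight $d$. To conclude that $F$ is an EAF I must verify that cutting the image $E'$ of $E$ under the endpoint-tracking mapping $\varphi_{0,d}$ out of $T_2$ produces the same $X$-$\phi$-forest: edges outside $E$ are not touched by $M$, so the unbroken parts of $T_1$ and $T_2$ are label-preserving isomorphic, and the $\phi$-leaf placements coincide because the ``fixed versus moved'' dichotomy on endpoints is intrinsic to $M$ and thus shared by both endpoints of each cut.

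For the lower bound, I would take a MEAF $F$ of weight $w$ and construct a socket agreement forest $S$ whose underlying forest is $F$, placing a socket at every $\phi$-leaf of $F$ (to serve as a pinned attachment point shared by both the $T_1$- and $T_2$-configurations of $S$) and additional sockets exactly where each component meets a cut edge. For every connection $c$ of $S$, its two endpoints differ between the $T_1$- and $T_2$-configurations in exactly one socket (if $c$ is incident to a $\phi$-socket) or in both (otherwise). Summing gives $b_1 + 2b_2 = w$ endpoint differences, and Observation~\ref{obs:replug-independence} lets me realize these as $w$ replug moves applied in any order, producing a replug path from $T_1$ to $T_2$ of length $w$.

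The main obstacle is the compatibility check in the upper-bound direction: verifying that $T_1 \div E$ and $T_2 \div E'$ coincide as $X$-$\phi$-forests. This requires two ingredients: first, that $\varphi_{0,d}$ restricts to a label-preserving isomorphism between the unbroken parts, which follows by induction since each individual replug move modifies just one edge; and second, that the $\phi$-annotations on fixed endpoints are symmetric under reversing $M$, which is where Lemma~\ref{lem:replug-nice-properties} does its essential work, since an endpoint being moved at most once makes the fixed-versus-moved partition invariant under direction. Once this compatibility is in hand, the weight identity above closes both inequalities.
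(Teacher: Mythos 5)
Your two inequalities are the right ones and the counting identity $b_1+2b_2 = 2(\size{F}-1)-q(F)$ is exactly the arithmetic underlying the theorem, but your route is genuinely different from the paper's. The paper proves $\dreplug{T_1,T_2}\ge\weight(T_1,T_2)$ by induction on the number of moves: it peels off the first move $m_1$, takes an EAF of weight $d-1$ for the remaining path, and then handles two cases depending on whether the edge moved by $m_1$ was already cut (if so, Lemma~\ref{lem:replug-nice-properties} guarantees the later move used the \emph{other} endpoint, so the existing $(v,\phi)$ edge just loses its $\phi$ node). You instead build the EAF globally from the whole sequence at once, which makes the weight/move-count correspondence transparent but shifts all the difficulty into the compatibility check that $T_1\div E$ and $T_2\div E'$ agree as $X$-$\phi$-forests. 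Your two ingredients (unbroken parts agree; the fixed-versus-moved dichotomy is direction-symmetric by Lemma~\ref{lem:replug-nice-properties}) are the right ones, but the delicate point you leave unexamined is node suppression: if another broken edge incident to the fixed endpoint $u$ also moves, $u$ is literally suppressed in $T_2$ and the ``same position'' for the $\phi$ leaf has to be interpreted in the yielded forest via the $\varphi$ mappings; the paper's one-move-at-a-time induction absorbs exactly this bookkeeping. For the converse, the paper again argues by induction directly on trees, attaching a $\phi$-bearing component to an effectively adjacent component with a single replug move and dropping the weight by one (this inductive step is what later gives Corollary~\ref{cor:eaftoreplug} its linear-time per-move implementation), whereas you route through the socket machinery: build an SAF over the MEAF, count endpoint differences between the $T_1$- and $T_2$-configurations, and realize them in any order via Observation~\ref{obs:replug-independence}. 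That is a legitimate alternative in the spirit of Corollary~\ref{cor:saf}, though to be complete you would need to justify the pairing of connections (same number of cut edges in both trees, each $\phi$ socket carrying one connection fixed at that socket in both configurations) and the passage from socket-level endpoint swaps to replug moves on the yielded partially-labeled graphs. In short: correct strategy and a workable, more ``global'' proof, buying a cleaner count at the cost of the isomorphism/suppression details that the paper's double induction handles for free.
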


Our proof of Theorem~\ref{thm:meaf} provides an inductive procedure for constructing an optimal sequence of replug moves from an MEAF (Fig.~\ref{fig:meaf} in the appendix).
Moreover, each step of this procedure can be implemented to require linear time using the tree to AF mappings we construct later with Lemma~\ref{lem:map-edges}.
Thus:

\begin{restatable}{corollary}{coreaftoreplug}
\label{cor:eaftoreplug}
Let $F$ be an EAF of two unrooted trees $T_1$ and $T_2$.
	A sequence of $\weight(F)$ replug moves that transform $T_1$ into $T_2$ can be obtained from $F$ in $\OhOf{n\weight(F)}$-time.
\end{restatable}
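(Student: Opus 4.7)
The plan is to algorithmically realize the inductive procedure already implicit in the proof of Theorem~\ref{thm:meaf}. That proof must, given any EAF $F$ of $T_1$ and $T_2$ with $\weight(F) > 0$, exhibit a single replug move $m$ applicable to $T_1$ whose result is a tree $T_1'$ admitting an EAF $F'$ (of $T_1'$ and $T_2$) with $\weight(F') = \weight(F) - 1$. Iterating this $\weight(F)$ times produces an SPR-style path of exactly $\weight(F)$ replug moves from $T_1$ to $T_2$. The task here is to argue that one iteration can be carried out in $\OhOf{n}$ time and that no extra work is needed to re-compute $F'$ from scratch: the forest $F$ itself remains unchanged throughout, only the ambient mapping into the intermediate tree is updated.

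Concretely, I would maintain two mappings $\varphi_1 \colon F \to T_1^{(i)}$ (from the nodes and edges of $F$ into the current intermediate tree after $i$ replug moves) and $\varphi_2 \colon F \to T_2$, as furnished by Lemma~\ref{lem:map-edges}. Both initial mappings are constructed in $\OhOf{n}$ time by standard tree traversals. To find the next move $m$, I would inspect the endpoint edges and $\phi$-decorations of $F$ and compare $\varphi_1$ with $\varphi_2$: the inductive step in Theorem~\ref{thm:meaf} selects some broken edge (or $\phi$-augmented endpoint) whose target location in $T_2$ can be read off from $\varphi_2$. The destination socket inside the current tree $T_1^{(i)}$ is then located by walking, at most once, along the appropriate path, which costs $\OhOf{n}$.

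After performing $m$, updating $\varphi_1$ requires only rewiring the moved endpoint of the affected edge and, if needed, suppressing or unsuppressing at most two degree-two nodes and attaching/removing $\phi$ neighbours; all of these local modifications touch $\OhOf{1}$ pointers. Consequently each iteration is $\OhOf{n}$, and aggregating over the $\weight(F)$ iterations yields total time $\OhOf{n\weight(F)}$, exactly as claimed.

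The main obstacle, I expect, is bookkeeping: replug moves may transiently produce non-tree graphs with cycles and disconnected components, so the mapping $\varphi_1$ must be maintained against a structure that is not a tree. The key observation that prevents this from blowing up is that the EAF $F$ and the target mapping $\varphi_2$ into the genuine tree $T_2$ are what drive the search for each next move, so the walks needed to localise attachment sockets all occur in $T_2$ (or in a component of the intermediate graph corresponding to a single $F$-component) and remain of linear size.
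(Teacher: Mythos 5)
Your proposal is correct and follows essentially the same route as the paper's proof: both implement the inductive step of Theorem~\ref{thm:meaf} one replug move at a time, using the mappings of Lemma~\ref{lem:map-edges} and a linear-time traversal (component labelling in $F$ plus a walk in $T_2$ to locate the attachment point) to find each weight-decreasing move, giving $\OhOf{n}$ per move and $\OhOf{n\weight(F)}$ overall. The only difference is cosmetic: you maintain the tree-to-forest mappings incrementally with $\OhOf{1}$ updates per move, whereas the paper simply allows $\OhOf{n}$ work per iteration, which yields the same bound.
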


The replug distance is a lower bound for the SPR distance, which enables the fast SPR distance algorithm in Section~\ref{sec:astar}.

\begin{restatable}{theorem}{thmreplugbound}
\label{thm:replug-bound}
For any pair of trees $T_1$ and $T_2$,
$$\dtbr{T_1, T_2} \le \dreplug{T_1, T_2} \le \dspr{T_1, T_2}.$$
\end{restatable}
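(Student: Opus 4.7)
The plan is to handle the two inequalities separately. For $\dreplug{T_1,T_2} \le \dspr{T_1,T_2}$, observe that a replug operation with $v' \in T_v$ is, by definition, exactly a uSPR operation, so every uSPR path from $T_1$ to $T_2$ of length $\dspr{T_1,T_2}$ is simultaneously a valid replug path of the same length.

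For $\dtbr{T_1,T_2} \le \dreplug{T_1,T_2}$, I would construct an EAF from an optimal replug sequence and bound its weight. Let $M$ be an optimal replug path of length $d := \dreplug{T_1,T_2}$. Using the mapping $\varphi_{0,d}$ from Section~\ref{sec:prelim}, identify the broken edges of $T_1$, and for each broken edge $e = (u,v)$ let $p_e := \{w \in \{u,v\} : w \text{ is never modified by any move in } M\}$, which is automatically a proper subset of $\{u,v\}$ since at least one endpoint must move. Form the endpoint edge set $E_1 := \{\eedge{e}{p_e} : e \text{ is broken in } T_1\}$ and let $E_2$ be its image under $\varphi$; because unbroken edges are mapped identically between $T_1$ and $T_2$ and each fixed endpoint is the same node in both trees, cutting produces a common $X$-$\phi$-forest $F := T_1 \div E_1 = T_2 \div E_2$, making $F$ an EAF of $T_1$ and $T_2$.

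Then I would count. Let $k_0 := |\{e : p_e = \emptyset\}|$ and $k_1 := |\{e : |p_e| = 1\}|$, so that $|F| = 1 + k_0 + k_1$, $q(F) = k_1$, and $\weight(F) = 2k_0 + k_1$. Each replug move modifies one endpoint of one broken edge, so an edge with $p_e = \emptyset$ requires at least two moves and one with $|p_e| = 1$ requires at least one; hence $d \ge 2k_0 + k_1 = \weight(F)$. Stripping each $\phi$ leaf from $F$ and suppressing its former neighbour produces an ordinary unrooted AF of $T_1$ and $T_2$ with the same number of components, so $m(T_1, T_2) \le |F|$. Putting these together,
\[
\dtbr{T_1,T_2} = m(T_1,T_2) - 1 \le |F| - 1 = k_0 + k_1 \le 2k_0 + k_1 = \weight(F) \le d = \dreplug{T_1,T_2}.
\]

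The main obstacle is justifying that $F$ is genuinely an EAF, i.e., that $T_1 \div E_1$ and $T_2 \div E_2$ coincide as labeled $X$-$\phi$-forests. This reduces to tracking $\varphi_{0,d}$: unbroken edges connect corresponding nodes in both trees, and for each broken edge every fixed endpoint in $p_e$ is the same node in both $T_1$ and $T_2$, so the $\phi$-leaves attach in matching positions. Once this bookkeeping is in place, the rest is a clean counting argument.
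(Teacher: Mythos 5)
Your proof is correct, and for the harder inequality it takes a more self-contained route than the paper. The paper handles $\dreplug{T_1,T_2} \le \dspr{T_1,T_2}$ exactly as you do (every uSPR move is a replug move, so an SPR path is a replug path), but for $\dtbr{T_1,T_2} \le \dreplug{T_1,T_2}$ it simply cites the proof of Theorem~\ref{thm:meaf}, where an optimal replug sequence is converted by induction, using Lemma~\ref{lem:replug-nice-properties}, into an EAF whose weight equals $\dreplug{T_1,T_2}$, and weight dominates the number of components minus one, hence the TBR distance. You instead build the endpoint edge set in one shot from the never-moved endpoints and count moves per broken edge (at least two if both ends move, at least one otherwise), obtaining $d \ge 2k_0 + k_1$ without any structural facts about optimal paths: an endpoint that moves twice only inflates the count, so neither Lemma~\ref{lem:replug-nice-properties} nor the full MEAF duality is needed. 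The paper's route is essentially free because Theorem~\ref{thm:meaf} is required anyway for the algorithm; yours buys independence from that machinery at the cost of the $\varphi$-bookkeeping you flag, which is the same ``underlying AF'' identification $T_1 \div E_1 = T_2 \div E_2$ that the paper itself invokes when defining the AF underlying a move sequence, so it is at the paper's own level of rigor. One minor caveat: your intermediate equalities (e.g.\ $\size{F} = 1 + k_0 + k_1$ and $q(F) = k_1$) can weaken to inequalities if a cut-off region contains no labels of $X$ and is dropped by the yield, but they weaken in the harmless direction, since $m(T_1,T_2) - 1 \le \size{F} - 1 \le k_0 + k_1 \le 2k_0 + k_1 \le d$ still gives the claimed bound.
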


\section{A fixed-parameter replug distance algorithm}
\label{sec:algorithm}

In this section we develop a fixed-parameter algorithm for computing the replug distance between a pair of binary rooted trees $T_1$ and $T_2$.
We apply a two-phase bounded search tree approach to determine whether the replug distance is at most a given value $k$.
The minimum replug distance can be found by repeatedly running this algorithm with increasing parameters $k=1,2, \ldots, \dreplug{T_1, T_2}$.

Our strategy is to enumerate all possible maximal agreement forests (Section~\ref{sec:alg:tbr}) and then decorate them with $\phi$ nodes to enumerate the possible maximal endpoint agreement forests (Section~\ref{sec:alg:replug}).
We then improve the second phase by considering each possible socket assignment as a SAF of $F$ and finding an optimal $\phi$ node assignment (Section~\ref{sec:alg:replug-faster}).

\subsection{Enumerate maximal AFs (TBR distance)}
\label{sec:alg:tbr}

The first phase of our search enumerates all maximal (unrooted) AFs.
We modify the $\OhOf{4^k n}$-time MAF (TBR) distance algorithm of Whidden and Zeh~\cite{whidden2009unifying} to do so, and adopt some of the improved branching cases from the $\OhOf{3^k n}$-time algorithm of Chen et al.~\cite{chen2013parameterized}.
Neither algorithm is capable of enumerating all mAFs in its original form and, furthermore, we can not apply all of the cases from Chen et al.\  because some of their cases necessarily miss some mAFs.

Given two trees $T_1$ and $T_2$ and a parameter $k$, our algorithm $\alg{T_1, T_2, k}$ finds each maximal AF that can be obtained by cutting $k$ or fewer edges.
If no such mAF exists, then it returns an empty set.
If mAFs are found, we apply the next phase of our search to each of these candidate underlying AFs.
If we find no mAFs or if none of these mAFs underly a maximum endpoint agreement forest, we increase $k$ and repeat until we find one.
This does not increase the running time of the algorithm by more than a constant factor because the running time depends exponentially on $k$.

Our algorithm is recursive, acting on the current forests of $T_1$ and $T_2$ that are obtained by cutting edge sets $E_1$ and $E_2$, respectively, with $\size{E_1} = \size{E_2}$.
These forests maintain a specific structure, $F_1 = T_1 \div E_1 = \{\dot{T}_1\} \cup F_0$, and $F_2 = T_2 \div E_2 = \dot{F}_2 \cup F_0$, defined as follows:
\begin{itemize}
\item The tree $\dot{T}_1$ is obtained from $T_1$ by cutting $E_1$.
\item The forest~$F_0$ has all of the rooted subtrees cut off from $T_1$. The rooting for these subtrees is that given by rooting at the node touching a cut edge.
\item The forest $\dot{F}_2$ has the same label set as $\dot{T}_1$, and contains the components of $F_2$ that do not yet agree with $\dot{T}_1$ (i.e. $(F_2 \div E_2) \setminus F_0$).
\item A set $R_t$ (roots-todo) stores the roots of (not necessarily maximal) subtrees agreeing between $\dot{T}_1$ and~$\dot{F}_2$.
\end{itemize}

Each invocation of the algorithm works to modify $\dot{T}_1$ and $\dot{F}_2$ towards agreement, collecting the needed agreement forest components in $F_0$.
For the top-level invocation, $\dot{T}_1 = T_1$, $\dot{F}_2 = \set{T_2}$, $F_0 = \emptyset$, and $R_t$ contains all leaves of $T_1$.

We say that a pair of nodes $a,c \in R_t$ that are siblings in $\dot{T}_1$ are a \emph{sibling pair} $\set{a,c}$.
Such a pair must exist when $\size{R_t} \ge 2$ just as every tree must have a pair of sibling leaves.
For each node $x \in R_t$, $e_x$ denotes the unique edge of $\dot{T}_1$ or $\dot{F}_2$ that is adjacent to $x$ and extends out of the common subtree rooted at $x$.
We call this the \emph{edge adjacent to $x$}.

\begin{figure*}[t]
	\hspace*{\stretch{1}}
	\subfigure[\label{fig:tbr-cases-t1}]{\includegraphics[scale=0.6]{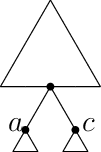}}
	\hspace*{\stretch{2}}
	\subfigure[\label{fig:tbr-cases-sc}]{\includegraphics[scale=0.6]{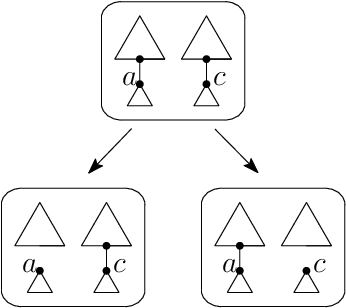}}
	\hspace*{\stretch{1}}
	\subfigure[\label{fig:tbr-cases-cab}]{\includegraphics[scale=0.6]{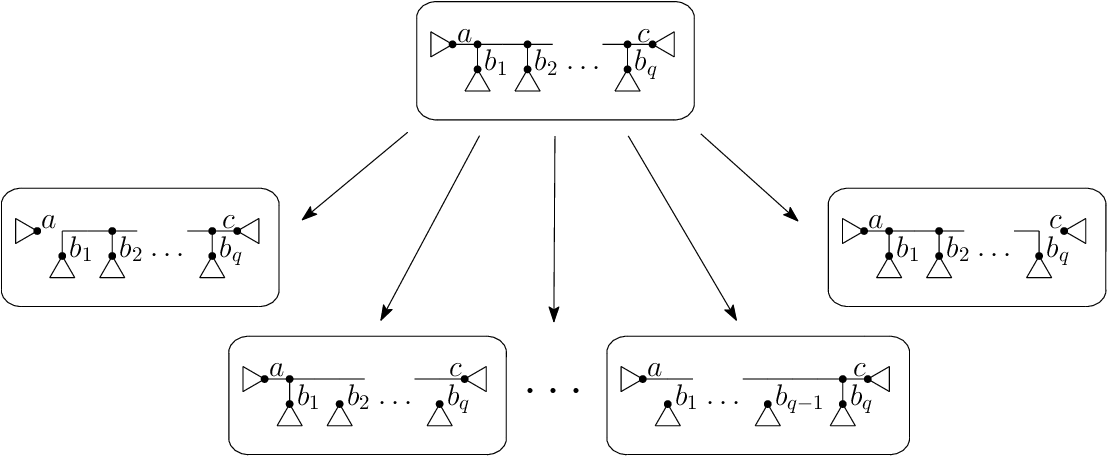}}
	\hspace*{\stretch{1}}

	\caption{Branching cases of the maximal AF enumeration algorithm.
		(a)~A~sibling pair $\set{a,c}$ of $F_1$.
		(b)~The branching cases applied to $F_2$ when $a \noreach{F_2} c$.
		(c)~The branching cases applied to $F_2$ when $a \reach{F_2} c$.
	}
	\label{fig:tbr:cases}
\end{figure*}

\vspace{1em}
\noindent\textbf{Algorithm $\alg{\bm{F_1, F_2, k}}$:}
\vspace{-0.2em}

\begin{enumerate}[label=\arabic{*}.,ref=\arabic{*},leftmargin=*]
\item \label{case:abort} (Failure)
	If $k < 0$, then return $\emptyset$.
\item \label{case:success} (Success)
	If $|R_t| = 1$, then $F_1 = F_2$.
	Hence, $F_2$ is an AF of $T_1$ and $T_2$.
  Return \set{$F_2}$.
\item \label{case:tbr:singleton} (Prune maximal agreeing subtrees)
  If there is no node $r \in R_t$ that is a root in $\dot{F}_2$, proceed to Step~\ref{item:choose-sib-pair}.
  Otherwise choose such a node $r \in R_t$; remove it from~$R_t$ and move the subtree $S$ of $\dot{F}_2$ containing $r$ to $F_0$; cut the edge $\edge{r}$ separating $S$ in $\dot{T}_1$ from the rest of $\dot{T}_1$, which gives $S$ a rooting.
  Return to Step~\ref{case:success}.
\item \label{item:choose-sib-pair} Choose a sibling pair $\set{a, c}$ of $\dot{T}_1$. Let $\edge{a}$ and $\edge{c}$ be the edges adjacent to $a$ and $c$ in $\dot{F}_2$.
\item \label{case:tbr:sibling} (Grow agreeing subtrees)
  If $a$ and $c$ are siblings in $\dot{F}_2$, do the following:
	remove $a$ and $c$ from $R_t$; label their mutual neighbor in both forests with $(a, c)$ and add it to $R_t$; return to Step~\ref{case:success}.
\item \label{case:tbr:non-sibling} (Branching) Distinguish two cases depending on whether $a \reach{F_2} c$ (Figure~\ref{fig:tbr:cases}).
Note that because $a$ and $c$ are in $\dot{T}_1$ they are not in $F_0$, so $a \reach{F_2} c$ is equivalent to $a \reach{\dot{F}_2} c$.
  \begin{enumerate}[label=\theenumi.\arabic{*}.,leftmargin=*,ref=\theenumi.\arabic{*}]
		\item \label{case:tbr:is} If $a \noreach{F_2} c$, try cutting off either the subtree rooted at $a$ or that rooted at $c$, returning:
	$$\alg{F_1, F_2 \div \set{\edge{a}}, k-1} \cup \alg{F_1, F_2 \div \set{\edge{c}}, k-1}.$$
\item \label{case:tbr:mpe} If $a \reach{F_2} c$, let $b_1, b_2, \ldots, b_q$ be the pendant nodes on the path from $a$ to $c$ in $T_2$ and let $\edge{b_i}$ be the pendant edge adjacent to $b_i$.
	Try cutting $\edge{a}$ or $\edge{c}$, or try cutting off all but one of the subtrees on the path between $a$ and $c$, returning:
    \begin{align*}
	&\alg{F_1, F_2 \div \set{\edge{a}}, k-1} \\
	\cup \ &\alg{F_1, F_2 \div \set{\edge{b_2}, \edge{b_3}, \ldots, \edge{b_q}}, k-(q-1)}\\
	 \cup \ &\alg{F_1, F_2 \div \set{\edge{b_1}, \edge{b_3}, \edge{b_4}, \ldots, \edge{b_q}}, k-(q-1)}\\
	 \ \ldots \\
	 \cup \ &\alg{F_1, F_2 \div \set{\edge{b_1}, \edge{b_2}, \ldots, \edge{b_{q-1}}}, k-(q-1)}\\
	 \cup \ &\alg{F_1, F_2 \div \set{\edge{c}}, k-1}.
    \end{align*}
	\end{enumerate}

\end{enumerate}

The correctness of our algorithm is proven using a simplification of the proof of Theorem 2 of Whidden and Zeh~\cite{whidden2009unifying}, modified for arbitrary maximal agreement forests.

\begin{restatable}{lemma}{lemtbr}
\label{lem:tbr}
$\alg{T_1, T_2, k}$ returns the set of maximal agreement forests of $T_1$ and $T_2$ that can be obtained by cutting $k$ or fewer edges.
\end{restatable}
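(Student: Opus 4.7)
The plan is to adapt the correctness proof of Theorem~2 of Whidden and Zeh~\cite{whidden2009unifying}, extended with two new claims: every forest returned by $\alg{T_1,T_2,k}$ is a \emph{maximal} AF, and every maximal AF obtainable with at most $k$ cuts is returned. I would proceed by induction on the recursion depth, with base cases Step~1 ($k<0$, returning $\emptyset$) and Step~2 ($|R_t|=1$, at which point $\dot{T}_1$ and $\dot{F}_2$ coincide, so $F_0 \cup \set{\dot{T}_1}$ is an AF of $T_1$ and $T_2$). The inductive step must verify that the invariants on $\dot{T}_1$, $\dot{F}_2$, $F_0$, and $R_t$ are preserved through Steps~3--5 and across each branch of Step~6.

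For \textbf{soundness}, I would argue that the algorithm never makes an avoidable cut. Step~3 prunes only at a node $r$ that is already a root of its component in $\dot{F}_2$, so its subtree is a maximal agreeing subtree of $\dot{T}_1$ and $\dot{F}_2$ and must be a component of every AF containing its leaves. Step~6 cuts only edges whose retention is incompatible with the current partial alignment near the sibling pair $\set{a,c}$: in Step~6.1, since $a \noreach{F_2} c$, any AF must separate $a$ from $c$ in $\dot{T}_1$ via $\edge{a}$ or $\edge{c}$; in Step~6.2, the pendant subtrees on the $a,c$-path obstruct $a,c$ becoming siblings in the AF unless all but at most one are cut. Maximality of the returned forests then follows because reattaching any cut edge would reconnect components that demonstrably disagree between $T_1$ and $T_2$.

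For \textbf{completeness}, let $F^*$ be any maximal AF of $T_1,T_2$ obtainable by cutting edge sets $E_1^*,E_2^*$ with $|E_i^*|\le k$. I would show by induction that at every recursive call some branch is \emph{consistent with $F^*$}, in the sense that the cuts made so far are a subset of $E_1^* \cup E_2^*$ under the AF correspondence between $T_1$ and $T_2$ edges. Step~3 is always consistent since an agreeing subtree rooted in $\dot{F}_2$ is a component of $F^*$. In Step~6.1, at least one of $\edge{a}, \edge{c}$ lies in $E_2^*$, giving a consistent branch. In Step~6.2, either $\edge{a} \in E_2^*$, or $\edge{c} \in E_2^*$, or $a,c$ are siblings in $F^*$, in which case every pendant edge $\edge{b_i}$ on the $a,c$-path is in $E_2^*$; then any of the $q$ ``keep $\edge{b_i}$'' branches is consistent, with the recursion cutting the remaining $\edge{b_i}$ later.

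The principal obstacle is justifying the expanded Step~6.2 branching over every pendant index $i$, as opposed to the more compact scheme of Chen et al.~\cite{chen2013parameterized} which is known to miss some maximal AFs. I would argue that distinct maximal AFs are not collapsed into the same branch (branches differ in which pendant is retained) and that no maximal AF falls between the branches (a maximal AF making $a,c$ siblings must cut every pendant on the path, which any of the $q$ branches recovers once the recursion resolves $b_i$). Combining soundness and completeness then yields the stated equality between the output of $\alg{T_1,T_2,k}$ and the set of all maximal AFs obtainable with at most $k$ cuts.
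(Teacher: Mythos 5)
Your completeness argument is, in substance, the paper's argument recast in direct rather than contrapositive form. The paper supposes some maximal AF $F$ obtainable with at most $k$ cuts is missed, takes the first invocation on a failing path at which every branch cuts an edge separating leaves that lie in a common component of $F$, and derives a contradiction: in Step~6.1 from the sibling-pair reachability structure, and in Step~6.2 from the quartet-incompatibility observation together with the claim that Step~6.2 is the expanded form of the union of four single-edge-cut calls (cutting $\edge{a}$, $\edge{b_1}$, $\edge{b_q}$, $\edge{c}$). Your ``some branch is consistent with $F^*$'' induction, with the trichotomy ($a$ isolated in $F^*$, $c$ isolated, or $a,c$ siblings in $F^*$ so that every pendant on the path must be separated, in which case any keep-$\edge{b_i}$ branch is consistent with $\edge{b_i}$ handled after the Step~5 merge), reaches the same conclusion and arguably handles the expanded Step~6.2 more cleanly than the paper's four-call expansion device. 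One point to make precise: when $a \noreach{F_2} c$, ``$\edge{a}$ or $\edge{c}$ lies in $E_2^*$'' is not immediate, since $F^*$ could separate $a$ from $c$ at an edge interior to the $a$--$c$ path in $T_2$; what you need is the $F_1$-side argument (if $a$ and $c$ were each joined to some other root of $R_t$ in $F^*$, the two paths through their common neighbour in $\dot{T}_1$ would force $a \reach{F^*} c$), which shows $a$ or $c$ is isolated from all remaining roots and hence the corresponding branch is consistent. This is exactly how the paper argues Step~6.1.

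The genuine gap is in your soundness/maximality half. First, the claim that the Step~3 subtree ``must be a component of every AF containing its leaves'' is false as stated; what is true, and what your completeness invariant actually requires, is that a \emph{maximal} AF that is a forest of the current $\dot{F}_2$ cannot subdivide an agreeing subtree forming a whole component of $\dot{F}_2$, because re-adding the internal cut edge (a common edge of the agreeing subtree) would yield a strictly larger AF. Second, and more seriously, ``maximality of the returned forests follows because reattaching any cut edge would reconnect components that demonstrably disagree'' does not hold: the cuts made along a branch are hypotheses about the target forest, and a branch can terminate in Step~2 even when some of its cuts were not forced for the forest it ends up returning (Step~6.1 cuts $\edge{a}$ also in runs where only isolating $c$ was necessary, and the middle branches of Step~6.2 cut $q-1$ pendant edges under the hypothesis that $a$ and $c$ end up as siblings). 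Note that the paper's own proof does not attempt this direction at all: it shows only that every returned object is an agreement forest (Step~2 fires only when $F_1 = F_2$) and that no maximal AF obtainable with at most $k$ cuts is missed. If you want the literal ``returns the set of maximal AFs'' reading, you need either a genuine argument that successful branches never make superfluous cuts (which your one-liner does not supply and which is doubtful as stated) or an explicit maximality filter on the output.
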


\begin{restatable}{theorem}{thmtbr}
\label{thm:tbr}
Given two trees $T_1$ and $T_2$, the set $S$ of maximal agreement forests of $T_1$ and $T_2$:
\begin{enumerate}
\item can be enumerated in $\OhOf{4^k n}$-time,
\item can be returned in $\OhOf{4^k n + \size{S}\log{\size{S}}}$-time (the $\log{\size{S}}$ factor required to avoid duplicate agreement forests), and
\item is of size at most $4^k$.
\end{enumerate}
\end{restatable}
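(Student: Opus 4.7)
Correctness of the enumeration is supplied by Lemma~\ref{lem:tbr}, so all three claims reduce to a branching analysis of $\alg{T_1,T_2,k}$ viewed as a bounded search tree. My plan is to (i) show that the recursion tree has at most $4^k$ leaves, (ii) argue that each recursive call performs $O(n)$ work outside of its subcalls, and (iii) suppress duplicate mAFs in $O(\log\size{S})$ amortized time per leaf using a canonical encoding.

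I would first bound the number $T(k)$ of leaves of the recursion tree. Steps~1--5 are non-branching: each either terminates the recursion or strictly simplifies the instance (shrinking $\dot T_1$, $\dot F_2$, or $R_t$) without consuming from the budget $k$, so together they contribute only a constant-factor blowup per branching node. The sole branching step is Step~6, which splits into two subcases. In Step~6.1 ($a \noreach{F_2} c$) there are two recursive calls, each at parameter $k-1$, giving $T(k) \le 2T(k-1)$. In Step~6.2 ($a \reach{F_2} c$, with $q \ge 2$ pendant subtrees on the path from $a$ to $c$) there are two calls at parameter $k-1$ plus $q$ calls at parameter $k-q+1$, yielding
\begin{equation*}
T(k) \;\le\; 2T(k-1) + q\,T(k-q+1).
\end{equation*}
Inducting on $k$ with $T(j) \le 4^j$ for $j<k$, this becomes
\begin{equation*}
T(k) \;\le\; 2 \cdot 4^{k-1} + q \cdot 4^{k-q+1} \;=\; 4^{k-q+1}\parens{2 \cdot 4^{q-2} + q}.
\end{equation*}
A short induction gives $q \le 2 \cdot 4^{q-2}$ for every integer $q \ge 2$ (the base case $q=2$ is $2 \le 2$, and $q+1 \le 4q \le 2 \cdot 4^{q-1}$), so $2 \cdot 4^{q-2} + q \le 4^{q-1}$ and hence $T(k) \le 4^k$. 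The worst case is $q=2$ (collapsing the recurrence to $T(k) \le 4T(k-1)$), which is exactly where equality is attained and explains the constant $4$.

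For claim (1), each invocation's non-branching work---locating a sibling pair, testing $a \reach{F_2} c$, enumerating pendant nodes $b_1,\ldots,b_q$, and updating $R_t$, $\dot T_1$, $\dot F_2$, and $F_0$---can be done in $O(n)$ time with standard adjacency-list representations and a sibling-pair queue; combined with the $O(4^k)$ bound on recursion-tree nodes, total enumeration time is $O(4^k n)$. Claim (3) is immediate: each leaf emits at most one mAF when Step~2 succeeds, so $\size{S} \le 4^k$. For claim (2), duplicates arise because different orderings of independent cuts in Step~6 can drive the recursion to the same mAF along distinct paths. I would compute at each success leaf a canonical encoding of $F_2$ (e.g.\ a lexicographically sorted tuple of Newick strings of its components) in $O(n)$ time and insert it as a key into a balanced binary search tree, so that each insertion costs $O(\log\size{S})$ comparisons on pre-hashed keys. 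Summing over all $O(4^k)$ leaves gives the claimed $O(4^k n + \size{S}\log\size{S})$ bound.

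The main obstacle is the branching recurrence in Step~6.2: the bound $4^k$ is only tight at $q=2$, and one must verify that larger $q$ (fewer branches but deeper cuts) do not exceed it. The elementary inequality $q \le 2 \cdot 4^{q-2}$ handles this uniformly; everything else is standard bounded-search-tree bookkeeping.
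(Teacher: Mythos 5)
Your proposal is correct and follows essentially the same route as the paper: a bounded-search-tree analysis in which the only branching step is Step~6, whose recurrence is worst when $q=2$ (four single-edge-cut calls), giving at most $4^k$ invocations with linear work each. You merely spell out the induction on the recurrence and the duplicate-filtering bookkeeping that the paper leaves implicit.
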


\subsection{Enumerate maximal EAFs (Replug distance)}
\label{sec:alg:replug}

The second phase of our algorithm enumerates maximal endpoint agreement forests to compute the replug distance.
Using $\alg{T_1, T_2, k}$ we can enumerate each mAF $F$ of two trees $T_1$ and $T_2$ that has $k+1$ or fewer components, in other words, the mAFs that correspond to $k$ or fewer TBR operations.
However, such an mAF may imply anywhere from $k$ to $2k$ SPR operations.
We assign $\phi$ nodes to $F$ to determine the maximal EAF (mEAF) with minimum weight.

We require a method to map nodes of the trees to the agreement forest and vice versa.
Given an mAF $F$ of trees $T_1$ and $T_2$, we construct a ``forward'' mapping $\psi(x)$ that maps nodes of $T_1$ and $T_2$ to nodes of $F$, as well as a ``reverse'' mapping $\psi^{-1}(T_i, x)$, $i \in \set{1,2}$, that maps nodes of $F$ to the trees.
Recall that $F = T_1 \div E_1 = T_2 \div E_2$, where $E_1$ and $E_2$ are the edges cut from $T_1$ and $T_2$, respectively.
As $F$ contains fewer nodes than $T_1$ and $T_2$, our forward mapping $\psi(x)$ maps some nodes of $T_1$ and $T_2$ to the empty set.
These are exactly the nodes that are contracted when $E_1$ and $E_2$ are cut to obtain $F$ from $T_1$ and $T_2$.
We call these the \emph{dead} nodes of the trees, in contrast to the \emph{alive} nodes that are mapped to nodes of $F$.

There are potentially many edge sets that can be cut from the trees to obtain the mAF $F$, but we can quickly obtain one pair of such edge sets and their induced mappings:

\begin{restatable}{lemma}{lemmapedges}
	\label{lem:map-edges}
	Let $F$ be an mAF of two trees $T_1$ and $T_2$. Then
	\begin{enumerate}
	\item two sets of edges $E_1$ and $E_2$ such that $T_1 \div E_1 = T_2 \div E_2 = F$ can be constructed in linear time, and
	\item a mapping $\psi(x)$ from $T_1$ and $T_2$ nodes to nodes of $F$ and the reverse mappings $\psi^{-1}(T_i, x)$ can be built in $\OhOf{n}$-time; lookups using these mappings take constant time.
	\end{enumerate}
\end{restatable}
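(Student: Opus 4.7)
The plan for part~(1) is to identify, for each component $C$ of $F$, a single edge of $T_i$ whose removal detaches the induced subtree $T_i(X_C)$ from the rest. First I would traverse $F$ in $O(n)$ time to label each leaf of $T_i$ with the component of $F$ it belongs to, then root $T_i$ at an arbitrarily chosen labeled leaf $\ell_i$ and preprocess $T_i$ for constant-time LCA queries using the Bender--Farach-Colton algorithm ($O(n)$ preprocessing). For each component $C$ of $F$, I compute $r_C := \mathrm{LCA}_{T_i}(X_C)$ by iterated pairwise LCA in time $O(|X_C|)$, so $O(n)$ overall. I then set
\[
E_i = \set{(r_C, \mathrm{parent}(r_C)) : r_C \neq \ell_i}.
\]
The key correctness claim is a short supporting lemma: distinct components of a valid AF must have distinct LCAs in $T_i$. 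Otherwise, if $v = r_{C_1} = r_{C_2}$ with $|X_{C_1}|, |X_{C_2}| \ge 2$, both induced subtrees $T_i(X_{C_1})$ and $T_i(X_{C_2})$ would include $v$ together with edges to at least two of its children, forcing the two corresponding components of $F$ to both claim the (possibly-suppressed) path through $v$, which is impossible in a forest. Given distinct LCAs, a straightforward induction on the ancestor partial order of the $r_C$'s shows that cutting $E_i$ and then suppressing yields exactly $F$.

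For part~(2), once $E_i$ is in hand, I would compute the post-cut degree of each node of $T_i$ in $O(n)$ time. A node is \emph{alive} (and will correspond to a node of $F$) if it is labeled or has post-cut degree at least $3$, and \emph{dead} (mapped to $\emptyset$) otherwise. Alive leaves map to the leaf of $F$ sharing the same label. To build the mapping on alive internal nodes, I would process each connected component of $T_i - E_i$ separately: pick a labeled leaf $\ell$ in the component, locate its image $\ell'$ in the corresponding component of $F$, and run a DFS from $\ell$ in $T_i$ (skipping over dead vertices) in lockstep with a DFS from $\ell'$ in $F$; because binary trees are rigid given their leaf labeling, the order of first visits pairs up alive internal nodes of $T_i$ with internal nodes of $F$. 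The forward mapping $\psi$ and its inverse $\psi^{-1}(T_i, \cdot)$ are stored in arrays indexed by node IDs, giving $O(1)$ lookups with total preprocessing $O(n)$.

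The hardest step will be the correctness argument for part~(1): verifying that cuts plus suppression recover $F$ exactly even in corner cases such as $r_C$ being a leaf (so that the cut immediately isolates a singleton component), a parent of $r_C$ becoming degree-$2$ or $1$ after the cut (triggering cascading suppression), and the special role of the component containing $\ell_i$ (for which no cut is made, since its LCA is $\ell_i$ itself). Once the LCA-based cut algorithm is certified, part~(2) amounts to bookkeeping on top of the structure produced in part~(1).
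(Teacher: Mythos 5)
Your part~(1) is correct, and it takes a genuinely different route from the paper: the paper's own proof simply adapts the linear-time ``cycle graph'' construction of Lemma~4.3 of \cite{whidden2013fixed} for rooted trees, observing that an arbitrary labeled leaf can play the role of the root, and reads $E_1$, $E_2$ and both node mappings directly off that structure; you instead give a self-contained LCA-based construction. Your supporting claim and cut set do work: because $F$ is an agreement forest, the induced subtrees $T_i(X_C)$ of distinct components are vertex-disjoint, which both forces distinct LCAs and rules out a foreign component's leaves remaining attached below $r_C$ after the cuts, so $T_i \div \set{(r_C,\mathrm{parent}(r_C)) : r_C \ne \ell_i} = F$. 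This buys a citation-free argument at the cost of redoing work the paper inherits from its rooted-SPR machinery.

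Part~(2), however, has a genuine gap: the aliveness criterion ``labeled or post-cut degree at least $3$'' is wrong, because suppression when yielding $F$ is \emph{iterative}. An unlabeled node can keep post-cut degree $3$ and still be contracted: for instance, if two singleton components $\set{c}$ and $\set{d}$ of $F$ form a cherry attached at an internal node $u$ of $T_1$ whose third neighbor is $v$, then your $E_1$ consists of the two pendant edges at $c$ and $d$, node $u$ becomes an unlabeled pendant and is suppressed, and then $v$ --- which has no incident cut edge and post-cut degree $3$ --- drops to degree $2$ and is suppressed as well, so $\psi(v)=\emptyset$ even though your rule declares it alive. This is exactly the ``socket''/``dead tree'' structure that the paper's $\replug{}$ phase is built around, so it certainly arises for mAFs; under your rule the number of ``alive'' internal nodes of a component of $T_i - E_i$ need not equal the number of internal nodes of the corresponding component of $F$, and the lockstep DFS then produces an ill-defined or incorrect $\psi$. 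The fix is not hard --- determine aliveness by simulating the yield (iterated suppression of unlabeled nodes of degree less than three) in $\OhOf{n}$ time, or equivalently mark a node alive only if at least three of its incident directions reach a labeled leaf of its own post-cut component --- but as written the step fails. A smaller under-specification: lockstep DFS on \emph{unordered} binary trees pairs nodes correctly only if both traversals order children consistently (e.g., by the minimum leaf label reachable in each direction); first-visit order with arbitrary child order does not suffice.
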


Now, we develop an algorithm $\replug{T_1, T_2, F, d}$ to enumerate the set of mEAFs of weight $d$ or less with the same components (modulo $\phi$ nodes) as a given mAF $F$ of trees $T_1$ and $T_2$.
First assume that the edge set $E_1$ to remove from $T_1$ is pre-specified.
In this case, each corresponding mEAF $F'$ is simply $F$ with a choice of endpoint for each edge $e = (u,v) \in E_1$, that is, $\eedge{e}{\set{u}}$, $\eedge{e}{\set{v}}$, or $\eedge{e}{\emptyset}$.
We have three choices for each edge, and can therefore enumerate these candidate mEAFs in $\OhOf{3^k n}$ time, where $k+1$ is the number of components of $F$, and keep any of weight $d$ or less.
Note that, for each candidate mEAF, we must also check, using $\OhOf{n}$-time, whether it is a forest of $T_2$ (it is guaranteed to be a forest of $T_1$ as we enumerated choices of endpoints for $E_1$).
We define the following recursive subprocedure to enumerate the candidate mEAFs that can be obtained from an mAF $F$ with a given set of edges $E_1$.

\vspace{1em}
\noindent\textbf{Algorithm $\replugdecorate{\bm{T_1, T_2, F, E_1, k}}$:}
\vspace{-0.2em}
\begin{enumerate}[label=\arabic{*}.,ref=\arabic{*},leftmargin=*]
\item \label{case:replugdecorate:abort} (Failure)
	If $k < 0$ then return $\emptyset$.
\item \label{case:replugdecorate:success} (Success)
	If $F$ contains an unprocessed edge $e = (u,v) \in E_1$, then proceed.
	Otherwise $F$ is a forest of $T_1$ with weight at most double the initial value of $k$.
	If $F$ is also a forest of $T_2$, return \set{$F_2}$.
	Otherwise, return $\emptyset$.
\item \label{case:replugdecorate:branch} (Branch)
	Return: \\
	$\replugdecorate{T_1, T_2, F \div \eedge{e}{\set{u}}, E_1 \setminus \set{e}, k-1} \cup \\
	\replugdecorate{T_1, T_2, F \div \eedge{e}{\set{v}}, E_1 \setminus \set{e}, k-1} \cup \\
	\replugdecorate{T_1, T_2, F \div \eedge{e}{\emptyset}, E_1 \setminus \set{e}, k-1}$.
\end{enumerate}

Define a \emph{dead tree} to be a tree induced by some maximal set of adjacent edges in $T_1$ or $T_2$ that do not map to edges of $F$. 
When every dead tree is a single edge, the edge sets $E_1$ and $E_2$ removed to obtain $F$ from $T_1$ and $T_2$ are unique and thus known.
However, in general we must enumerate every set of edges $E_1$ such that $T_1 \div E_1 = F$, because a component $C$ of $F$ that is adjacent to a dead tree with three or more leaves in both $T_1$ and $T_2$ may have nontrivial $\phi$ node structures in the mEAF.
Given the node mappings and reverse node mappings constructed with Lemma~\ref{lem:map-edges}, we can identify the dead trees in a tree with three preorder traversals.
The node mapping construction procedure roots the trees arbitrarily.
We use the same roots in our traversals.

The first traversal determines the depth from each node of $T_1$ to its arbitrary root.
The second traversal identifies the path induced by each edge $e=(u,v)$ of $F$ in $T_1$.
To do so, we move towards the root from $\psi^{-1}(T_1, u)$ and $\psi^{-1}(T_1, v)$, always moving from the more distant node.
These paths will cross at the least common ancestor of $\psi^{-1}(T_1, u)$ and $\psi^{-1}(T_1, v)$ with respect to the root.
We mark $\psi^{-1}(T_1, u)$ and $\psi^{-1}(T_1, v)$ as alive and each other node of the path as a socket.
If $u$ or $v$ is a component root of $F$, then it was not removed by a forced contraction when its adjacent edge was cut.
In this case we also mark $\psi^{-1}(T_1, u)$ or $\psi^{-1}(T_1, v)$, respectively, as a socket.
The third and final traversal marks all the remaining nodes as dead and identifies the connected components of dead trees.
We will use this assignment of dead nodes and socket nodes at the end of this section to quickly find a single optimal $\phi$ node assignment.

A dead tree with $q$ leaves has $2q-3$ edges.
Each such tree can be removed by removing $q-1$ edges and contracting the resulting degree two nodes.
Thus there are ${2q-3 \choose q-1} \le 4^q$ choices of edges to cut that result in the same dead tree.
This implies that there are at most $4^k$ edge sets that result in the same AF.
We thereby construct each possible $E_1$ set and test each combination of three choices per edge, as in the unique edge set case.
This requires $\OhOf{4^k 3^k n} = \OhOf{12^k n}$ time.

In summary, the high-level steps of the algorithm to calculate the replug distance are as follows.

\vspace{1em}
\noindent\textbf{Algorithm $\replug{\bm{T_1, T_2, F, d}}$:}
\vspace{-0.2em}

\begin{enumerate}[label=\arabic{*}.,ref=\arabic{*},leftmargin=*]
	\item Root $T_1$ and $T_2$ arbitrarily at nodes $r_1$ and $r_2$.
	\item Construct the mappings $\psi$ and $\psi^{-1}$ using Lemma~\ref{lem:map-edges}.
	\item Compute the distance from each node $n$ in $T_1$ to $r_1$.
	\item For each edge $e = (u,v)$ of $F$, mark $\psi^{-1}(T_1, u)$ and $\psi^{-1}(T_1, v)$ as alive and the other nodes on the path in $T_1$ from $\psi^{-1}(T_1, u)$ to $\psi^{-1}(T_1, v)$ as sockets.
	\item Mark all unmarked nodes dead.
	\item Identify the dead components $D_1 = \set{d_1, d_2, \ldots d_{q_1}}$ of $T_1$.
	\item Repeat the previous steps for $T_2$, identifying the dead components $D_2$.
	\item Let $\mathcal{F} \gets \emptyset$.
	\item For each edge set $E_1$ induced by $D_1$,
		\begin{enumerate}
			\item Let $k \gets \size{E_1}$.
			\item Let $\mathcal{F}' \gets \replugdecorate{T_1, T_2, F, E_1, k}$.
			\item For each forest $F' \in \mathcal{F}'$, if $F'$ has weight at most $d$ and is consistent with some edge set $E_2$ induced by $D_2$ then add it to $\mathcal{F}$.
		\end{enumerate}
		\item Return $\mathcal{F}$.
\end{enumerate}

In combination with the algorithm $\alg{T_1, T_2, k}$ from Section~\ref{sec:alg:tbr}, this implies:

\begin{restatable}{theorem}{thmenumeratemeafs}
		\label{thm:enumerate-meafs}
	Given two trees $T_1$ and $T_2$, the set of mEAFs of $T_1$ and $T_2$ with $k+1$ or fewer components can be enumerated in $\OhOf{48^k n}$-time and there are at most $48^k$ such mEAFs.
\end{restatable}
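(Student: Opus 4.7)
The plan is to assemble Theorem~\ref{thm:enumerate-meafs} from the two algorithmic pieces developed in Section~\ref{sec:alg:tbr} and Section~\ref{sec:alg:replug}, so essentially I would verify that the running time and cardinality bounds compose correctly. First I would invoke Theorem~\ref{thm:tbr}: running $\alg{T_1,T_2,k}$ enumerates every mAF of $T_1$ and $T_2$ with at most $k+1$ components, there are at most $4^k$ such mAFs, and the enumeration takes $\OhOf{4^k n}$-time. This handles Phase~1.

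Next I would argue, mAF-by-mAF, what Phase~2 contributes. Fix an mAF $F$ produced above, and observe that any mEAF whose underlying AF equals $F$ arises by (i) selecting an edge set $E_1$ of $T_1$ such that $T_1 \div E_1 = F$, and then (ii) decorating each $e \in E_1$ with one of three endpoint choices $\eedge{e}{\{u\}}$, $\eedge{e}{\{v\}}$, $\eedge{e}{\emptyset}$, so that the resulting forest is also an endpoint forest of $T_2$. As already counted in the text preceding the theorem, the dead trees in $T_1$ induce at most $4^k$ valid choices of $E_1$ (using $\binom{2q-3}{q-1}\le 4^q$ on each dead tree and multiplicativity across dead components), and $\replugdecorate{T_1,T_2,F,E_1,k}$ explores the $3^k$ decorations of $E_1$ in $\OhOf{3^k n}$-time, with an $\OhOf{n}$ check per candidate to verify consistency with some $E_2$ induced by the dead trees in $T_2$ (using the mappings from Lemma~\ref{lem:map-edges}). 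So each mAF contributes at most $4^k\cdot 3^k = 12^k$ mEAFs and is processed in $\OhOf{12^k n}$-time; the preprocessing steps of $\replug{}$ (rooting, depth computation, path marking, dead-component identification) fit in $\OhOf{n}$.

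Multiplying, the total number of mEAFs enumerated is at most $4^k \cdot 12^k = 48^k$ and the total running time is $\OhOf{4^k n} + 4^k \cdot \OhOf{12^k n} = \OhOf{48^k n}$, which gives the two claimed bounds. I would also briefly observe completeness: since every mEAF's underlying AF is an mAF (as noted at the start of Section~\ref{sec:algorithm}), no mEAF is missed by restricting Phase~2 to Phase~1 outputs; the edge-set enumeration in Phase~2 covers all possible $E_1$; and the three endpoint choices exhaust the possible $\phi$-decorations of a cut edge.

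The only subtle step is the $4^k$ bound on valid edge sets $E_1$ and the careful handling of dead trees; the rest is bookkeeping. Since the text has already justified this counting argument, the proof is essentially a clean composition of Theorem~\ref{thm:tbr} with the per-mAF analysis of $\replug{T_1,T_2,F,d}$, with no new combinatorial content required.
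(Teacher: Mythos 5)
Your proposal is correct and follows essentially the same route as the paper's proof: invoke Theorem~\ref{thm:tbr} to enumerate the at most $4^k$ mAFs in $\OhOf{4^k n}$-time, then run the $\replug{}$ procedure (with its $4^k$ edge-set choices times $3^k$ decorations, i.e. $\OhOf{12^k n}$-time per mAF) on each, multiplying to the $48^k$ and $\OhOf{48^k n}$ bounds. The paper's proof is just a terser version of this composition, relying on the counting already done in the section text, so your extra detail on completeness and the dead-tree counting is consistent rather than divergent.
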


This is a fairly loose bound as there are typically far fewer than $4^k$ mAFs~\cite{voorkamp2014maximal} and the majority of agreement forests will have few and small dead trees.
Although the above procedure requires more effort to enumerate the mEAFs that can be obtained from a single mAF in the worst case than it does to enumerate each of the mAFs, we expect the opposite to be the typical case in practice because closely related trees (such as those compared in phylogenetic analysis) rarely differ in only a single large connected set of bipartitions in both trees which is necessary to induce a single large dead component in both trees.

We can improve the $\phi$ node assignment step in the case that $k$ is close to the SPR distance.
If the SPR distance is $d$ then it never makes sense to consider an mEAF with weight exceeding $d$ for the final phase of our search.
In fact, we can exclude any mEAF $F'$ with fewer than $d-k$ $\phi$ nodes, as their weight (and thus replug distance) is guaranteed to exceed $d$.
We thus define $\replugdecoratetwo{T_1, T_2, F', E_1, d}$, a version of $\textsf{replug-decorate}$ which takes this observation into account to enumerate a more limited set of candidate mEAFs.
For the definition, initially $F'=T_1$ and $E_1$ contains each edge of $F$, as follows:

\vspace{1em}
\noindent\textbf{Algorithm $\replugdecoratetwo{\bm{T_1, T_2, F', E_1, d}}$:}
\vspace{-0.2em}
\begin{enumerate}[label=\arabic{*}.,ref=\arabic{*},leftmargin=*]
\item \label{case:replugdecorate2:abort} (Failure)
	If $d < 0$ then return $\emptyset$.
\item \label{case:replugdecorate2:success} (Success)
	If $F'$ contains an unprocessed edge $e = (u,v) \in E_1$, then proceed.
	Otherwise $F'$ is a forest of $T_1$ with weight less than the initial value of $d$.
	If $F'$ is also a forest of $T_2$, return \set{$F_2}$.
	Otherwise, return $\emptyset$.
\item \label{case:replugdecorate2:branch} (Branch)
	Return: \\
	$\replugdecoratetwo{T_1, T_2, F' \div \eedge{e}{\set{u}}, E_1 \setminus \set{e}, d-1} \cup \\
	\replugdecoratetwo{T_1, T_2, F' \div \eedge{e}{\set{v}}, E_1 \setminus \set{e}, d-1} \cup \\
	\replugdecoratetwo{T_1, T_2, F' \div \eedge{e}{\emptyset}, E_1 \setminus \set{e}, d-2}$.
\end{enumerate}

\begin{restatable}{theorem}{thmenumeratemeafsd}
		\label{thm:enumerate-meafs-d}
	Given two trees $T_1$ and $T_2$, the set of mEAFs of $T_1$ and $T_2$ with weight at most $d$ and $k+1$ or fewer components can be enumerated in $\OhOf{\min(4^k, 2.42^d) \cdot 12^k n}$-time and there are at most $(\min(4^k, 2.42^d) \cdot 12^k)$ such mEAFs.
\end{restatable}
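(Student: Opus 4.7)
I would mirror the proof of Theorem~\ref{thm:enumerate-meafs}, replacing the inner call to \textsf{replug-decorate} with \textsf{replug-decorate*} so that the weight budget $d$ prunes the search. First, I would invoke $\alg{T_1, T_2, k}$ from Theorem~\ref{thm:tbr} to enumerate the at most $4^k$ candidate mAFs in $\OhOf{4^k n}$ time. For each mAF $F$ I would enumerate, via the dead-tree argument of Section~\ref{sec:alg:replug}, the at most $4^k$ edge sets $E_1$ with $T_1 \div E_1 = F$. For each such $E_1$ I would invoke $\replugdecoratetwo{T_1, T_2, F, E_1, d}$; each leaf of the resulting search tree is a candidate mEAF, which is verified as a forest of $T_2$ in $\OhOf{n}$ time using the node mappings of Lemma~\ref{lem:map-edges}.

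The key new step is the branching analysis of \textsf{replug-decorate*}. The three branches in Step~\ref{case:replugdecorate2:branch} correspond to decorating the current cut edge $e = (u,v)$ as $\eedge{e}{\set{u}}$, $\eedge{e}{\set{v}}$, or $\eedge{e}{\emptyset}$; by the definition of $\weight$ these contribute $1$, $1$, and $2$ to the total weight, respectively, and so decrement the remaining $d$-budget by exactly those amounts. The number of leaves $L(d)$ therefore satisfies $L(d) \le 2 L(d-1) + L(d-2)$, whose characteristic polynomial $x^2 - 2x - 1$ has positive root $1 + \sqrt{2} < 2.42$; hence $L(d) = \OhOf{2.42^d}$. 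Since the recursion also branches on at most $k$ edges, $L(d) \le 3^k$, yielding $L(d) \le \min(3^k, 2.42^d)$.

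Combining the pieces, the total number of candidates produced is at most $4^k \cdot 4^k \cdot \min(3^k, 2.42^d)$, and each candidate is built and verified in $\OhOf{n}$ time. Regrouping the per-mAF factor $4^k \cdot 3^k = 12^k$ already identified in the proof of Theorem~\ref{thm:enumerate-meafs} with the mAF-count factor $\min(4^k, 2.42^d)$ yields the claimed bound $\OhOf{\min(4^k, 2.42^d) \cdot 12^k n}$ on both running time and on the number of mEAFs. Correctness reduces to showing every target mEAF $F^*$ of weight at most $d$ with at most $k+1$ components is produced: its underlying AF is one of the mAFs returned by $\alg{T_1, T_2, k}$, its choice of surviving endpoints at cut edges induces one of the edge sets $E_1$ enumerated via dead trees, and its decoration of those edges is realized by exactly one leaf of \textsf{replug-decorate*} thanks to the weight accounting above.

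The main obstacle will be pinning down this weight accounting so that the $(1,1,2)$ branching vector is justified against the definition $\weight(F) = 2(\size{F} - 1) - q(F)$: concretely, I must show that each recursive call to \textsf{replug-decorate*} both adds the correct number of $\phi$ nodes and increases $\size{F'}$ so that the actual weight increment matches the stated budget decrement. Once this is nailed down, the bounded-search-tree analysis is entirely standard, and the correctness argument transfers almost verbatim from the proof of Theorem~\ref{thm:enumerate-meafs} since only the pruning criterion of the inner subroutine has changed.
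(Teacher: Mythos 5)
Your proposal follows essentially the same route as the paper's proof: the paper likewise analyzes \textsf{replug-decorate*} as a bounded search tree with branching vector $(1,1,2)$, uses the recurrence $T(d)=2T(d-1)+T(d-2)$ (positive root $1+\sqrt{2}<2.42$) for the budget-$d$ bound, caps it by the depth-$k$ three-way branching so it ``never takes longer than the original version,'' and multiplies by the mAF and edge-set enumeration already set up for Theorem~\ref{thm:enumerate-meafs}. Your explicit count $4^k\cdot 4^k\cdot\min(3^k,2.42^d)$ and the slightly loose regrouping into $\min(4^k,2.42^d)\cdot 12^k$ mirror exactly the level of detail (and the same mild imprecision) in the paper's own brief argument, and your flagged weight-accounting step is correctly resolved by noting that a cut edge adds $2$ to $2(\size{F}-1)$ and a $\phi$-decorated endpoint subtracts $1$, giving the $(1,1,2)$ decrements.
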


\subsection{Quickly finding a single MEAF (Replug distance)}
\label{sec:alg:replug-faster}

Finally, we discuss how to quickly compute the replug distance in practice.
Our goal now is to determine whether the replug distance between two trees is at most $d$.
The minimum distance can again be found by testing increasing values of $d$ starting from 0.
We again enumerate the maximal agreement forests that can be obtained by removing at most $d$ edges with $\alg{T_1, T_2, d}$.
For each such mAF $F$, however, we now wish to find a single optimal $\phi$ node assignment, rather than each such optimal~assignment.

To determine such an optimal assignment, we again start by finding an initial set of edges $E_1$ and $E_2$ such that $T_1 \div E_1 = T_1 \div E_2 = F$.
Our first optimization stems from the observation that nontrivial $\phi$ node structures are only possible when a component $C$ of an mAF is adjacent to a dead tree in both trees $T_1$ and $T_2$.
Call a dead tree \emph{uncertain} if it is adjacent to such a component in a given tree.
When enumerating edge sets $E'_1$ and $E'_2$ that can be used to obtain $F$, we only need to enumerate combinations of dead tree edges from dead trees that are uncertain in both $T_1$ and~$T_2$.

Now, given an AF $F$ of $T_1$ and $T_2$ and edge sets $E_1$ and $E_2$ that can be used to obtain $F$ from the trees, we develop our optimal $\phi$ node assignment procedure.
The intermediate structures are illustrated in Fig.~\ref{fig:dead-tree-eaf} in the appendix.
We first identify the set of candidate agreement forest edges that can be adjacent to a $\phi$-node (\emph{candidate $\phi$-nodes}).
We next identify the set of constraints on EAFs induced by these candidates and show that such sets belong to the monotone class $\cnfletwo$~\cite{porschen2007algorithms}.
This is the class of satisfiability formulas in conjuctive normal form with no negated literals in which each literal occurs in at most 2 clauses.
As such, we reduce the problem of decorating an AF with $d-k$ $\phi$ nodes, to that of determining whether a boolean $\cnfletwo$ formula can be satisfied by an assignment with at least $d-k$ variables set to true.
Finally, we briefly describe how this latter problem can be solved in $\OhOf{k^{1.5}}$-time by finding a minimum edge cover in an equivalent clause graph~\cite{porschen2007algorithms}.

To identify the candidate $\phi$-nodes, we convert the AF $F$ to an SAF $S$.
Recall that a $\phi$-node indicates an edge endpoint which remains fixed during a set of moves transforming one tree, $T_1$, into another, $T_2$.
This implies that one socket remains fixed for each $\phi$-node decorated edge in the $T_1$ and $T_2$ configurations of $S$.
Thus, the sockets of $S$ which are connected to an edge in both $T_1$ and $T_2$ are exactly our set of candidate $\phi$-nodes.

To accomplish this SAF conversion, we must thus introduce sockets from both $T_1$ and $T_2$ and match sockets which represent nodes from both trees.
As described in the \textsf{replug} algorithm, a path of dead nodes in $T_1$ (respectively, $T_2$) between two alive nodes form a set of $T_1$ ($T_2$) sockets.
We say that these dead nodes are all on the same edge of $F$.
These sets are ambiguous in the sense that any pair consisting of a $T_1$ and $T_2$ socket that are on the same edge may be a fixed endpoint that can have a $\phi$ node.
We can identify these ambiguous sets, as well as the dead nodes which are not sockets, in linear time as previously discussed.
We must choose a mapping for each such set of $h_1$ $T_1$ sockets and $h_2$ $T_2$ sockets that map to the same alive node edge and that maintains the same orientation.
There are thus $h_1 \choose h_2$ choices for each such set (assuming $h_1 \ge h_2$), for a maximum of at most $4^{h_1} \le 4^k$ combinations that must be considered.
We further observe that dead tree nodes are not possible choices in such a mapping.
Therefore, there are at most $4^k$ SAFs that must be considered, stemming from both choices of $E_1$ and $E_2$ and socket mappings.
We expect such situations to be degenerate in practice and that most cases will induce a small number of SAFs based on our experience with rooted agreement forests.

Now, for each SAF, we identify the set of constraints induced by the candidate $\phi$-nodes.
These constraints naturally arise from the fact that we cannot assign two $\phi$-nodes to both endpoints of the same cut edge in $E_1$ of $T_1$ or $E_2$ of $T_2$, as that would imply that the edge remains fixed.
Now, suppose, without lack of generality, we have a dead tree $D$ adjacent to $n_D$ sockets of $F$ in $T_1$.
We say that an assignment of $\phi$-nodes to sockets of $S$ \emph{satisfies} $D$ if at least one of the $n_D$ sockets adjacent to $D$ is not assigned a $\phi$ node.
By Lemma~\ref{lem:dead-tree-assignment}, it suffices to find an assignment of $\phi$ nodes to $S$ that satisfy each dead component of $T_1$ and $T_2$.

\begin{restatable}{lemma}{lemdeadtreeassignment}
\label{lem:dead-tree-assignment}
	Given a socket agreement forest $S$ of two trees $T_1$ and $T_2$, an assignment of $\phi$-nodes to sockets of $S$ is an EAF of $T_1$ and $T_2$ if, and only if, the assignment satisfies every dead tree in $T_1$ and $T_2$ with respect to $S$.
\end{restatable}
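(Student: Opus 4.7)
My plan is to prove both implications by analyzing, per dead tree $D$, the $n_D - 1$ cut edges required to collapse $D$ into the underlying AF $F$, and the $\phi$-augmentations those cuts induce on sockets of $S$. The central invariant linking the EAF condition to the dead-tree condition is that any cut endpoint edge $\eedge{e}{p}$ satisfies $p \subsetneq \set{u,v}$, so each cut edge contributes at most one $\phi$-augmentation to the EAF.

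For the forward direction, I would assume the assignment $\sigma$ yields an EAF $F'$ and show that any dead tree $D$ of $T_1$ (the $T_2$ case is symmetric) with $q = n_D$ boundary sockets must be satisfied. Since $D$ has $q$ leaves and $2q-3$ edges, collapsing $D$ under the corresponding endpoint edge set $E_1$ requires cutting exactly $q-1$ of its edges. I would then argue that every $\phi$-augmented boundary socket of $D$ is attributable to a distinct such cut edge: in the SAF representation, a boundary socket of $D$ only touches connections that lie in $D$'s region, and a $\phi$-augmentation there corresponds to fixing one endpoint of such a connection. Combined with the above invariant, this bounds the number of $\phi$-augmented boundary sockets of $D$ by $q-1 < q$, so at least one boundary socket of $D$ is not $\phi$, meaning $D$ is satisfied.

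For the converse, I would explicitly construct endpoint edge sets $E_1$ in $T_1$ and $E_2$ in $T_2$ realizing $\sigma$. For each dead tree $D$ with at least one non-$\phi$ boundary socket $r$ (guaranteed by the satisfies hypothesis), I would root $D$ at $r$ and greedily select $q-1$ cut edges of $D$ so that every $\phi$-assigned boundary socket $s \ne r$ lies on the augmented side of a distinct cut edge on the $s$-to-$r$ path in $D$; remaining cuts needed to suppress internal dead nodes receive no augmentation. Performing the analogous construction in $T_2$ then gives $E_2$; the resulting $\phi$-decorated forest is by construction yielded by both $T_1 \div E_1$ and $T_2 \div E_2$, so it is an EAF.

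The main obstacle is ensuring consistency across $T_1$ and $T_2$ in the converse: each socket of $S$ represents paired endpoint positions in both trees, so the independently-built $E_1$ and $E_2$ must yield the same $X$-$\phi$-forest. I would lean on the socket-pairing invariant from the SAF conversion step (each $T_1$ socket is matched with a $T_2$ socket sitting on the same alive edge of $F$) to argue that the $\phi$-leaves produced by $E_1$ and $E_2$ attach to corresponding alive nodes in $F$, so $T_1 \div E_1$ and $T_2 \div E_2$ coincide as $X$-$\phi$-forests. I expect the forward direction's counting argument to be straightforward once the socket-connection correspondence is unpacked, while the converse's coordinated construction will require the most care.
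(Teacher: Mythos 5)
Your proposal follows essentially the same route as the paper's proof in both directions: the paper proves the EAF-to-satisfaction direction by the same pigeonhole you use (if all $n_D$ sockets adjacent to a dead tree carried $\phi$-nodes, both endpoints of one of the $n_D-1$ cut edges inside that dead tree would have to be augmented, which an endpoint edge forbids), and the satisfaction-to-EAF direction by the same per-dead-tree construction that assigns each $\phi$-decorated socket to a distinct one of the $n_D-1$ edges cut within the dead tree, carried out analogously in $T_2$. Your rooting-at-$r$ greedy edge selection and the explicit cross-tree consistency check via the socket pairing are simply more detailed renderings of the paper's terser wording, not a different argument.
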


We now show that the full set of such constraints is a boolean monotone $\cnfletwo$ formula~\cite{porschen2007algorithms}.
In particular, determining the maximum number of $\phi$ nodes that can be added to an SAF is thus equivalent to determining the minimum number of variables which must be true in such a formula---the minimum cardinality satisfiability problem.

\begin{restatable}{lemma}{lemreplugcnfletwo}
	\label{lem:replug-cnfletwo}
	The replug distance problem on an SAF $S$ of trees $T_1$ and $T_2$ can be solved by solving the minimum cardinality satisfiability problem on a boolean monotone $\cnfletwo$ formula.
\end{restatable}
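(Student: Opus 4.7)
My plan is to encode the optimal $\phi$-node assignment on a given SAF $S$ as a minimum-cardinality satisfiability problem. I would introduce, for each candidate $\phi$-node socket $s$ of $S$ (i.e., each socket connected in both the $T_1$ and $T_2$ configurations), a Boolean variable $y_s$ interpreted so that $y_s$ is true iff $s$ is \emph{not} assigned a $\phi$-node. Since an EAF $F$ underlying $S$ has weight $2(|F|-1)-q(F)$ and $|F|$ is fixed by the underlying AF of $S$, minimising the weight—which by Theorem~\ref{thm:meaf} equals the replug distance realised through $S$—reduces to maximising $q(F)$, i.e., minimising $\sum_s y_s$. This is precisely a minimum-cardinality satisfaction objective, provided the EAF-validity constraint can be expressed as a suitable Boolean formula.

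Next I would derive the clauses. By Lemma~\ref{lem:dead-tree-assignment}, a $\phi$-assignment yields an EAF of $T_1$ and $T_2$ iff every dead tree $D$ of $T_1$ and of $T_2$ with respect to $S$ is satisfied: at least one of the sockets adjacent to $D$ receives no $\phi$. In the $y$-variables this translates directly into the positive clause $\bigvee_{s\,\text{adj.}\,D} y_s$, so the conjunction of these clauses over all dead trees forms a monotone CNF formula whose satisfying assignments are in bijection with valid $\phi$-assignments. The key structural claim I then need to establish is that, once the SAF $S$ has been fixed—i.e., the edge sets $E_1$, $E_2$ and the socket matching have been chosen as in the construction preceding the lemma—every dead tree with respect to $S$ is adjacent to at most two candidate $\phi$-node sockets, so that every clause has size at most two. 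I would justify this by following the AF-to-SAF conversion: the choice of $E_1$ and $E_2$ singles out a specific set of cut edges and the socket matching pairs a unique $T_1$-socket with a unique $T_2$-socket at each augmented endpoint, so that dead trees reduce to single connections between two sockets, with any socket that is connected in only one configuration disqualified as non-candidate and dropped from the clause.

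Combining these two ingredients, the formula is monotone with clauses of size at most two, hence lies in $\cnfletwo$, and a minimum-cardinality satisfying assignment corresponds exactly to a minimum-weight EAF underlying $S$; by Theorem~\ref{thm:meaf} the cardinality of this assignment realises the replug distance through $S$. The main obstacle is the ``$n_D \le 2$'' structural claim: before the SAF is fixed, a dead tree of the underlying AF may have $q \ge 3$ leaves and so would naively give a size-$q$ clause. The heart of the proof must therefore show that the SAF construction—and in particular the socket-matching step combined with the specific $E_1$, $E_2$ choice—refines any higher-arity AF dead tree into a collection of pairwise dead segments within a single SAF, so that the ``don't double-fix a cut edge'' constraint is genuinely binary. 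Once this reduction is in hand, the $\cnfletwo$ formulation follows immediately, enabling the $\OhOf{k^{1.5}}$ minimum-edge-cover algorithm of~\cite{porschen2007algorithms} to be invoked on the clause graph.
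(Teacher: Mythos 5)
Your setup is right (variables $y_s$ meaning ``socket $s$ gets no $\phi$-node,'' weight minimisation $\leftrightarrow$ maximising the number of $\phi$-nodes, and one positive clause per dead tree via Lemma~\ref{lem:dead-tree-assignment}), but the proof then hinges on a structural claim that is both unnecessary and not true in general: that every dead tree is adjacent to at most two candidate sockets, so that every clause has size at most two. This comes from misreading the class $\cnfletwo$: the ``$\le 2$'' there bounds the number of \emph{occurrences of each variable}, not the clause width. The paper's argument keeps the clause for a dead tree $D$ as $(s_1 \cup s_2 \cup \ldots \cup s_{n_D})$ with $n_D$ arbitrary, and the only structural fact needed is that each socket borders at most one dead tree in $T_1$ and at most one in $T_2$, hence each variable appears in at most two clauses (plus the observation that clauses coming from unsaturated dead trees --- those with some adjacent socket that is not a candidate $\phi$-node --- are trivially satisfied and can be discarded outright, rather than having the non-candidate socket dropped from the clause as you suggest, which would over-constrain the assignment). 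A dead tree with $q\ge 3$ leaves adjacent to agreement-forest components in both trees really can contribute a clause of width $3$ or more even after $E_1$, $E_2$ and the socket matching are fixed, so the refinement you hope the SAF construction performs does not exist.

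The misreading also matters for the follow-up algorithm: if the class were ``monotone clauses of width $\le 2$,'' minimum-cardinality satisfiability would be monotone vertex cover, which is NP-hard, and Lemma~\ref{lem:solve-constraints} could not invoke a polynomial edge-cover routine. The edge-cover reduction works precisely because variables occur at most twice: clauses become vertices of the clause graph and each (twice-occurring) variable becomes an edge, so a minimum set of true variables covering all clauses is a minimum edge cover. Replacing your ``$n_D \le 2$'' claim with the per-socket occurrence bound (one dead tree per input tree), and handling unsaturated dead trees by deleting their clauses, turns your proposal into the paper's proof.
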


This satisfiability problem can be solved efficiently with a polynomial time algorithm.
Unlike general CNF satisfiability, which is NP-hard~\cite{porschen2007algorithms}, the minimum cardinality satisfiability problem on monotone $\cnfletwo$ formulas is equivalent to the edge covering problem on a \emph{clause graph}.

\begin{restatable}{lemma}{lemsolveconstraints}
\label{lem:solve-constraints}
Given an SAF $F$ of two trees $T_1$ and $T_2$, an EAF $F'$ of $T_1$ and $T_2$ can be computed in $\OhOf{k^{1.5}}$-time such that $F'$ has $F$ as its underlying agreement forest and $F'$ contains as many $\phi$-nodes as any such EAF.
\end{restatable}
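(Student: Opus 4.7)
The plan is to convert the problem into a graph-theoretic one via Lemma~\ref{lem:replug-cnfletwo} and then solve it with a general-graph matching algorithm. First, I would apply Lemma~\ref{lem:replug-cnfletwo} to obtain a monotone $\cnfletwo$ formula $\Phi$ whose variables are in bijection with the candidate $\phi$-node sockets of $F$, under the convention that setting a variable true means ``do not place a $\phi$-node at this socket''; a satisfying assignment corresponds to an EAF, and minimizing the number of true variables is equivalent to maximizing the number of $\phi$-nodes.

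Next I would exploit the key structural observation that any socket of $F$ is adjacent to at most one dead tree in $T_1$ and at most one dead tree in $T_2$, so each variable of $\Phi$ occurs in at most two clauses. Together with the fact that each clause has at most two literals, this lets me encode $\Phi$ as a \emph{clause graph} $G$ whose vertices are the clauses of $\Phi$, where each variable $x$ becomes the edge joining the (at most two) clauses containing $x$ (size-1 clauses and variables appearing in only one clause are accommodated by self-loops or pendant edges to an auxiliary vertex). A set $S$ of variables set to true then satisfies $\Phi$ if and only if the corresponding edge subset covers every clause-vertex of $G$, so minimum cardinality satisfiability reduces to minimum edge cover in $G$.

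Then I would apply Gallai's identity, $|\text{min edge cover}(G)| = |V(G)| - |\text{max matching}(G)|$, and compute a maximum matching in $G$ with the Micali--Vazirani $\OhOf{\sqrt{|V|}\cdot|E|}$ general matching algorithm. Because all dead trees and sockets are induced by the $\OhOf{k}$ cut edges underlying $F$, both $|V(G)|$ and $|E(G)|$ are $\OhOf{k}$, giving total running time $\OhOf{k^{1.5}}$. The desired EAF $F'$ is then read off by attaching $\phi$-nodes to exactly those sockets whose variables are false in the resulting assignment; since the underlying agreement forest is unchanged, $F'$ still has $F$ as its underlying AF.

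The main obstacle is the careful bookkeeping required when translating sockets and dead trees into clauses and variables: degenerate situations---such as a component root that is itself a socket, a dead tree whose adjacent sockets all lie in the same component of $F$, or variables appearing in fewer than two clauses---must each be mapped cleanly into the clause graph so that the edge-cover/SAT equivalence is exact. Once this translation is nailed down, the remainder is a routine invocation of Gallai's theorem and Micali--Vazirani matching.
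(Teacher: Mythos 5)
Your proposal is correct and follows essentially the same route as the paper: both reduce the $\phi$-node maximization via Lemma~\ref{lem:replug-cnfletwo} to minimum cardinality satisfiability of the monotone $\cnfletwo$ formula, encode it as a clause graph, and solve minimum edge cover through a maximum matching computed with the Micali--Vazirani algorithm on a graph of $\OhOf{k}$ size, yielding $\OhOf{k^{1.5}}$ time. The only difference is presentational (your explicit use of Gallai's identity versus the paper's ``matching plus greedy augmentation'' phrasing), which amounts to the same computation.
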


Finally, we combine the fast $\phi$-node assignment procedure of Lemma~\ref{lem:solve-constraints} with our general mAF enumeration to solve the replug distance for a pair of trees:

\begin{restatable}{theorem}{thmcomputereplug}
	\label{thm:compute-replug}
        Given two trees $T_1$ and $T_2$, an EAF $F$ of $T_1$ and $T_2$ with $\weight(F) = \dreplug{T_1, T_2}$ and $k+1$ components can be found (or determined not to exist) in:
        \begin{enumerate}
                \item $\OhOf{4^k (4^k k^{1.5} + n)}$-time, or
                \item $\OhOf{4^k n + Y k^{1.5}}$-time,
									where $Y$ is the number of candidate SAFs with at most $k+1$ components.
        \end{enumerate}
\end{restatable}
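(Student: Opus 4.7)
The plan is to combine the mAF enumeration of Theorem~\ref{thm:tbr}, the SAF construction described at the end of Section~\ref{sec:alg:replug}, and the fast $\phi$-node assignment of Lemma~\ref{lem:solve-constraints} into a single pipeline, then account for the running time carefully.

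First I would invoke $\alg{T_1, T_2, k}$, which by Theorem~\ref{thm:tbr} enumerates the at most $4^k$ maximal agreement forests of $T_1$ and $T_2$ having $k+1$ or fewer components in $\OhOf{4^k n}$ total time. For each mAF $F$ returned, I would use Lemma~\ref{lem:map-edges} in $\OhOf{n}$ time to build the forward/reverse node mappings between $F$ and the trees, identify alive nodes, sockets, and dead trees by the three preorder traversals described earlier, and then enumerate the candidate SAFs obtained by choosing matchings of $T_1$-sockets to $T_2$-sockets along each ambiguous dead path. As argued above Lemma~\ref{lem:dead-tree-assignment}, there are at most $4^k$ such SAFs per mAF.

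Next, for each candidate SAF $S$, I would apply Lemma~\ref{lem:solve-constraints} to compute in $\OhOf{k^{1.5}}$ time an EAF $F'$ whose underlying agreement forest is $S$ and whose $\phi$-node count is as large as possible; this maximizes $q(F')$ and hence minimizes $\weight(F') = 2k - q(F')$ over all EAFs sharing the same underlying AF and socket mapping. By Lemma~\ref{lem:dead-tree-assignment}, this $F'$ is guaranteed to be a valid EAF of $T_1$ and $T_2$. I would then keep the global minimum-weight EAF over all candidate SAFs over all mAFs; by Theorem~\ref{thm:meaf} this weight equals $\dreplug{T_1,T_2}$, provided an EAF with the required bound on components exists, and otherwise the algorithm reports failure.

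For the running time, enumerating mAFs costs $\OhOf{4^k n}$ by Theorem~\ref{thm:tbr}; for each of the at most $4^k$ mAFs, building mappings takes $\OhOf{n}$ and there are at most $4^k$ candidate SAFs, each processed in $\OhOf{k^{1.5}}$, giving a per-mAF cost of $\OhOf{4^k k^{1.5} + n}$ and a total of $\OhOf{4^k(4^k k^{1.5} + n)}$, which yields part~(1). For part~(2), I would simply charge the mAF enumeration at $\OhOf{4^k n}$ once globally and charge each of the $Y$ candidate SAFs its own $\OhOf{k^{1.5}}$ call to Lemma~\ref{lem:solve-constraints}, giving $\OhOf{4^k n + Y k^{1.5}}$. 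The only subtle point, and the one I would be most careful about, is verifying that the SAF enumeration above is genuinely exhaustive: that every mEAF arises from some socket-matching of some mAF and therefore the minimum over our candidates really is $\dreplug{T_1,T_2}$. I would discharge this by noting (as in Section~\ref{sec:alg:replug}) that an mEAF with more components than its underlying AF would have larger weight, so maximality of the underlying AF among candidates is without loss of generality, and that every choice of $E_1, E_2$ realizing $F$ corresponds to one of the enumerated socket matchings along dead paths.
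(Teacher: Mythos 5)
Your proposal is correct and follows essentially the same route as the paper's own proof: enumerate mAFs with $\alg{T_1,T_2,k}$ (Theorem~\ref{thm:tbr}), enumerate the at most $4^k$ candidate SAFs per mAF via the dead-tree socket matchings, and resolve each with the $\OhOf{k^{1.5}}$ $\phi$-node assignment of Lemma~\ref{lem:solve-constraints}, with correctness resting on Lemmas~\ref{lem:tbr}, \ref{lem:dead-tree-assignment}, and \ref{lem:solve-constraints}. Your extra care about exhaustiveness of the SAF enumeration mirrors the paper's observation that a minimum-weight EAF has a maximal underlying AF, so nothing essential differs.
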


Repeated applications of the subtree and chain reduction rules reduce the size of the initial trees to a linear function with respect to their SPR distance~\cite{whidden2016chain}.
We thus achieve:

\begin{restatable}{corollary}{cortimereplug}
\label{cor:time:replug}
	The replug distance for a pair of trees $T_1$ and $T_2$ can be solved in $\OhOf{16^d d^{1.5}}$-time, where $d=\dreplug{T_1, T_2}$.
\end{restatable}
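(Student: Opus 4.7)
The plan is to combine the kernelization result from \cite{whidden2016chain} with the first running-time bound of Theorem~\ref{thm:compute-replug}, and then bound the parameter $k$ (the TBR-style component count) in terms of the replug distance $d$.

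First, I would preprocess $T_1$ and $T_2$ by exhaustively applying the subtree reduction rule and the unrooted chain reduction rule. By the kernelization result cited in \cite{whidden2016chain}, this yields a reduced pair of trees $T_1', T_2'$ on $n' = \OhOf{d'}$ leaves, where $d'$ is the uSPR (and hence $\ge$ replug) distance of the reduced instance; crucially these rules preserve the replug distance, so $d' = d$. Thus after kernelization we may assume $n = \OhOf{d}$.

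Next, I would invoke Theorem~\ref{thm:compute-replug}(1) on the kernelized instance, which constructs an EAF of minimum weight $d$ (with $k+1$ components, for some $k$) in time $\OhOf{4^k(4^k k^{1.5} + n)}$. To turn this into a bound purely in $d$, I need a bound on $k$. By Theorem~\ref{thm:replug-bound} we have $\dtbr{T_1,T_2} \le \dreplug{T_1,T_2} = d$, and by the Allen--Steel identity $\dtbr{T_1,T_2} = m(T_1,T_2)-1$; since any MAF has $k+1$ components with $k$ equal to some TBR distance $\le d$, the relevant value of $k$ satisfies $k \le d$. Substituting $k \le d$ and $n = \OhOf{d}$ into the running time gives
\[
\OhOfL{4^d\parensL{4^d d^{1.5} + d}} = \OhOf{16^d d^{1.5} + 4^d d} = \OhOf{16^d d^{1.5}},
\]
which is exactly the claimed bound.

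The only delicate point, and the one I would spend the most care on, is confirming that the replug distance is preserved under both reduction rules. The subtree reduction rule is immediate since a common pendant subtree can never be split by an optimal replug sequence (Lemma~\ref{lem:replug-nice-properties}(2--3) already rules out breaking common clusters or common edges). The chain reduction rule for unrooted SPR is shown in \cite{whidden2016chain} to preserve the uSPR distance; the analogous statement for replug follows from the same argument, since the proof relies on the non-splitting behavior that Lemma~\ref{lem:replug-nice-properties} establishes for replug as well. With that in hand, everything else is a direct substitution into Theorem~\ref{thm:compute-replug}.
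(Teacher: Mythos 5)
Your overall route is the paper's route: kernelize with the subtree and chain reductions, argue the reductions preserve the replug distance, and substitute $n = \OhOf{d}$ and $k \le d$ into Theorem~\ref{thm:compute-replug}(1). However, there is a genuine flaw in how you obtain $n = \OhOf{d}$. The kernel bound you cite from~\cite{whidden2016chain} bounds the number of leaves by a constant times the \emph{uSPR} distance, and you then assert that ``these rules preserve the replug distance, so $d' = d$,'' where $d'$ is the uSPR distance of the reduced pair. Preservation of the replug distance gives $\dreplug{T_1',T_2'} = d$; it does not give $\dspr{T_1',T_2'} = d$, and since $\dspr{\cdot,\cdot} \ge \dreplug{\cdot,\cdot}$ the inequality runs the wrong way: a bound of the form $n' \le 28\,\dspr{T_1,T_2}$ does not by itself yield $n' = \OhOf{\dreplug{T_1,T_2}}$. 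The paper sidesteps this by invoking Allen and Steel's original kernel, which is stated in terms of the TBR distance ($n' \le 28\,\dtbr{T_1,T_2}$), and then using $\dtbr{T_1,T_2} \le \dreplug{T_1,T_2}$ from Theorem~\ref{thm:replug-bound}, so the inequality points the right way. Your argument can be repaired either by switching to that TBR-based bound or by additionally observing $\dspr{T_1,T_2} \le 2\,\dtbr{T_1,T_2} \le 2\,\dreplug{T_1,T_2}$, but as written the step is unjustified.

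A smaller imprecision: your bound $k \le d$ is argued via ``any MAF has $k+1$ components with $k$ equal to some TBR distance.'' The forest produced by the algorithm of Theorem~\ref{thm:compute-replug} has a \emph{maximal}, not necessarily maximum, underlying AF, so its component count minus one need not equal the TBR distance. The bound $k \le d$ does hold, but the clean justification is via the weight formula: since each cut edge contributes at most one $\phi$ node, $q(F) \le k$, hence $d = \weight(F) = 2k - q(F) \ge k$. Your treatment of the correctness of the reductions (subtree reduction via Lemma~\ref{lem:replug-nice-properties}, chain reduction by transferring the argument of~\cite{whidden2016chain} to replug moves) matches the paper's level of justification, though note the paper's chain-reduction argument rests on rerouting moves around a broken chain rather than on chains never being broken, which is not what Lemma~\ref{lem:replug-nice-properties} asserts.
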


\section{Computing the uSPR distance with a progressive A* approach}
\label{sec:astar}

We now present our fixed-parameter algorithm for computing the subtree prune-and-regraft distance between two unrooted trees, $T_1$ and $T_2$.
We first reduce the trees repeatedly by applying the subtree and chain reduction rules.
After these reductions, the trees contain at most $28\dspr{T_1, T_2}$ leaves~\cite{whidden2016chain}.
We then apply an incremental heuristic A* search~\cite{hart1968formal} beginning from $T_1$ to find $T_2$.
We could simply use standard A* with a replug distance heuristic.
However, the replug distance is relatively expensive to compute, and such an algorithm would be very slow.
Instead, we use a sequence of increasingly more accurate but expensive to compute heuristics in a method we call \emph{progressive A* search}.
The benefit of our algorithm over a standard A* search lies in the use of multiple heuristic functions, each of which provides a lower bound on its successor and is significantly less expensive to compute.
This search focuses on paths that likely lead to the target tree while avoiding expensive computation of the TBR and replug distances on each of the $\OhOf{n^2}$ neighbors of every visited tree.

We maintain a priority queue $P$ of trees that remain to be explored.
We also maintain a set $V$ of trees that have already been visited, along with their estimated distances to $T_2$ and what function was used to make that estimate.
Initially $P = \set{T_1}$.
Define the priority $p(T) = (d_T, h_T, e_T)$.
First, $d_T$ is the distance already traveled: $d_T = \dspr{T_1, T}$.
Second, $h_T$ is the estimated minimum distance from $T_1$ to $T_2$ that can be achieved by a path that visits tree $T$.
This estimate $h_T$ is $d_T + e_T(T, T_2)$, where $e_T(T, T_2)$ is the estimated minimum distance from $T$ to $T_2$ using the estimation function $e_T$.
Third, $e_T$ is one of the four estimation functions (i.e., the heuristics) used as follows.
The heuristic $\one(T, T_2) = 1$ for any tree.
The heuristic $\atbr{T, T_2}$ is the linear-time $3$-approximation algorithm for the TBR distance of Whidden and Zeh~\cite{whidden2009unifying} divided by 3 to guarantee a lower bound estimate of the TBR distance.
$\dtbr{T, T_2}$ and $\dreplug{T, T_2}$ are computed with the algorithms in the preceding section.
We impose a total order on these heuristics as described below.

Our search procedure always considers the next tree from $P$ with smallest priority according to the partial ordering of these values where for any two trees $T_i$ and $T_j$, $p(T_i)~<~p(T_j)$ iff:
\begin{enumerate}
\item $h_{T_i} < h_{T_j}$,
\item $h_{T_i} = h_{T_j}$ and $e_{T_i} < e_{T_j}$, or
\item $h_{T_i} = h_{T_j}$, $e_{T_i} = e_{T_j}$, and $d_{T_i} > d_{T_j}$.
\end{enumerate}
In other words, we prioritize the tree $T_i$ with smallest heuristic distance $h_{T_i}$.
We break $h_T$ ties using a total ordering of our heuristics $e_T$:
$\one() < \atbr{} < \dtbr{} < \dreplug{}$.
In turn, $e_T$ ties are broken by partial distances $d_T$.

Each estimator provides a lower bound on each of the successive estimators as well as the target distance $\dspr{T_i, T_2}$, which is an important condition to ensure the correctness of our progressive A* search.
We break ties by selecting the tree that is most distant from the starting position and therefore estimated to be closer to the destination tree.
Trees with equal $h_T$, $e_T$, and $d_T$ values are selected from uniformly at random.

Initially, $p(T_1) = (0, 1, \one())$ when inserting $T_1$ into $P$.
We repeatedly remove the tree $T$ from $P$ with smallest priority $p(T) = (d_T, h_T, e_T)$ and apply one of the options:
\begin{enumerate}
\item if $e_T = \one()$, reinsert $T$ with priority \\$(d_T, d_T + \atbr{T, T_2}, \atbr{})$.
\item if $e_T = \atbr{}$, reinsert $T$ with priority \\$(d_T, d_T + \dtbr{T, T_2}, \dtbr{})$.
\item if $e_T = \dtbr{}$, reinsert $T$ with priority \\$(d_T, d_T + \dreplug{T, T_2}, \dreplug{})$.
\item if $e_T = \dreplug{}$, explore each of the $\OhOf{n^2}$ trees that can be obtained from $T$ by one SPR operation and insert each such tree $t \notin V$ into $P$ with priority $(d_T + 1, 1, \one())$ and into $V$.
However, if any of $T$'s SPR neighbors are $T_2$ then we terminate the program and return the SPR distance of $d_T + 1$.
\end{enumerate}

\begin{restatable}{theorem}{thmsprdistance}
\label{thm:spr_distance}
The SPR distance between two unrooted trees $T_1$ and $T_2$ can be computed in $\OhOf{Y \, 16^d d^{1.5}}$ time, where $Y$ is the number of trees explored by the heuristic, and $d=\dspr{T_1, T_2}$.
Note that $Y=\OhOf{(28d)!!}$ after reducing $T_1$ and $T_2$.
\end{restatable}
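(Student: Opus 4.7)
The plan is to establish correctness of the progressive A* search first and then analyze the running time, leaning on Corollary~\ref{cor:time:replug} for the dominant per-tree cost.

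The first step is to verify the admissibility chain
\[
  1 \;\le\; \atbr{T, T_2}/3 \;\le\; \dtbr{T, T_2} \;\le\; \dreplug{T, T_2} \;\le\; \dspr{T, T_2}
\]
for every tree $T \neq T_2$. The first inequality is immediate; the second comes from the $3$-approximation guarantee of Whidden and Zeh; the third follows from the opening Lemma of Section~\ref{sec:prelim}; and the fourth is Theorem~\ref{thm:replug-bound}. As a consequence, at any tree $T$ the successive priorities $h_T = d_T + e_T(T, T_2)$ are non-decreasing in the index of the heuristic $e_T$, and every $h_T$ is a lower bound on the cost $d_T + \dspr{T, T_2}$ of any completed path through $T$.

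With this in hand, I would argue correctness by contradiction in the standard A* style. Suppose that at some point a tree $T$ is popped with $e_T = \dreplug{}$ and has $T_2$ as an SPR neighbor, while the optimal distance $d^* = \dspr{T_1, T_2}$ strictly undershoots $d_T + 1$. Fix an optimal SPR path $T_1 = S_0, S_1, \ldots, S_{d^*} = T_2$, and let $S_i$ be the tree of largest index on this path currently present in $P$ (such an $S_i$ exists because $T_1 \in V$ and $T_2 \notin V$). Then $d_{S_i} \le i$, and by admissibility of whatever heuristic $S_i$ currently sits at,
\[
  h_{S_i} \;\le\; d_{S_i} + \dspr{S_i, T_2} \;\le\; d^* \;<\; d_T + 1 \;\le\; d_T + \dreplug{T, T_2} \;=\; h_T,
\]
so $S_i$ would have been popped before $T$, contradicting the choice of $T$. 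Hence $d_T + 1 = d^*$. The only subtle point is that the $V$ set prevents reopening; this is harmless because each admissibility step only raises $h$, so the first time a tree is popped at the $\dreplug{}$ level its $d_T$ is already optimal by the same argument applied one step earlier.

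For the running time, each tree visit incurs at most four heuristic computations. By Corollary~\ref{cor:time:replug} the $\dreplug{}$ step costs $\OhOf{16^d d^{1.5}}$, which dominates $\OhOf{1}$ for $\one()$, $\OhOf{n}$ for $\atbr{}$, and $\OhOf{4^d n}$ for $\dtbr{}$, since after the subtree and chain reductions the trees have $n = \OhOf{d}$ leaves. Expanding the $\OhOf{n^2} = \OhOf{d^2}$ SPR neighbors and pushing them into $P$ adds only a polynomial factor per visit, so the per-tree cost is $\OhOf{16^d d^{1.5}}$ and the total cost is $\OhOf{Y \cdot 16^d d^{1.5}}$. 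The bound $Y = \OhOf{(28d)!!}$ follows because the reduced trees have at most $28d$ leaves, giving an a priori bound of that order on the distinct unrooted binary phylogenetic trees the search can enumerate.

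The main obstacle is the correctness argument rather than the runtime bookkeeping: standard A* correctness proofs assume a single admissible heuristic, whereas here we interleave four. The key observation that unlocks the proof is that re-prioritization via heuristic upgrade is monotone (priorities only rise), so the priority-queue invariant is preserved exactly as in classical A*, and the contradiction argument above carries through without modification.
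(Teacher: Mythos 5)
Your overall route matches the paper's: verify that each estimator is an admissible lower bound (Theorem~\ref{thm:replug-bound} plus the TBR chain), invoke the standard A* argument for correctness of the returned value, charge each explored tree its dominant replug-distance computation via Corollary~\ref{cor:time:replug}, and bound $Y$ by the number of unrooted binary trees on at most $28d$ leaves. However, your runtime step has a concrete gap that the paper's proof explicitly closes. Corollary~\ref{cor:time:replug} bounds the cost of computing $\dreplug{T, T_2}$ by $\OhOf{16^{d'} d'^{1.5}}$ where $d' = \dreplug{T, T_2}$ for the \emph{pair being compared}; you silently substitute $d = \dspr{T_1, T_2}$ into the exponent. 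For an arbitrary tree $T$ pulled off the queue, $\dreplug{T, T_2}$ (and $\dtbr{T, T_2}$) could a priori be much larger than $d$, and since these subroutines are exponential in the distance they compute, the per-tree bound of $\OhOf{16^d d^{1.5}}$ does not follow without an argument. The paper supplies exactly this missing piece: because some tree on an optimal path always sits in the queue with heuristic value at most $d$, the search never expands a tree $T \ne T_1$ whose estimator exceeds $\dspr{T_1, T_2}$, so no TBR or replug distance larger than $d+1$ is ever computed; this is what licenses plugging $d$ (rather than an unbounded $d'$) into Corollary~\ref{cor:time:replug}. You need to state and use this bound on the distances actually computed during the search.

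A secondary remark on correctness: your contradiction argument relies on the invariant that some tree $S_i$ on an optimal path is in $P$ with $d_{S_i} \le i$, and you acknowledge that the visited set $V$ prevents reopening, but your dismissal (``heuristic upgrades only raise $h$'') does not address the real issue, which is that a tree first generated along a longer path keeps its inflated $d_T$ forever, since the algorithm never decreases a stored $d_T$. The standard resolution would invoke consistency (the triangle inequality for $\dreplug{}$ and $\dtbr{}$ under unit-cost SPR edges) together with an argument about when trees are first generated, not mere admissibility. To be fair, the paper's own proof is equally terse on this point (``as with a standard A* search''), so this is a shared weakness rather than a divergence from the paper; the substantive difference between your write-up and the paper's proof is the missing bound on the distances computed per explored tree described above.
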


\section{Experimental Evaluation}

We implemented our algorithms in the C++ program \texttt{uspr}~\cite{uspr}.
Given two unrooted trees, this software can compute their TBR $3$-approximation-based lower bound, TBR distance, replug distance, or SPR distance.
This program was tested on the prokaryote dataset of~\cite{beiko2005highways} which compares a phylogenetic supertree constructed by the Matrix Representation with Parsimony method to 22,437 individual gene trees ranging from 4--144 taxa.
This dataset has been widely used to test methods for computing rooted SPR distances~\cite{beiko2006phylogenetic,whidden2009unifying,whidden2010fast}, as well as the sole previous software for computing uSPR distances \texttt{sprdist} by Hickey et al.~\cite{hickey2008spr}.
(Note that this software is not the identically named software by different authors for computing rooted SPR distances~\cite{wu2009practical,wu2010fast}.)
In the remainder of this paper we refer only to the software of Hickey et al. as \texttt{sprdist}.

We computed SPR distances using \texttt{uspr} and \texttt{sprdist} on a computer cluster running Ubuntu 14.04 with the SLURM cluster management software.
We allocated one Intel Xeon X5650 CPU per computation and terminated instances which required more than 4096 MB of memory or 5 hours.
We tested the 5689 gene trees with 10 or more taxa.
We used the standard practice of comparing each gene tree to the subset of the supertree with identical taxa only.

\subsection{Running time}

\begin{figure}[t]
\includegraphics[width=\columnwidth]{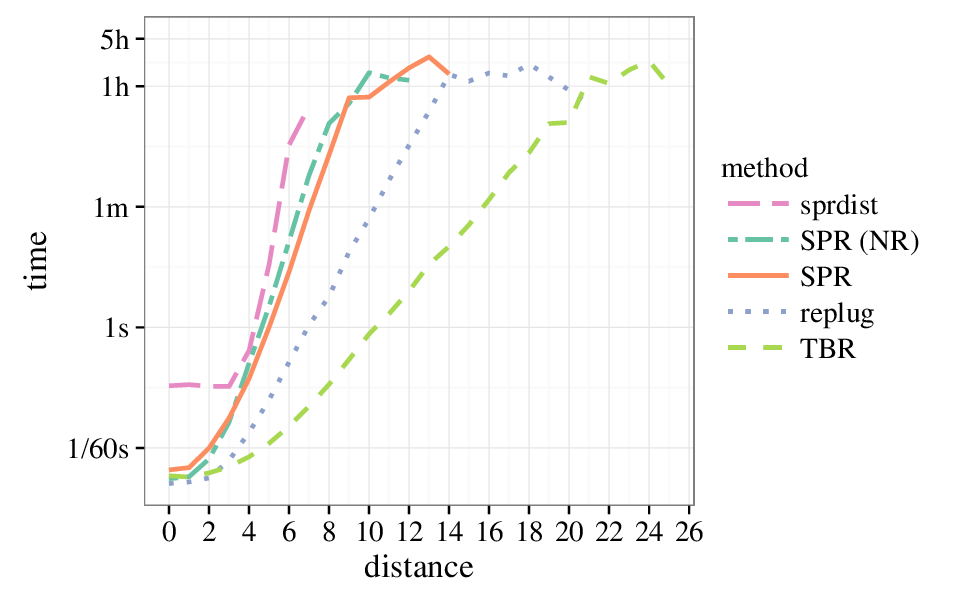}

	\caption{The mean time on a log-60 scale required as a function of distance to compute SPR distances using \texttt{sprdist} and SPR, replug, and TBR distances using \texttt{uspr}. The SPR test used all 4 bounding heuristics. The SPR (NR) test used all the bounding heuristics except the replug distance. Means were computed only over computations that succeeded for a given run within the time and memory limits.}
\label{fig:running_time}
\end{figure}

Our new algorithms allow us to compute much larger SPR distances than were previously possible, as well as compute the same distances with much less time and memory (Figure~\ref{fig:running_time}).
Note that Figure~\ref{fig:running_time} shows the mean time required by all computations that completed successfully given the time and memory limit, and summarizes different numbers of computations for different methods.
We computed SPR distances as large as 14 with \texttt{uspr}, double the maximum distance of 7 computable with \texttt{sprdist}.
\texttt{sprdist} also uses a graph exploration strategy, but without efficient heuristics to guide the search it typically reached the memory limit before the time limit was reached.
In contrast, \texttt{uspr} explores fewer trees but spends more time per tree and is therefore CPU bound and more scalable than \texttt{sprdist}.
Moreover, \texttt{uspr} found 176 instances with an SPR distance of 7 with a mean time of 53.29 seconds, nearly two orders of magnitude faster than the 33 instances with an SPR distance of 7 computed by \texttt{sprdist} with a mean time of 1808.96~seconds.

We tested our software with and without the replug distance heuristic to determine if the better distance estimate outweighed the extra computation required to compute the heuristic.
The replug distance heuristic was necessary to compute SPR distances of 13 or 14.
Our software was an average of 5x faster with the replug heuristic, taking a mean time of 20.59s compared to 98.62s with only the TBR heuristic on problem instances which both methods completed given the running time limit.
The replug heuristic greatly reduced the number of trees examined, and therefore the memory required, by a factor of about 33 to a mean of 2921.4 with the heuristic compared to 102,498.8 without the heuristic when each approach completed.

As expected, our methods can be used to compute much larger replug distances (at most 21) and TBR distances (up to 25) given 5 hours.
TBR distance computations are orders of magnitude faster than replug distance computations.
As both distances are used repeatedly during our SPR distance progressive A* search, the search time depends primarily on the time required to compute these distances as well as the number of such distances that must be computed.

\subsection{Completion}

\begin{figure}
\includegraphics[width=\columnwidth]{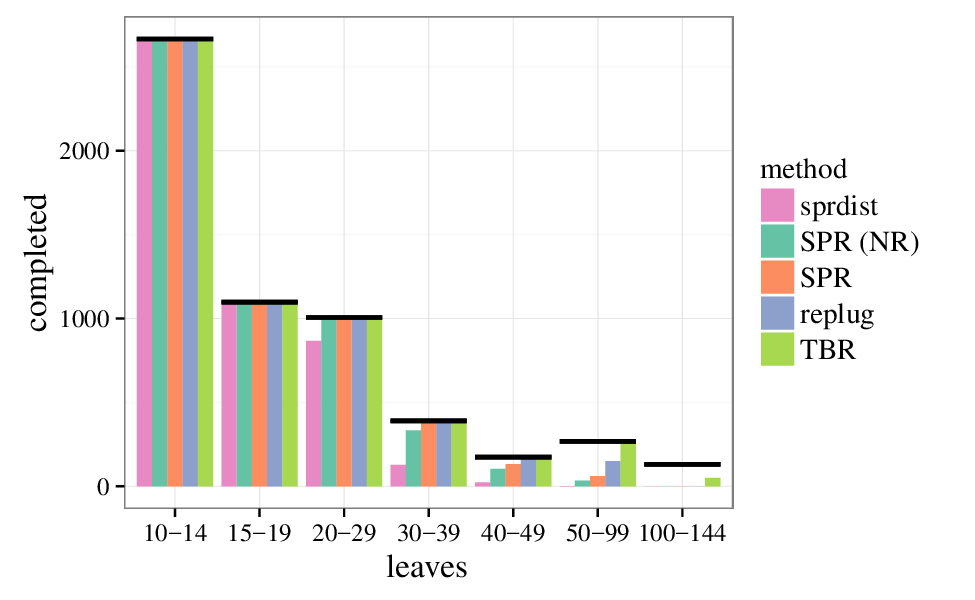}

\caption{The number of successfully computed SPR distances using \texttt{sprdist} and SPR, replug, and TBR distances using \texttt{uspr} given a time limit of 5 hours and memory limit of 4096 MB. Results are summarized by ranges of the number of leaves in the tree pairs, with the black lines indicating the total number of tree pairs in the given range.}
\label{fig:completion}
\end{figure}

Larger trees may have larger distances from a reference tree or supertree so we summarized the number of successfully computed distances for defined ranges of tree sizes (Figure~\ref{fig:completion}).
Our new SPR distance algorithm is practical for trees with up to 50 leaves.
We successfully computed distances for 132 of the 170 tree pairs with 40-49 leaves and 5142/5151 of the tree pairs with fewer than 40 leaves.
However, we were only able to compute SPR distances for 60 of the 261 tree pairs with 50-99 leaves.
This is in stark contrast to \texttt{sprdist} which can not reliably handle trees with more than 30 leaves, as it could only compute distances for 867 of the 1004 pairs of trees with 20-29 leaves, and 127 of the 386 tree pairs with 30-39 leaves.
We were able to reliably compute replug distances for trees with up to 65 leaves (114/133 successes in the 50-65 leaf range) and TBR distances for trees with up to 100 leaves (1 failure from the 261 tree pairs in the 50-99 leaf range).

\subsection{Mean distance}

\begin{figure}
\includegraphics[width=\columnwidth]{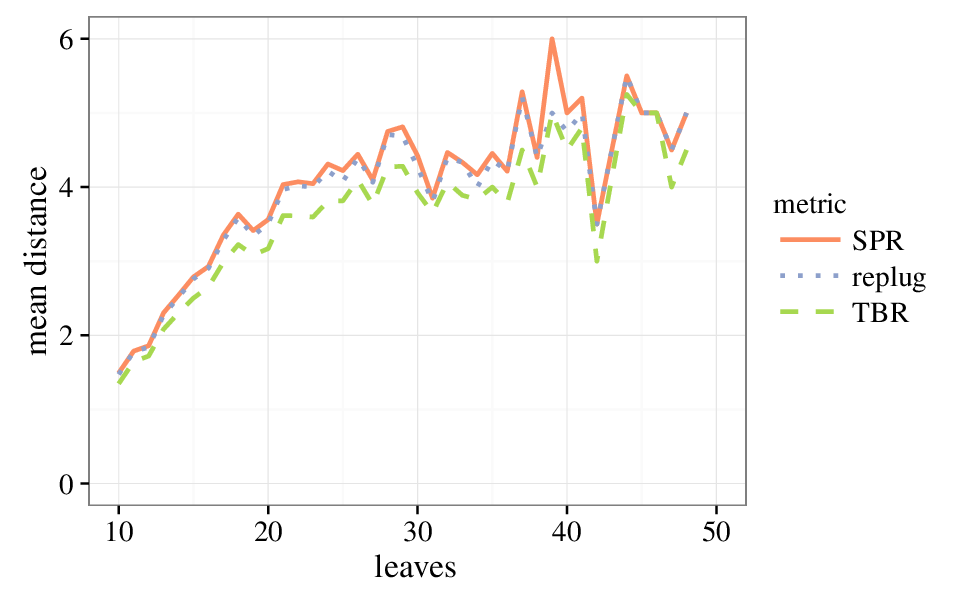}

\caption{Mean SPR, replug, and TBR distances of all tree pairs for which the SPR distance was succesfully computed using \texttt{uspr} given a time limit of 5 hours and memory limit of 4096 MB.}
\label{fig:mean_distance}
\end{figure}

Finally, we compared the mean SPR, replug, and TBR distance values for trees with a given number of leaves (Figure~\ref{fig:mean_distance}).
We computed these values only for the tree pairs for which we successfully computed the SPR distance to obtain a fair comparison.
We found that the replug distance is an excellent lower bound that closely tracks the SPR distance with a mean difference of 0.047 over 5335 tree pairs.
The TBR distance had a mean difference of 0.325 from the SPR distance and 0.278 from the replug distance.
In each of our completed tests, the TBR distance was at most 2 less than the replug distance and at most 3 less than the SPR distance, while the SPR distance never exceeded the replug distance by more than 1.

\section{Conclusions}

We have worked to extend understanding of and methods to calculate the SPR distance between unrooted trees in several directions.
We identified four properties of optimal SPR paths between unrooted trees which are atypical of NP-hard tree distance metrics that can be efficiently solved with an MAF formulation: both endpoints of an edge may be moved, the same endpoint of an edge may be moved multiple times, common clusters are not necessarily maintained, and common paths may be broken.
These observations suggest that an MAF formulation is not applicable for computing the SPR distance between unrooted trees.

To obtain an efficient search strategy we instead introduced a new lower bound on the SPR distance that we call the replug distance.
Although the computational complexity of this distance is unknown, our work shows that it is fixed parameter tractable and we conjecture that it is NP-hard to compute.
The replug distance captures important properties of the SPR distance while relaxing the requirement that intermediate structures be connected trees.
Moreover, we showed that the replug distance can be modeled using an MAF variant that we call a maximum endpoint agreement forest or MEAF.
We developed a two-phase fixed-parameter bounded search tree algorithm for the replug distance that runs in time $\OhOf{16^k(k^{1.5} + n)}$-time, where $k$ is the SPR distance between the trees and $n$ their number of leaves.
This algorithm works by exploring the set of maximal agreement forests of the trees.
Experiments suggest that these sets are typically small in practice~\cite{voorkamp2014maximal}.
Each such forest is then refined to a maximal endpoint agreement forest by solving boolean monotone $\cnfletwo$ sets of constraints.
These formulas naturally arise by considering sets of structures we call dead components in the trees given an mAF, and can be solved in polynomial time when converted to the minimimum edge cover problem of an appropriate constraint~graph.

Finally, we developed a new incremental heuristic search algorithm that we call progressive A* search.
This algorithm expands the search for a given tree outward from the initial tree by applying increasingly expensive but more accurate lower bound estimators.
The algorithm is applicable to any search problem that admits such a set of estimators, each of which is a lower bound on the next.
Progressive A* search computes the SPR distance $d$ between two unrooted trees in $\OhOf{(28d)!!16^dd^{1.5}}$-time.
Our implementation in the $\texttt{uspr}$ software package uses a TBR approximation, TBR distance, and replug distance as lower bounds.
Our results show that $\texttt{uspr}$ is nearly two orders of magnitude faster than the previous best software for computing SPR distances between unrooted trees.
In particular, our methods double the maximum SPR distance that can be computed given 5 hours from 7 to 14, and increase the size of trees that can be reliably compared from 30 to 50 leaves.
Moreover, our implementation of the replug and TBR distance metrics can handle distances as large as 21 and 25, respectively.
The replug and TBR algorithms were able to reliably handle trees with up to 65 and 100 leaves, respectively.

The development of initial fixed-parameter bounded search tree algorithms for the SPR distance between rooted trees quickly led to the current state of the art algorithms which can handle distances of 100 or more on trees with hundreds of leaves in only fractions of a second.
These improvements came from a combination of structural insights leading to improved branching rules and new reductions such as the cluster reduction rule which splits the compared trees into independently comparable subtrees.
Although unrooted trees are not clusterable with respect to the unrooted SPR distance, we conjecture that they are clusterable with respect to the replug distance.
As faster algorithms for the TBR and replug distance will immediately reduce the time required by our progressive A* search framework, we believe that refining these algorithms represents the best strategy for futher reducing the time required to compute SPR distances between unrooted trees.
In addition, techniques from the incremental heuristic search literature may also lead to improved algorithms for computing SPR distances.
Alternatively, further structural analysis may lead to a direct refinement procedure from MEAFs to unrooted SPR paths, representing another branch of study for future work.
Finally, it remains to extend our methods to nonbinary trees or to comparing sets of more than two trees, which are active avenues of research with respect to rooted trees.

\section*{Acknowledgments}

This work was funded by National Science Foundation award 1223057 and 1564137.
C. Whidden is a Simons Foundation Fellow of the Life Sciences Research Foundation.
The research of F. Matsen was supported in part by a Faculty Scholar grant from the Howard Hughes Medical Institute and the Simons Foundation.

\bibliographystyle{IEEEtran}
\bibliography{uspr}

\begin{thebibliography}{10}
\providecommand{\url}[1]{#1}
\csname url@samestyle\endcsname
\providecommand{\newblock}{\relax}
\providecommand{\bibinfo}[2]{#2}
\providecommand{\BIBentrySTDinterwordspacing}{\spaceskip=0pt\relax}
\providecommand{\BIBentryALTinterwordstretchfactor}{4}
\providecommand{\BIBentryALTinterwordspacing}{\spaceskip=\fontdimen2\font plus
\BIBentryALTinterwordstretchfactor\fontdimen3\font minus
  \fontdimen4\font\relax}
\providecommand{\BIBforeignlanguage}[2]{{%
\expandafter\ifx\csname l@#1\endcsname\relax
\typeout{** WARNING: IEEEtran.bst: No hyphenation pattern has been}%
\typeout{** loaded for the language `#1'. Using the pattern for}%
\typeout{** the default language instead.}%
\else
\language=\csname l@#1\endcsname
\fi
#2}}
\providecommand{\BIBdecl}{\relax}
\BIBdecl

\bibitem{hillis96}
D.~M. Hillis, C.~Moritz, and B.~K. Mable, Eds., \emph{Molecular
  Systematics}.\hskip 1em plus 0.5em minus 0.4em\relax Sinauer Associates,
  1996.

\bibitem{koonin2015turbulent}
E.~V. Koonin, ``The turbulent network dynamics of microbial evolution and the
  statistical tree of life,'' \emph{Journal of molecular evolution}, pp. 1--7,
  2015.

\bibitem{castro2012evolution}
E.~Castro-Nallar, M.~P{\'e}rez-Losada, G.~F. Burton, and K.~A. Crandall, ``The
  evolution of {HIV}: inferences using phylogenetics,'' \emph{Molecular
  phylogenetics and evolution}, vol.~62, no.~2, pp. 777--792, 2012.

\bibitem{haynes2012b}
B.~F. Haynes, G.~Kelsoe, S.~C. Harrison, and T.~B. Kepler, ``B-cell-lineage
  immunogen design in vaccine development with {HIV-1} as a case study,''
  \emph{Nature biotechnology}, vol.~30, no.~5, pp. 423--433, 2012.

\bibitem{helmus2007phylogenetic}
M.~R. Helmus, T.~J. Bland, C.~K. Williams, and A.~R. Ives, ``Phylogenetic
  measures of biodiversity,'' \emph{The American Naturalist}, vol. 169, no.~3,
  pp. E68--E83, 2007.

\bibitem{matsen2015phylogenetics}
F.~A. Matsen, ``Phylogenetics and the human microbiome,'' \emph{Systematic
  Biology}, vol.~64, no.~1, pp. e26--e41, 2015.

\bibitem{galtier2008dealing}
N.~Galtier and V.~Daubin, ``Dealing with incongruence in phylogenomic
  analyses,'' \emph{Philosophical Transactions of the Royal Society B:
  Biological Sciences}, vol. 363, no. 1512, pp. 4023--4029, 2008.

\bibitem{beiko2005highways}
R.~G. Beiko, T.~J. Harlow, and M.~A. Ragan, ``Highways of gene sharing in
  prokaryotes,'' \emph{Proceedings of the National Academy of Sciences of the
  United States of America}, vol. 102, no.~40, pp. 14\,332--14\,337, 2005.

\bibitem{whidden2014supertrees}
C.~Whidden, N.~Zeh, and R.~G. Beiko, ``Supertrees based on the subtree
  prune-and-regraft distance,'' \emph{Systematic Biology}, vol.~63, no.~4, pp.
  566--581, 2014.

\bibitem{pisani2007supertrees}
D.~Pisani, J.~A. Cotton, and J.~O. McInerney, ``Supertrees disentangle the
  chimerical origin of eukaryotic genomes,'' \emph{Molecular Biology and
  Evolution}, vol.~24, no.~8, pp. 1752--1760, 2007.

\bibitem{Steel2008-pn}
\BIBentryALTinterwordspacing
M.~Steel and A.~Rodrigo, ``Maximum likelihood supertrees,'' \emph{Systematic
  Biology}, vol.~57, no.~2, pp. 243--250, Apr. 2008. [Online]. Available:
  \url{http://dx.doi.org/10.1080/10635150802033014}
\BIBentrySTDinterwordspacing

\bibitem{bansal2010robinson}
M.~S. Bansal, J.~G. Burleigh, O.~Eulenstein, and D.~Fern{\'a}ndez-Baca,
  ``Robinson-{F}oulds supertrees.'' \emph{Algorithms for Molecular Biology},
  vol.~5, no.~18, pp. 1--12, 2010.

\bibitem{robinson81}
D.~F. Robinson and L.~R. Foulds, ``Comparison of phylogenetic trees,''
  \emph{Mathematical Biosciences}, vol.~53, no. 1-2, pp. 131--147, 1981.

\bibitem{day85}
W.~H.~E. Day, ``Optimal algorithms for comparing trees with labeled leaves,''
  \emph{Journal of Classification}, vol.~2, no.~1, pp. 7--28, 1985.

\bibitem{brodal2004computing}
G.~S. Brodal, R.~Fagerberg, and C.~N. Pedersen, ``Computing the quartet
  distance between evolutionary trees in time {O} (n log n),''
  \emph{Algorithmica}, vol.~38, no.~2, pp. 377--395, 2004.

\bibitem{owen2011fast}
M.~Owen and J.~S. Provan, ``A fast algorithm for computing geodesic distances
  in tree space,'' \emph{IEEE/ACM Transactions on Computational Biology and
  Bioinformatics (TCBB)}, vol.~8, no.~1, pp. 2--13, 2011.

\bibitem{baroni05}
M.~Baroni, S.~Gr{\"u}newald, V.~Moulton, and C.~Semple, ``Bounding the number
  of hybridisation events for a consistent evolutionary history,''
  \emph{Journal of Mathematical Biology}, vol.~51, no.~2, pp. 171--182, 2005.

\bibitem{Bruen2008-wx}
\BIBentryALTinterwordspacing
T.~C. Bruen and D.~Bryant, ``Parsimony via consensus,'' \emph{Systematic
  Biology}, vol.~57, no.~2, pp. 251--256, 1~Apr. 2008. [Online]. Available:
  \url{http://sysbio.oxfordjournals.org/content/57/2/251.abstract}
\BIBentrySTDinterwordspacing

\bibitem{kelk2014complexity}
S.~Kelk and M.~Fischer, ``On the complexity of computing {MP} distance between
  binary phylogenetic trees,'' \emph{arXiv preprint arXiv:1412.4076}, 2014.

\bibitem{moulton2015parsimony}
V.~Moulton and T.~Wu, ``A parsimony-based metric for phylogenetic trees,''
  \emph{Advances in Applied Mathematics}, vol.~66, pp. 22--45, 2015.

\bibitem{beiko2006phylogenetic}
R.~G. Beiko and N.~Hamilton, ``Phylogenetic identification of lateral genetic
  transfer events,'' \emph{BMC Evolutionary Biology}, vol.~6, no.~1, p.~15,
  2006.

\bibitem{maddison97}
W.~P. Maddison, ``Gene trees in species trees,'' \emph{Systematic Biology},
  vol.~46, no.~3, pp. 523--536, 1997.

\bibitem{nakhleh05}
L.~Nakhleh, T.~Warnow, C.~R. Lindner, and K.~St.~John, ``Reconstructing
  reticulate evolution in species---theory and practice,'' \emph{Journal of
  Computational Biology}, vol.~12, no.~6, pp. 796--811, 2005.

\bibitem{Price2010-fi}
M.~N. Price, P.~S. Dehal, and A.~P. Arkin, ``{FastTree} 2--approximately
  maximum-likelihood trees for large alignments,'' \emph{PLoS One}, vol.~5,
  no.~3, p. e9490, 2010.

\bibitem{Stamatakis2006-yz}
A.~Stamatakis, ``{RAxML-VI-HPC}: maximum likelihood-based phylogenetic analyses
  with thousands of taxa and mixed models,'' \emph{Bioinformatics}, vol.~22,
  no.~21, pp. 2688--2690, 23~Aug. 2006.

\bibitem{Ronquist2012-hi}
\BIBentryALTinterwordspacing
F.~Ronquist, M.~Teslenko, P.~van~der Mark, D.~L. Ayres, A.~Darling,
  S.~H{\"{o}}hna, B.~Larget, L.~Liu, M.~A. Suchard, and J.~P. Huelsenbeck,
  ``{MrBayes} 3.2: efficient {Bayesian} phylogenetic inference and model choice
  across a large model space,'' \emph{Systematic Biology}, vol.~61, no.~3, pp.
  539--542, 22~Feb. 2012. [Online]. Available:
  \url{http://dx.doi.org/10.1093/sysbio/sys029}
\BIBentrySTDinterwordspacing

\bibitem{bouckaert2014beast}
R.~Bouckaert, J.~Heled, D.~K{\"u}hnert, T.~Vaughan, C.-H. Wu, D.~Xie, M.~A.
  Suchard, A.~Rambaut, and A.~J. Drummond, ``{BEAST} 2: a software platform for
  {Bayesian} evolutionary analysis,'' \emph{PLoS Computational Biology},
  vol.~10, no.~4, p. e1003537, 2014.

\bibitem{atkins2015extremal}
\BIBentryALTinterwordspacing
R.~Atkins and C.~McDiarmid, ``Extremal distances for subtree transfer
  operations in binary trees,'' \emph{arXiv preprint arXiv:1509.00669}, 2015.
  [Online]. Available: \url{http://arxiv.org/abs/1509.00669}
\BIBentrySTDinterwordspacing

\bibitem{Ding2011-bj}
Y.~Ding, S.~Gr{\"{u}}newald, and P.~J. Humphries, ``On agreement forests,''
  \emph{Journal of Combinatorial Theory, Series A}, vol. 118, no.~7, pp.
  2059--2065, Oct. 2011.

\bibitem{lakner2008efficiency}
C.~Lakner, P.~Van Der~Mark, J.~P. Huelsenbeck, B.~Larget, and F.~Ronquist,
  ``Efficiency of {Markov chain Monte Carlo} tree proposals in {Bayesian}
  phylogenetics,'' \emph{Systematic Biology}, vol.~57, no.~1, pp. 86--103,
  2008.

\bibitem{hohna2012guided}
S.~H{\"o}hna and A.~J. Drummond, ``Guided tree topology proposals for
  {Bayesian} phylogenetic inference,'' \emph{Systematic Biology}, vol.~61,
  no.~1, pp. 1--11, 2012.

\bibitem{whidden2015quantifying}
\BIBentryALTinterwordspacing
C.~Whidden and F.~A. Matsen~IV, ``Quantifying {MCMC} exploration of
  phylogenetic tree space,'' \emph{Systematic Biology}, vol.~64, no.~3, pp.
  472--491, 27~Jan. 2015. [Online]. Available:
  \url{http://dx.doi.org/10.1093/sysbio/syv006}
\BIBentrySTDinterwordspacing

\bibitem{bordewich07}
M.~Bordewich and C.~Semple, ``Computing the minimum number of hybridization
  events for a consistent evolutionary history,'' \emph{Discrete Applied
  Mathematics}, vol. 155, no.~8, pp. 914--928, 2007.

\bibitem{bordewich2005computational}
------, ``On the computational complexity of the rooted subtree prune and
  regraft distance,'' \emph{Annals of Combinatorics}, vol.~8, no.~4, pp.
  409--423, 2005.

\bibitem{hickey2008spr}
G.~Hickey, F.~Dehne, A.~Rau-Chaplin, and C.~Blouin, ``{SPR} distance
  computation for unrooted trees,'' \emph{Evolutionary Bioinformatics Online},
  vol.~4, p.~17, 2008.

\bibitem{whidden2010fast}
C.~Whidden, R.~G. Beiko, and N.~Zeh, ``Fast {FPT} algorithms for computing
  rooted agreement forests: theory and experiments,'' in \emph{Experimental
  algorithms}.\hskip 1em plus 0.5em minus 0.4em\relax Springer, 2010, pp.
  141--153.

\bibitem{dudas2014reassortment}
G.~Dudas, T.~Bedford, S.~Lycett, and A.~Rambaut, ``Reassortment between
  {Influenza B} lineages and the emergence of a co-adapted {PB1-PB2-HA} gene
  complex,'' \emph{Molecular biology and evolution}, p. msu287, 2014.

\bibitem{whidden2015ricci}
C.~Whidden and F.~A.~M. IV, ``Ricci-{O}llivier curvature of the rooted
  phylogenetic subtree-prune-regraft graph,'' \emph{proceedings of the
  Thirteenth Workshop on Analytic Algorithmics and Combinatorics (ANALCO16)},
  2016.

\bibitem{hein96}
J.~Hein, T.~Jiang, L.~Wang, and K.~Zhang, ``On the complexity of comparing
  evolutionary trees,'' \emph{Discrete Applied Mathematics}, vol.~71, no. 1-3,
  pp. 153--169, 1996.

\bibitem{allen01}
B.~L. Allen and M.~Steel, ``Subtree transfer operations and their induced
  metrics on evolutionary trees,'' \emph{Ann. Comb.}, vol.~5, no.~1, pp. 1--15,
  2001.

\bibitem{wu2009practical}
Y.~Wu, ``A practical method for exact computation of subtree prune and regraft
  distance,'' \emph{Bioinformatics}, vol.~25, no.~2, pp. 190--196, 2009.

\bibitem{bonet2009efficiently}
M.~L. Bonet and K.~St.~John, ``Efficiently calculating evolutionary tree
  measures using {SAT},'' in \emph{Proceedings of the 12th International
  Conference on Theory and Applications of Satisfiability Testing}, ser.
  Lecture Notes in Computer Science, vol. 5584.\hskip 1em plus 0.5em minus
  0.4em\relax Springer-Verlag, 2009, pp. 4--17.

\bibitem{whidden2013fixed}
C.~Whidden, R.~G. Beiko, and N.~Zeh, ``Fixed-{P}arameter algorithms for maximum
  agreement forests,'' \emph{SIAM J. Comput.}, vol.~42, no.~4, pp. 1431--1466,
  2013.

\bibitem{shi2014improved}
F.~Shi, J.~You, and Q.~Feng, ``Improved approximation algorithm for maximum
  agreement forest of two trees,'' in \emph{Frontiers in Algorithmics}.\hskip
  1em plus 0.5em minus 0.4em\relax Springer, 2014, pp. 205--215.

\bibitem{chen2013faster}
Z.-Z. Chen, Y.~Fan, and L.~Wang, ``Faster exact computation of {rSPR}
  distance,'' \emph{Journal of Combinatorial Optimization}, vol.~29, no.~3, pp.
  605--635, 2013.

\bibitem{whidden2015multifurcating}
C.~Whidden, R.~G. Beiko, and N.~Zeh, ``Fixed-{Parameter} and approximation
  algorithms for maximum agreement forests of multifurcating trees,''
  \emph{Algorithmica}, pp. 1--36, 2015, doi: 10.1007/s00453-015-9983-z.

\bibitem{chen2015parameterized}
J.~Chen, J.-H. Fan, and S.-H. Sze, ``Parameterized and approximation algorithms
  for maximum agreement forest in multifurcating trees,'' \emph{Theoretical
  Computer Science}, vol. 562, pp. 496--512, 2015.

\bibitem{shi2014approximation}
F.~Shi, J.~Chen, Q.~Feng, and J.~Wang, ``Approximation algorithms for maximum
  agreement forest on multiple trees,'' in \emph{Computing and Combinatorics:
  20th International Conference, COCOON 2014, Atlanta, GA, USA, August 4-6,
  2014, Proceedings}, vol. 8591.\hskip 1em plus 0.5em minus 0.4em\relax
  Springer, 2014, p. 381.

\bibitem{linz2011cluster}
S.~Linz and C.~Semple, ``A cluster reduction for computing the subtree distance
  between phylogenies,'' \emph{Annals of Combinatorics}, vol.~15, no.~3, pp.
  465--484, 2011.

\bibitem{whidden2016chain}
C.~Whidden and F.~A. Matsen~IV, ``Chain reduction preserves the unrooted
  subtree prune-and-regraft distance,'' \emph{arXiv preprint arXiv:1611.02351},
  2016.

\bibitem{bonet2010complexity}
M.~L. Bonet and K.~St~John, ``On the complexity of {uSPR} distance,''
  \emph{IEEE/ACM Transactions on Computational Biology and Bioinformatics
  (TCBB)}, vol.~7, no.~3, pp. 572--576, 2010.

\bibitem{shi2013parameterized}
F.~Shi, J.~Chen, Q.~Feng, and J.~Wang, ``Paramaterized algorithms for maximum
  agreement forest on multiple trees,'' in \emph{Computing and Combinatorics:
  19th International Conference, COCOON 2013, Hangzhou, China, June 21-23,
  2013. Proceedings}, vol. 8591.\hskip 1em plus 0.5em minus 0.4em\relax
  Springer Berlin Heidelberg, 2013, pp. 567--578.

\bibitem{baroni2006hybrids}
M.~Baroni, C.~Semple, and M.~Steel, ``Hybrids in real time,'' \emph{Systematic
  Biology}, vol.~55, no.~1, pp. 46--56, 2006.

\bibitem{whidden2009unifying}
C.~Whidden and N.~Zeh, ``A unifying view on approximation and {FPT} of
  agreement forests,'' in \emph{Proceedings of the 9th International Workshop,
  WABI 2009}, ser. Lecture Notes in Bioinformatics, vol. 5724.\hskip 1em plus
  0.5em minus 0.4em\relax Springer-Verlag, 2009, pp. 390--401.

\bibitem{chen2013parameterized}
J.~Chen, J.-H. Fan, and S.-H. Sze, ``Parameterized and approximation algorithms
  for the {MAF} problem in multifurcating trees,'' in \emph{Graph-Theoretic
  Concepts in Computer Science}.\hskip 1em plus 0.5em minus 0.4em\relax
  Springer, 2013, pp. 152--164.

\bibitem{voorkamp2014maximal}
J.~Voorkamp, ``Maximal acyclic agreement forests,'' \emph{Journal of
  Computational Biology}, vol.~21, no.~10, pp. 723--731, 2014.

\bibitem{porschen2007algorithms}
S.~Porschen and E.~Speckenmeyer, ``Algorithms for variable-weighted {2-SAT} and
  dual problems,'' in \emph{Theory and Applications of Satisfiability
  Testing--SAT 2007}.\hskip 1em plus 0.5em minus 0.4em\relax Springer, 2007,
  pp. 173--186.

\bibitem{hart1968formal}
P.~E. Hart, N.~J. Nilsson, and B.~Raphael, ``A formal basis for the heuristic
  determination of minimum cost paths,'' \emph{Systems Science and Cybernetics,
  IEEE Transactions on}, vol.~4, no.~2, pp. 100--107, 1968.

\bibitem{uspr}
C.~Whidden, ``uspr,'' \url{https://github.com/cwhidden/uspr}, 2015.

\bibitem{wu2010fast}
Y.~Wu and J.~Wang, ``Fast computation of the exact hybridization number of two
  phylogenetic trees,'' in \emph{Bioinformatics Research and
  Applications}.\hskip 1em plus 0.5em minus 0.4em\relax Springer, 2010, pp.
  203--214.

\bibitem{garey2002computers}
M.~R. Garey and D.~S. Johnson, \emph{Computers and intractability}.\hskip 1em
  plus 0.5em minus 0.4em\relax {W. H.} Freeman, 1979.

\bibitem{micali1980v}
S.~Micali and V.~V. Vazirani, ``An ${O}(\sqrt{V} {E})$ algorithm for finding
  maximum matching in general graphs,'' in \emph{Foundations of Computer
  Science, 1980., 21st Annual Symposium on}.\hskip 1em plus 0.5em minus
  0.4em\relax IEEE Computer Society Press, 1980, pp. 17--27.

\end{thebibliography}

%
%

\clearpage

\beginsupplement

\appendices

\section{Proofs and Supplemental Figures}

\subsection{Additional definitions}

We first specify some definitions that are only used in the following proofs.

A set of four leaves $\{a,b,c,d\}$ of an unrooted tree $T$ form a \emph{quartet} $\quartet{ab}{cd}$ when the path from $a$ to $b$ and the path from $c$ to $d$ are vertex-disjoint.
A forest contains each of the quartets of its individual component trees.
Given a tree $T$ and a forest $F$, we say a quartet $\quartet{ab}{cd}$ of $T$ is \emph{incompatible} with $F$ if its leaves do not all belong to the same component of $F$ or define a different quartet in $F$ (e.g. $\quartet{ac}{bd}$).
Whidden and Zeh~\cite{whidden2009unifying} observed that:

\begin{obs}
\label{obs:incompatible}
Let $T_1$ and $T_2$ be unrooted $X$-trees, $F$ a forest of $T_2$ and $E$ a set of edges of $F$ such that $F - E$ yields an agreement forest of $T_1$ and $T_2$.
If $\quartet{ab}{cd}$ is a quartet of $T_1$ incompatible with $F$, then either $a \noreach{F-E} b$, $a \noreach{F-E} c$, or $c \noreach{F-E} d$.
\end{obs}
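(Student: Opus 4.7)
The plan is to argue by contradiction, assuming simultaneously that $a \reach{F-E} b$, $a \reach{F-E} c$, and $c \reach{F-E} d$. Since reachability is an equivalence relation on the nodes of a forest, these three relations imply that $a$, $b$, $c$, and $d$ all lie in the same connected component of $F - E$, and hence also in the same component of the yielded forest $F \div E$.

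The key structural fact I would exploit is that cutting edges from a forest and then suppressing unlabeled degree-$\le 2$ nodes cannot alter the quartet defined by any four leaves that survive in the same component. Removing an edge either lies off the unique path joining two of these leaves, in which case that path is untouched, or lies on it, in which case the two leaves become disconnected; node suppression merely concatenates consecutive edges along preserved paths. Consequently, the quartet defined by $\{a,b,c,d\}$ in the component of $F \div E$ containing them is exactly the quartet defined by $\{a,b,c,d\}$ in their common component of $F$.

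Now I would invoke the hypothesis that $F \div E$ is an agreement forest of $T_1$ and $T_2$. The restriction of $F \div E$ to the label set $\{a,b,c,d\}$ must then coincide with the restriction of $T_1$ to $\{a,b,c,d\}$; since $\quartet{ab}{cd}$ is a quartet of $T_1$ by assumption, this restriction is precisely the single quartet $\quartet{ab}{cd}$. Combining this with the previous paragraph, $\{a,b,c,d\}$ lie in one component of $F$ and define the quartet $\quartet{ab}{cd}$ there. But this directly contradicts the hypothesis that $\quartet{ab}{cd}$ is incompatible with $F$.

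The main obstacle, such as it is, lies in carefully justifying that cutting an edge set and then yielding the forest preserves the quartet on any four still-connected leaves; everything else is definition chasing. I would dispatch this either by a short inductive argument on $|E|$, cutting one edge at a time and observing that each cut either leaves all four leaves in a common component with the same induced quartet or else separates at least one pair, or alternatively by appealing directly to the uniqueness of paths in a tree to characterize the quartet of four leaves as determined by the intersection pattern of their pairwise paths, which is manifestly monotone under edge deletion.
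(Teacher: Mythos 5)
Your proof is correct. The paper states this observation without proof, attributing it to Whidden and Zeh, and your contradiction argument---that connectivity of all four leaves in $F-E$ forces the quartet $\quartet{ab}{cd}$ to persist both in $F$ and in the yielded agreement forest (since cutting edges and suppressing degree-$\le 2$ unlabeled nodes cannot change the quartet of four leaves that remain in a common component), contradicting incompatibility with $F$---is exactly the standard argument behind that citation.
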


\subsection{Proofs}

\lemreplugniceproperties*
\begin{proof}
	Consider an SAF $F$ of $T_1$ and $T_2$ that permits $M$.
	Assume, for the purpose of obtaining a contradiction, the opposite of the first claim: that two distinct moves $m_i$ and $m_j$ move the same endpoint $u$ of a connection $c$ of $F$.
	We choose $i$ and $j$ such that no move $m_k$ moves endpoint $u$ of $c$, for all $i < k < j$.
	Consider the sequence of replug moves $M' = m_1, m_2, \ldots, m_{i-1}, m_{i+1}, m_{i+2}, \ldots, m_k$.
	By Observation~\ref{obs:replug-independence} and the fact that no moves of $M$ between $m_i$ and $m_j$ move endpoint $u$, this is a valid sequence of replug moves at least through move $m_{j-1}$.
	Moreover, $m_j$ is then a valid move that results in the same forest as applying $m_j$ in sequence $M$.
	Therefore applying the remainder of the sequence results in the same tree, which implies that $M$ and $M'$ both result in tree $T_2$.
	The fact that $M'$ is a smaller sequence contradicts the optimality of $M$, proving the claim.

	The second and third claims follow similarly, by substituting moves that break and reform a common cluster and common path for $m_i$ and $m_j$ in the above argument.
\end{proof}

\begin{figure}
\includegraphics[width=\columnwidth]{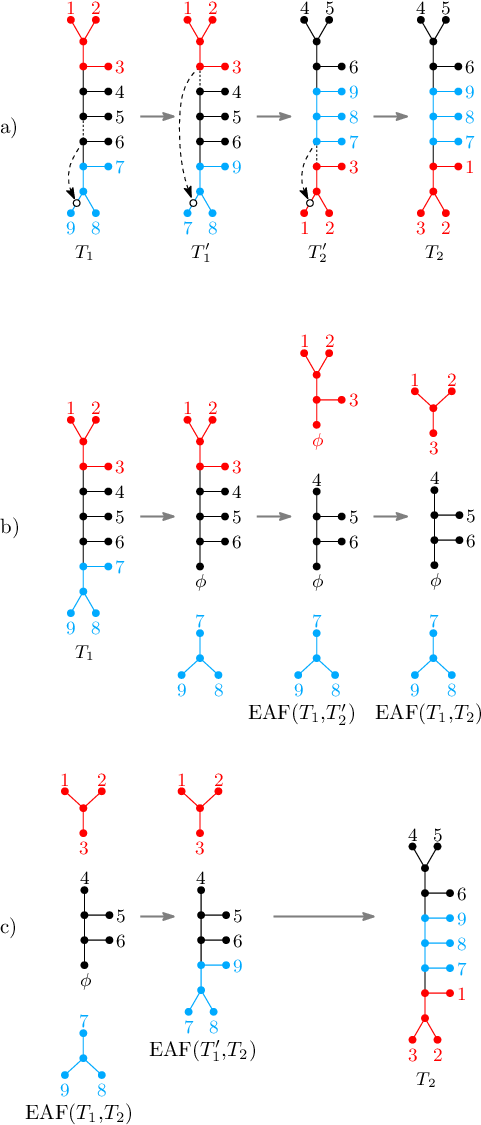}

	\caption{The conversion processes from Theorem~\ref{thm:meaf}. (a) The trees from Figure~\ref{fig:three-spr}.
	(b) An EAF is built iteratively from each replug move by cutting the changed edge and attaching a $\phi$ node to the fixed endpoint (tail of the dashed arrow from (a)).
	If the other endpoint is a $\phi$ node then both sides of the edge have moved so the $\phi$ node is removed and none added.
	(c) Given just the EAF, a start tree and an end tree, a minimal replug path can be found by applying replug moves that merge components of the EAF. If there is a $\phi$ node we apply a replug move and merge two components. When no $\phi$ nodes remain, we merge components by applying TBR moves (i.e. two replug moves).}
\label{fig:meaf}
\end{figure}

\thmmeaf*
\begin{proof}
We prove this by induction.
We first show that $\dreplug{T_1, T_2} \ge \weight(T_1, T_2)$ by induction on $\dreplug{T_1, T_2}$.
	To do so, given an optimal sequence of replug moves transforming $T_1$ into $T_2$ (e.g. Fig~\ref{fig:meaf}(a)), we iteratively construct an EAF $F$ of $T_1$ and $T_2$ such that $\weight(F) = \dreplug{T_1, T_2}$ (Fig~\ref{fig:meaf}(b)).
Because cutting an endpoint edge set strictly increases the weight of a forest, $\weight(T_1, T_2) = 0$ if and only if $T_1 = T_2$ (i.e. $\dreplug{T_1, T_2} = 0$).
This forms the base case of the induction.

Suppose that our claim holds for every pair of trees with a replug distance less than $d$ and that $\dreplug{T_1, T_2} = d$.
Moreover, let $m_1, m_2, \ldots, m_d$ be a sequence of replug moves that transform $T_1$ into $T_2$.
If $d=1$ then let $T'_2 = T_1$.
	Otherwise, let $T'_2$ be the result of applying the sequence of moves $m_1, m_d, \ldots, m_{d-1}$ to $T_1$, so that $m_d$ transforms $T'_2$ into $T_2$.
	Thus, by the inductive hypothesis, there exists an EAF $F'$ of $T_1$ and $T'_2$ such that $\weight(F') = d-1$. For example, $F'$ is labeled EAF($T_1$, $T_2'$) in Fig~\ref{fig:meaf}(b).
Let $e = (u,v) $ be the edge moved by $m_d$, where $u$ is the endpoint of $e$ that is not moved and $v$ is the endpoint of $e$ that is moved.
Now, we wish to cut edge $e$ in $F'$ to obtain an EAF of $T_1$ and $T_2$.
There are two cases depending on whether $e$ has already been cut in $F'$.

First, assume that $e$ has been cut.
An optimal sequence of replug moves never moves the same endpoint $v$ of an edge $e$ twice by Lemma~\ref{lem:replug-nice-properties}.
	This implies that the replug move in $m_1, m_2, m_3, \ldots, m_{d-1}$ which corresponds to cutting edge $e$ must have moved endpoint $u$ rather than endpoint $v$.
Thus, $F'$ contains an edge $e' = (v,\phi)$.
Moreover, $T'_2$ and $T_2$ differ only in the location of endpoint $v$ of edge $e$.
We remove the $\phi$ node from $e'$ to obtain the forest $F'' = F' \div \set{\eedge{e'}{\emptyset}}$.
	$F''$ is an EAF of $T_1$ and $T_2$ with weight $d$, which proves the claim. For example,  $F''$ is labeled EAF($T_1$,$T_2$) in Fig~\ref{fig:meaf}(b).

If $e$ has not been cut, then take $F = F' \div \set{\eedge{e}{\set{u}}}$.
Because $T'_2$ and $T_2$ differ only in the location of endpoint $v$ of edge $e$, $F$ is an EAF of $T_1$ and $T_2$.
Moreover, $F$ has weight $d$, so the claim holds.
	For example, EAF($T_1$,$T_2'$) in Fig~\ref{fig:meaf}(b) can be obtained in this way from the previous EAF of $T_1$ and $T'_1$.

	For the other direction, we show that $\dreplug{T_1,T_2} \le \weight(T_1, T_2)$ by constructing a sequence of replug moves $m_1, m_2, \ldots, m_{\weight(T_1, T_2)}$ that transform $T_1$ into $T_2$ (e.g. Fig~\ref{fig:meaf}(a) using Fig~\ref{fig:meaf}(c)).
If $\weight(T_1, T_2) = 0$ then $T_1 = T_2$ and we construct an empty sequence of replug moves that transform $T_1$ into $T_2$, forming the base case.
By induction on $\weight(T_1, T_2)$, suppose that our claim holds for every pair of trees with an MEAF of weight less than $\weight$, and that $\weight(T_1, T_2) = \weight$.
	Let $F$ be an MEAF of $T_1$ and $T_2$ (labeled EAF($T_1$,$T_2$) in Fig~\ref{fig:meaf}(c)).
Let $E$ be an endpoint edge set such that $T_1 \div E = F$.
There are two cases: either $F$ is an MAF of $T_1$ and $T_2$ ($F$ has no $\phi$ nodes), or $F$ is not an MAF of $T_1$ and $T_2$ ($F$ has $\phi$ nodes).
	If $F$ is an MAF, then we can map it to a set of $\weight/2$ TBR operations, and therefore a set of $\weight$ replug operations and the claim holds. For example, EAF($T_1'$,$T_2$) in Fig~\ref{fig:meaf}(c) is an MAF of $T_1'$ and $T_2$ and we can transform $T_1'$ into $T_2$ by a single TBR operation. We can then separate that TBR operation into two equivalent replug operations.

So, assume that $F$ is not an MAF.
Then some component $C_i$ contains a $\phi$ node $x_1$.
By the definition of an agreement forest, there exists at least one component $C_j$ of $F$ that is ``effectively adjacent'' to $C_i$.
	That is, $C_i$ and $C_j$ are joined by a path $P = x_1, x_2, \ldots, x_q$ of nodes in $T_2$ such that $x_1 \in C_i$, $x_q \in C_j$, and $x_l \notin F$, for all $1 < l < q$.
Let $e_1$ be the edge adjacent to $x_1$ in $T_1$.
	Note that the $\phi$ node implies that $\eedge{e_1}{\set{x_1}} \in E$.
Let $y_q$ be the neighbor of $x_q$ in $C_j$ such that $x_{q-1}$ is on the path from $y_q$ to $x_q$ in $T_2$.
	We apply a replug move $m$ to connect $x_1$ to the edge adjacent to $x_q$ that is closest to $y_q$, resulting in $T_1'$.
	The node introduced by this replug move is $x_{q-1}$, which connects $C_i$ and $C_j$ in $T_1'$.
Thus, $T_1' \div (E \setminus \set{\eedge{e_1}{\set{x_1}}})$ is an MEAF of $T_1'$ and $T_2$ with weight $\weight-1$.
By the inductive hypothesis, we can construct a sequence of $\weight-1$ replug moves $M$ transforming $T_1'$ to $T_2$.
Therefore, the sequence of $\weight$ replug moves starting with $m$ and then applying the moves in $M$ transforms $T_1$ to $T_2$, and the claim holds.
	For example, in Fig~\ref{fig:meaf}(c) we see that leaves 6 and 9 are adjacent in $T_2$.
    We apply a replug move to $T_1$ that merges components of EAF($T_1$,$T_2$) to obtain $T'_1$ and EAF($T'_1$,$T_2$).
\end{proof}

\coreaftoreplug*
\begin{proof}
	We will show that we can find, in linear time, a replug move on $T_1$ that results in a tree $T'$ such that $\dreplug{T',T_2} = \weight(F) - 1$.
	Recursively applying this procedure $\weight(F)$ times proves the claim.
	The proof of Theorem~\ref{thm:meaf} provides a method to find such a move, so we show here that this method can be implemented to take linear time.
	We can construct mappings $\psi$ and $\psi^{-1}$ from the trees to the EAF in linear time by Lemma~\ref{lem:map-edges}.

	The steps of this procedure are to (1) select a component $C_i$ of $F$ with a $\phi$ node if any exist and an arbitrary component otherwise, (2) identify a second component $C_j$ of $F$ that is effectively adjacent to $C_i$ in $T_2$, and (3) apply the corresponding move to attach $C_i$ to $C_j$.
	By effectively adjacent, we again mean that we must choose $C_j$ such that $C_i$ and $C_j$ are joined by a path $P = x_1, x_2, \ldots, x_m$ of nodes in $T_2$ that does not include any nodes of another component $C_k$ or any nodes of $C_i$ other than $x_1$.
	We can find a component $C_i$ of $F$ with a $\phi$ node $x_1$, if any exist, in linear time by traversing the forest.
	Otherwise, we can find an arbitrary component $C_i$ in linear time.
	To find a valid component $C_j$ we first label the nodes of $F$ according to their component numbers, which takes linear time.
	For each node $n$ of $F$, we then label the corresponding $T_2$ node $\psi^{-1}(T_2, n)$ with $n$'s component number.
	Finally, we apply a traversal of $T_2$ that starts from $\psi^{-1}(T_2, x_1)$ and finds such a path by not visiting nodes labeled $i$ and terminating upon finding the first otherwise labeled node $x_m$.
	We can then apply a replug move that connects $x_i$ to $x_m$, resulting in a tree $T'$ with an EAF of $T'$ and $T_2$ with weight $\weight(F) - 1$.

	The steps are similar when no component has a $\phi$ node, except that we begin our traversal from an arbitrary node $x_1$ of the chosen $C_i$ that is adjacent to a node $x_2$ of $T_2$ such that $\psi(x_2) = \emptyset$.

\end{proof}

\thmreplugbound*
\begin{proof}
We showed in the proof of Theorem~\ref{thm:meaf} that the TBR distance is a lower bound for the replug distance.
To see that the replug distance is a lower bound for the SPR distance, it suffices to note that every SPR move is also a replug move, that is, every sequence of SPR moves is also a valid sequence of replug moves.
\end{proof}

\lemtbr*
\begin{proof}
We first observe that the set returned by $\alg{T_1, T_2, k}$ cannot contain an object that is not an agreement forest, as Step~\ref{case:success} will never apply.
Now, suppose that the algorithm misses a maximal agreement forest $F$ of $T_1$ and $T_2$ that can be obtained by cutting fewer than $k$ edges.
Let $E$ be an edge set such that $F = T_2 \div E$.
Steps~\ref{case:tbr:is}~and~\ref{case:tbr:mpe} are the only steps that modify $F_2$.
Thus, every path of recursive invocations terminates in a Step~\ref{case:abort} and contains an invocation $\alg{F_1, F_2, k'}$ that applied Step~\ref{case:tbr:is} or Step~\ref{case:tbr:mpe} such that every cut edge $e' \in F_2$ partitions a pair of leaves that are in the same component of $F$.

So, consider an arbitary path of recursive calls and let $\alg{F_1, F_2, k'}$ be the first call on this path that applied Step~\ref{case:tbr:is} or Step~\ref{case:tbr:mpe} such that every cut edge $e' \in F_2$ partitions a pair of leaves $l_1$ and $l_2$ that are in the same component of $F$.
In Step~\ref{case:tbr:is}, the fact that $a$ and $c$ are siblings in $F_1$ but $a \noreach{F_2} c$ implies that either $a \noreach{F} x$ for all $x \in (R_t \setminus \set{a})$, or $c \noreach{F} x$ for all $x \in (R_t \setminus \set{c})$, a contradiction.

Observation~\ref{obs:incompatible} implies the same for Step~\ref{case:tbr:mpe} with respect to at least one of $a$, $b_1$, $b_q$, or $c$.
We observe that Step~\ref{case:tbr:mpe} is the expanded form of the union of four recursive~calls:
\begin{align*}
	&\alg{F_1, F_2 \div \set{\edge{a}}, k-1}\ \cup \alg{F_1, F_2 \div \set{\edge{b_1}}, k-1}\ \\
	&\cup \alg{F_1, F_2 \div \set{\edge{b_q}}, k-1}\ \cup \alg{F_1, F_2 \div \set{\edge{c}}, k-1}.
\end{align*}
To see this, first note that both Step~\ref{case:tbr:mpe} and these four calls include calls cutting $\edge{a}$ and $\edge{c}$.
Thus, both sets of calls will find any mAF that is a forest of $F_2 \div \set{e_a}$ or $F_2 \div \set{e_c}$.
Now, observe that an application of the calls cutting $\edge{b_1}$ or $\edge{b_q}$ can not remove the incompatible sibling pair $\set{a,c}$ (unless $q=2$ and both sets of recursive calls are identical).
Expanding all possible combinations of repeated applications of these calls results in the cases of Step~\ref{case:tbr:mpe}.
Thus, one of the cut edge sets partitions only leaves which are partitioned in $F$, also a contradiction.
\end{proof}

\thmtbr*
\begin{proof}
This proof follows similar arguments to those of Whidden and Zeh~\cite{whidden2009unifying}.
$\alg{T_1, T_2, k}$ is a bounded search tree algorithm which proceeds to a depth at most $k$ and whose worst case behaviour is fully defined by the branching factor of the recurrence relation of Step~\ref{case:tbr:mpe} which is maximized when $q=2$ (i.e. 4 single edge cut invocations).
Thus, there are at most $4^k$ recursive invocations in total, each of which requires linear time (using the data structures from~\cite{whidden2009unifying}), excluding the cost of recursion and set union operations.
\end{proof}

\lemmapedges*
\begin{proof}
We adapt the procedure of Lemma 4.3 in \cite{whidden2013fixed} to the unrooted case.
This procedure constructs, in linear time, a cycle graph structure that is essentially the union of $F$, $E_1$, $E_2$, and an explicit mapping from nodes of $T_1$ and $T_2$ to $F$ and vice versa.
Although written in terms of rooted trees, the fact that the root is a labeled leaf implies that this procedure also applies to unrooted trees by choosing an arbitrary leaf as the ``root''.
We can obtain $E_1$ and $E_2$ from this structure by (1) iterating through the edges of the cycle graph and (2) applying the respective node mapping to identify the $T_1$ or $T_2$ edge.
\end{proof}

\begin{figure}
	\includegraphics[width=\columnwidth]{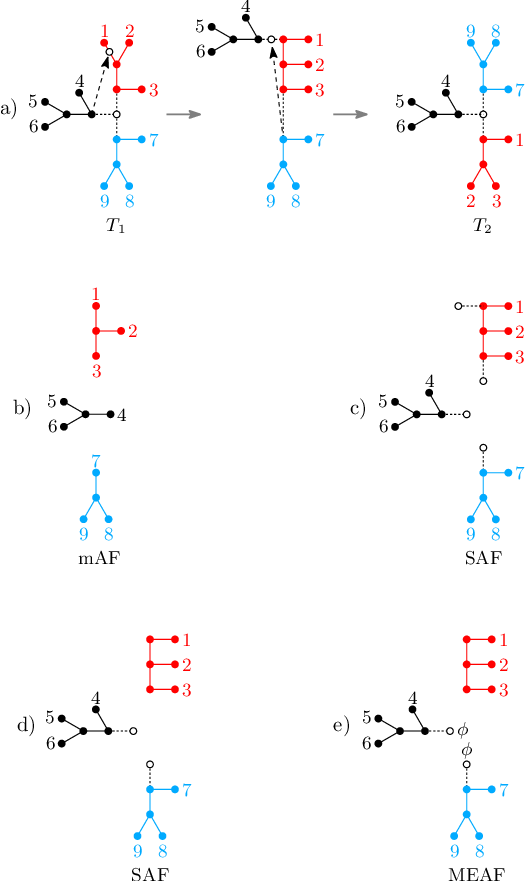}

	\caption{The intermediate structures used to find MEAFs.
	(a) An optimal replug path between a pair of trees, $T_1$ and $T_2$.
	(b) The single mAF of $T_1$ and $T_2$ can be obtained by cutting any two of the three dotted edges in $T_1$ or $T_2$. The $T_1$ and $T_2$ edge triples are both dead trees.
	(c) The SAF obtained by adding each of the removed nodes as sockets to the mAF.
	(d) An SAF with a set of candidate $\phi$-nodes is a subset of the SAF from c). Only sockets that correspond to nodes of both trees can remain fixed during a replug move, constraining our choices of dead tree edges.
	(e) The mEAF obtained by adding $\phi$ nodes is also the MEAF in this case.}
\label{fig:dead-tree-eaf}
\end{figure}

\thmenumeratemeafs*
\begin{proof}
	We apply $\alg{T_1, T_2, k}$ to enumerate the set of AFs of $T_1$ and $T_2$ with $k+1$ or fewer components.
	This requires $\OhOf{4^k n}$-time, by Theorem~\ref{thm:tbr}.
	For each of the at most $4^k$ AFs $F$, we apply $\replug{T_1, T_2, F, k}$ to enumerate the mEAFs, taking $\OhOf{4k \cdot 12^k n} = \OhOf{48^k n}$ time.
\end{proof}

\thmenumeratemeafsd*
\begin{proof}
Using a standard analysis of the recurrence relation $T(d) = 2T(d-1) T(d-2)$ for the number of recursive calls, and the fact that each call requires linear time, barring the cost of recursion, this algorithm takes $\OhOf{2.42^d n}$-time to determine whether a given edge set $E_1$ is compatible with an EAF of weight at most $d$.
Each branch of the search will terminate after $k$ recursive calls, so this will never take longer than the original version.
\end{proof}

\lemdeadtreeassignment*
\begin{proof}
We first prove that every such assignment $A$ is an EAF.
	Let $A$ be an assignment of $\phi$-nodes to sockets of $S$ that satisfies every dead tree in $T_1$ and $T_2$ (e.g. Fig~\ref{fig:dead-tree-eaf}(e)).
	Let $D$ be a dead tree of $T_1$ adjacent to a set $s$ of $n_D$ sockets of $S$ (Fig~\ref{fig:dead-tree-eaf}(a)).
The fact that $D$ is adjacent to $n_D$ sockets implies that $D$ was induced by removing $n_D - 1$ edges of $E_1$ from $T_1$.
Moreover, at least one socket in $s$ does not have a $\phi$ node.
Each of the $\phi$-nodes assigned to these sockets can be assigned to one of the $E_1$ edges within $D$.
An analogous assignment can be applied to the dead trees of $T_2$.
Applying this procedure iteratively to every dead tree of $T_1$ and $T_2$ results in an EAF of $T_1$ and $T_2$.

We now prove that every EAF induces such an assignment $A$.
	Let $F'$ be an EAF of $T_1$ and $T_2$ obtained by removing edge sets $E_1$ from $T_1$ and $E_2$ from $T_2$ (e.g. Fig~\ref{fig:dead-tree-eaf}(b)).
	Let $S$ be the SAF induced by $F'$ (e.g. Fig~\ref{fig:dead-tree-eaf}(c)).
	Let $A$ be the natural assignment of $\phi$-nodes to sockets of $S$ (Fig.~\ref{fig:dead-tree-eaf}(e)).
	$A$ is obtained by adding $\phi$-nodes to some subset of $S$ (e.g. Fig~\ref{fig:dead-tree-eaf}(d)).
Now, suppose that $A$ assigns a $\phi$ node to each of the $n_D$ sockets adjacent to some dead component of $T_1$.
Then both sides of one of the $n_D - 1$ edges of $T_1$ in $D$ and $E_1$ must have received a $\phi$ node, a contradiction.
\end{proof}

\lemreplugcnfletwo*
\begin{proof}
By Lemma~\ref{lem:dead-tree-assignment}, it suffices to find an assignment of $\phi$ nodes to $S$ that satisfy each dead component of $T_1$ and $T_2$.
Now, consider the set of constraints induced by these dead components.
We say that such a dead tree is \emph{saturated} if every endpoint of the dead tree is a candidate $\phi$-node.
We only consider constraints from saturated dead trees, as the constraints from unsaturated dead trees are trivially satisfied by excluding a socket that is not a $\phi$-node candidate.
Each dead tree implies a constraint of the form $(s_1 \cup s_2 \cup \ldots \cup s_{n_D})$, where the variable $s_i$ implies that socket $s_i$ is not assigned a $\phi$ node.
Moreover, a socket may be adjacent to at most two dead trees (one per tree), so each variable appears in at most 2 constraints.
There are at most $2k$ such constraints, with at most $2k$ variables.
Finally, each variable in a constraint is positive, that is no constraint includes $\lnot s_i$.
Thus, the full set of constraints is a boolean monotone $\cnfletwo$ formula.
\end{proof}

\lemsolveconstraints*
\begin{proof}
In the clause graph, each constraint of the formula is represented by a vertex.
Two vertices are connected by an edge if they share a variable.
A satisfying $\phi$-node assignment must include one true variable for each clause, and we wish to determine the minimum number of variables that must be true.
In other words, we wish to determine the minimum number of $\phi$-node candidates that do \emph{not} receive a $\phi$-node.
To do so, we can select a set of edges $C$ such that every vertex in the graph is adjacent to an edge in $C$: in other words, an edge cover.
Moreover, we wish to determine the minimum number of edges in any such edge cover---the edge cover problem.

The edge cover problem can be solved by finding a maximum matching (set of nonadjacent edges) and greedily adding additional edges~\cite{garey2002computers}.
This requires $\OhOf{\sqrt{V} E}$-time \cite{micali1980v}, for a graph with $V$ vertices and $E$ edges.
Our clause graph has at most $2k$ edges and vertices, and so requires $\OhOf{k^{1.5}}$-time to solve.
Observe that we can determine the set of $\phi$ nodes from the edge cover and thus also construct the~EAF.
\end{proof}

\thmcomputereplug*
\begin{proof}
We iteratively increase $k'=0, 1, \ldots, k$ until we find an MEAF of $T_1$ and $T_2$.
We apply $\alg{T_1, T_2, k}$ to enumerate the set of mAFs of $T_1$ and $T_2$ that can be obtained by cutting $k$ or fewer edges.
For each such mAF $F$ obtained by cutting $k'$ edges, we enumerate each of the at most $4^{k'}$ SAFs induced by $F$.
We apply the procedure in Lemma~\ref{lem:solve-constraints} to each SAF $S$ to determine whether an assignment of $\phi$-nodes to $S$ can be made that results in EAF of $T_1$ and $T_2$ with weight $\le k$.

The correctness of this procedure follows from Lemmas~\ref{lem:tbr},~\ref{lem:dead-tree-assignment}, and~\ref{lem:solve-constraints}.

The running time of the procedure follows from Theorem~\ref{thm:tbr} and Lemma~\ref{lem:solve-constraints}.
\end{proof}

\cortimereplug*
\begin{proof}
	We first prove the running time bound.
	Allen and Steel~\cite{allen01} proved that interleaving the subtree and chain reduction rules results in a pair of trees $T_1'$ and $T_2'$ with at most $28\dtbr{T_1, T_2}$ leaves.
	By Theorem~\ref{thm:replug-bound}, this is at most $28\dreplug{T_1, T_2}$ leaves.
	Substituting the size of the trees into Theorem~\ref{thm:compute-replug} results in the claimed running time bound.

	To prove that this method is correct, we note that the chain reduction proof of~\cite{whidden2016chain} holds whether one considers replug or SPR moves because it relies on finding an alternative set of moves to avoid breaking the common chain.
	Similarly, the subtree reduction also preserves the replug distance because optimal replug paths do not break common edges.
\end{proof}

\thmsprdistance*
\begin{proof}
This algorithm is guaranteed to find $T_2$, as no tree will ever be inserted into the priority queue twice with a priority determined by the same estimator function.
Moreover, each distance estimate is a lower bound on the true SPR distance, by Theorem~\ref{thm:replug-bound}.
Therefore, as with a standard A* search, the correct distance is guaranteed to be returned.
Finally, this method will never compute a replug or TBR distance larger than $\dspr{T_1, T_2} + 1$, as we will never expand a tree $T \ne T_1$ with an estimator greater than $\dspr{T_1, T_2}$.
This is an important consideration, as the running time of our TBR and replug distance calculations are exponential with respect to the distance computed.

The algorithm explores $Y$ trees, each of which has replug distance at most one greater than $\dspr{T_1, T_2}$.
Therefore the time required to compute estimators for each of these trees is $\OhOf{16^{d+1}((d+1)^{1.5} + n)}$ by Corollary~\ref{cor:time:replug}.
This is $\OhOf{16^d d^{1.5}}$ for reduced trees with $n = \OhOf{d}$.
All other operations on each tree can be carried out in time that is asymptotically smaller than the cost of a replug distance computation.
Therefore the running time is bounded as claimed.
\end{proof}

\begin{figure*}
	\hspace*{\stretch{1}}
	\includegraphics[width=\textwidth]{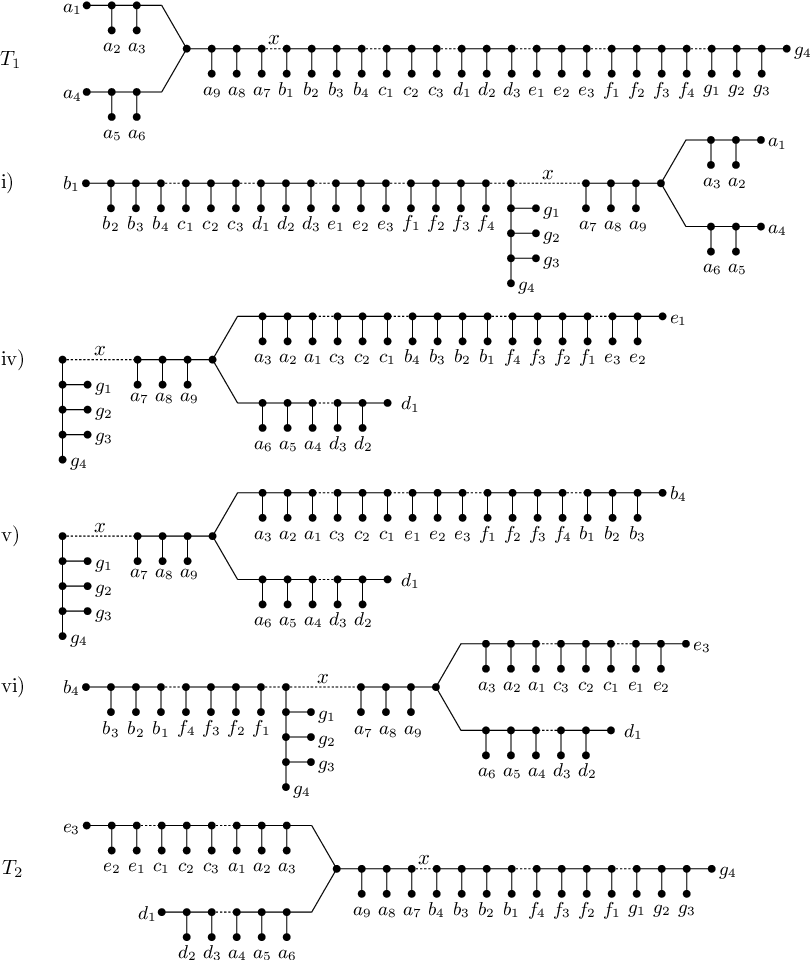}
	\hspace*{\stretch{1}}

	\caption{A portion of an SPR path between two trees $T_1$ and $T_2$ for which every optimal SPR path underlain by the sole MAF moves an endpoint of the same edge twice.
		The trees have SPR distance 7 and only one MAF with 6 components.
		The MAF can be obtained by removing the dotted edges.
		The SPR path shown, for example, moves the endpoint of edge $x$ closest to leaf $a_7$ of the $a$ component twice, in the first and last move.
		Note that 3 moves are applied to move from i) to iv).
	}
	\label{fig:move-twice-counterexample}
\end{figure*}

\begin{figure*}
	\hspace*{\stretch{1}}
	\includegraphics[width=0.85\textwidth]{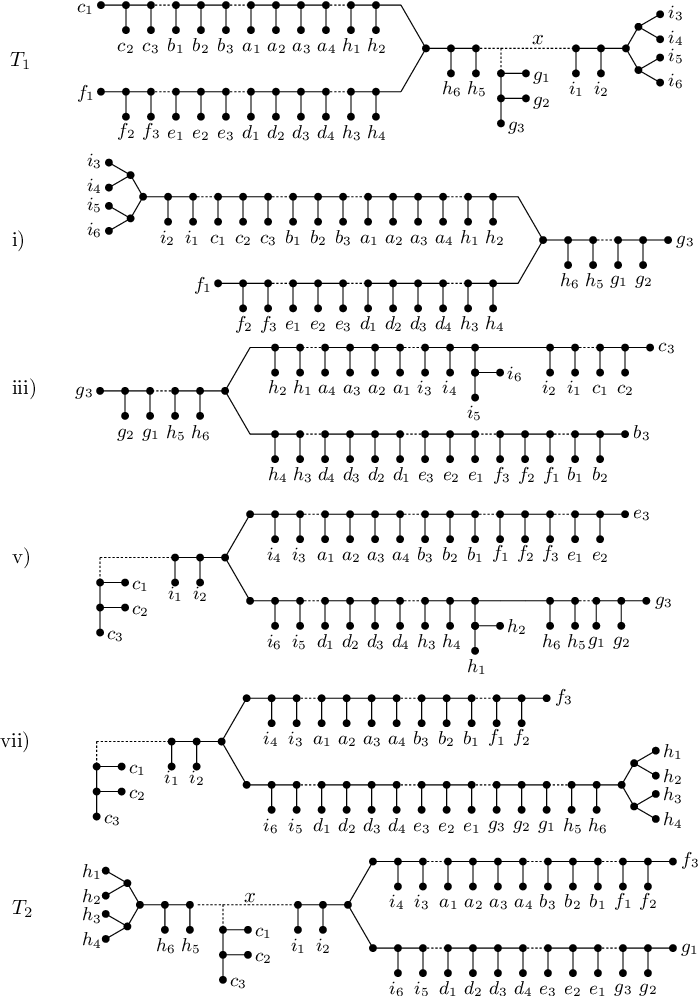}
	\hspace*{\stretch{1}}

	\caption{Two trees $T_1$ and $T_2$ for which every optimal SPR path breaks a common path.
		The trees have SPR distance 8 and every optimal SPR path modifies only the dotted edges, corresponding to the sole MAF.
		The SPR path shown, for example, breaks the common path $x$ between the $h$ and $i$ components in the first move and then reforms this path in the eighth move.
		Note that each tree other than the second and last is a result of applying two SPR moves.
	}
	\label{fig:break-common}
\end{figure*}

\end{document}